\documentclass[11pt]{article}
\usepackage[margin=1in,letterpaper]{geometry}
\usepackage{libertine}
\usepackage[rightcaption]{sidecap}

\usepackage{microtype} 
\usepackage{flushend}
\usepackage[framemethod=TikZ]{mdframed}
\usepackage{graphicx} 
\usepackage{amsthm}
\usepackage{complexity}
\usepackage{xcolor}
\usepackage{tabularx}
\usepackage{fullpage}
\usepackage[T1]{fontenc}
\usepackage{subcaption}
\usepackage{amsfonts}
\usepackage{amsmath}
\usepackage{amssymb}
\usepackage{enumitem}

\usepackage{thmtools}
\usepackage{thm-restate}

\usepackage{mleftright,xparse}
\usepackage{stmaryrd}
\usepackage{mathtools}
\usepackage[hidelinks]{hyperref}
\usepackage{amsfonts}
\usepackage[capitalize,noabbrev]{cleveref}
\usepackage{bm}
\usepackage{svg}

\usepackage{tikz}
\usetikzlibrary{shapes.multipart}
\usepackage{csquotes}
\usepackage{verbatim}

\makeatletter
\def\pgftransformlineattime#1#2#3{%
	\pgfgettransformentries\aa\ab\ba\bb\notimportantx\notimportanty
	\pgf@process{#2}%
	\pgf@xb=\pgf@x
	\pgf@yb=\pgf@y%
	\pgf@process{#3}%
	\pgf@xc=\pgf@x
	\pgf@yc=\pgf@y%
	\pgftransformshift{\pgfpointlineattime{#1}{\pgfqpoint{\pgf@xb}{\pgf@yb}}{\pgfqpoint{\pgf@xc}{\pgf@yc}}}%
	\ifpgfresetnontranslationattime%
	\pgftransformresetnontranslations%
	\fi%
	\ifpgfslopedattime%
	\advance\pgf@xc by-\pgf@xb%
	\advance\pgf@yc by-\pgf@yb%
	{
		\pgfsettransformentries\aa\ab\ba\bb{0pt}{0pt}
		\pgf@pos@transform{\pgf@xc}{\pgf@yc}
		\global\pgf@xc\pgf@xc
		\global\pgf@yc\pgf@yc
	}
	\ifpgfallowupsidedownattime%
	\else%
	\ifdim\pgf@xc<0pt%
	\pgf@xc=-\pgf@xc%
	\pgf@yc=-\pgf@yc%
	\fi%
	\fi%
	\pgf@x=\pgf@xc%
	\pgf@y=\pgf@yc%
	\pgfpointnormalised{}
	\pgf@ya=-\pgf@y%
	\pgftransformcm%
	{\pgf@sys@tonumber{\pgf@x}}{\pgf@sys@tonumber{\pgf@y}}%
	{\pgf@sys@tonumber{\pgf@ya}}{\pgf@sys@tonumber{\pgf@x}}{\pgfpointorigin}%
	\fi%
}
\makeatother

\usetikzlibrary{automata,positioning,arrows.meta,arrows,decorations.pathmorphing,backgrounds,calc}

\tikzstyle{oscillate} = [decorate, decoration={snake, amplitude=.3mm, segment length=3mm, post length=2mm}]
	
\usepackage{xspace}
\newcommand{\eda}{\ensuremath{\textup{EDA}}\xspace}
\newcommand{\ida}[1][]{\ensuremath{\textup{IDA}_{#1}}\xspace}
\newcommand{\unambiguity}{\ensuremath{\textup{\textsc{Unambiguity}}}\xspace}
\newcommand{\unaryunambiguity}{{\textup{\textsc{Unary Unambiguity}}}\xspace}
\newcommand{\ov}{\textup{\textsc{Orthogonal}}\-\textup{\textsc{Vector}}\xspace}
\newcommand{\twoie}{\textup{\textsc{2-IE}}\xspace}
\newcommand{\threeie}{\textup{\textsc{3-IE}}\xspace}
\newcommand{\kie}{\textup{\textsc{$k$-IE}}\xspace}
\newcommand{\twins}{\textup{\textsc{Twins}}\xspace}
\newcommand{\unarytwins}{{\textup{\textsc{Unary Twins}}}\xspace}
\newcommand{\FFA}{{\textup{\textsc{Finite Ambiguity}}}\xspace}
\newcommand{\PFA}{{\textup{\textsc{Polynomial Ambiguity}}}\xspace}
\newcommand{\determinisability}{\textup{\textsc{Unambiguous Determinisability}}\xspace}
\newcommand{\disjointprogressions}{\textup{\textsc{Disjoint Progressions}}\xspace}

\setlength{\marginparwidth}{2cm}
\usepackage[textwidth=1.9cm,textsize=small
,disable
]{todonotes}

\newcommand{\karol}[1]{\todo[color=cyan!10]{{\bf Karol:} #1}}

\newcommand{\Dd}{\mathcal{D}}

\newcommand{\Oh}{\mathcal{O}}
\DeclarePairedDelimiter{\iverson}{\llbracket}{\rrbracket}
\DeclarePairedDelimiter{\floor}{\lfloor}{\rfloor}

\newcommand{\set}[1]{\{#1\}}
\newcommand{\N}{\mathbb{N}}
\newcommand{\Z}{\mathbb{Z}}

\renewcommand{\vec}[1]{{\bm #1}}

\newcommand{\runs}{\mathsf{Runs}}

\newcommand{\reach}[1][*]{\xrightarrow{#1}}

\newtheorem{definition}{Definition}[section]
\newtheorem{theorem}[definition]{Theorem}
\newtheorem{lemma}[definition]{Lemma}

\newtheorem{claim}[definition]{Claim}
\newtheorem{remark}[definition]{Remark}

\newtheorem{observation}[definition]{Observation}

\newtheorem{hypothesis}[definition]{Hypothesis}

\newcommand{\rng}[2]{[#1\textsf{..}#2]}

\mdfdefinestyle{MyFrame}{%
    linecolor=black,
    middlelinewidth=1pt,
    outerlinewidth=0pt,
    roundcorner=5pt,
    innertopmargin=0pt,
    innerbottommargin=2pt,
    innerrightmargin=2pt,
    innerleftmargin=2pt,
	leftmargin = 2pt,
	rightmargin = 2pt
}
\newcommand{\weight}{\omega}

\Crefname{lemma}{Lemma}{Lemmas}
\crefname{thm}{theorem}{theorems}
\Crefname{thm}{Theorem}{Theorems}
\crefname{clm}{claim}{claims}
\Crefname{clm}{Claim}{Claims}
\Crefname{claim}{Claim}{Claims}
\Crefname{observation}{Observation}{Observations}



\newcommand{\Ot}{\widetilde{\mathcal O}}
\newcommand{\mc}[1]{\mathcal{#1}}

\newcommand{\divides}{\mid}

\renewcommand{\subset}{\subseteq}
\renewcommand{\le}{\leqslant}
\renewcommand{\leq}{\leqslant}
\renewcommand{\ge}{\geqslant}
\renewcommand{\geq}{\geqslant}

\newcommand{\eps}{\varepsilon}


\definecolor{cb_blue}{RGB}{220,038,127}
\definecolor{cb_green}{RGB}{100,143,255}
\definecolor{cb_red}{RGB}{255,176,0}




\begin{document}

\title{Fine-Grained Complexity of Ambiguity Problems on Automata and Directed Graphs
}

\date{}


\author{
	Karolina Drabik\footnote{University of Warsaw, \textsf{\{k.drabik, f.mazowiecki\}@mimuw.edu.pl}
	}
	\and
	Anita D\"urr\footnote{Saarland University and Max Planck Institute for Informatics,
        Saarbr\"ucken, Germany, \textsf{\{aduerr,~kwegrzyc\}@mpi-inf.mpg.de}}
	\and 
	Fabian Frei\footnote{
		CISPA Helmholtz Center for Information Security, \textsf{fabian.frei@cispa.de}
	}
    \and
    Filip Mazowiecki\footnotemark[1]
    \and
    Karol W\k{e}grzycki\footnotemark[2]
}

\maketitle

\thispagestyle{empty}
\begin{abstract}
In the field of computational logic, two classes of finite automata are considered fundamental: deterministic and
nondeterministic automata (DFAs and NFAs).
In a more fine-grained approach three natural
intermediate classes were introduced, defined by restricting the number of accepting runs of the input NFA. The classes are called: unambiguous, finitely ambiguous, and polynomially ambiguous finite automata.
It was observed that central problems, like equivalence, become tractable when the input NFA is restricted to some of these classes. This naturally brought interest into problems determining whether an input NFA belongs to the intermediate classes.

Our first result is a nearly complete characterization of the fine-grained complexity of these problems. 
We show that the respective quadratic and cubic running times of Allauzen et al.
are optimal under the Orthogonal Vectors hypothesis or the $k$-Cycle hypothesis,
for alphabets with at least two symbols. In contrast, for unary alphabets we
show that all aforementioned variants of ambiguity can be decided in almost linear time.

Finally, we study determinisability of unambiguous weighted automata. We positively resolve a conjecture of Allauzen and Mohri, proving that their quadratic-time algorithm for verifying determinisability of unambiguous weighted automata is optimal, assuming the Orthogonal Vectors hypothesis or the $k$-Cycle hypothesis. We additionally show that for unary alphabets, this can be decided in linear time.
\end{abstract}

\paragraph{Acknowledgements.}
This work is part of the project TIPEA that has received funding from the European Research Council (ERC) under the European Union's Horizon 2020 research and innovation programme (grant agreement No 850979).
This work is part of the project CORDIS that has received funding from the European Research Council (ERC) under the European Union's Horizon 2023 research and innovation programme (grant agreement No 101126229).




\clearpage
\setcounter{page}{1}
\section{Introduction}\label{sec:introduction}

Deterministic and nondeterministic finite automata (DFAs and NFAs) are
fundamental models of computation in the field of computational logic. Both recognize regular languages, but each representation has its own advantages. On the one hand, it is well known that NFAs can
be exponentially more succinct, i.e.,\ given an NFA, an equivalent DFA may
require exponentially more states~\cite{MeyerF71}. On the other
hand, many natural operations such as complementation or minimisation can be
performed efficiently for DFAs,  which enables us to decide problems like containment or equivalence in polynomial time for DFAs, whereas they are PSPACE-complete for NFAs~\cite{StockmeyerM73}.

These exponential gaps fuelled research aimed at understanding intermediate computation models: more succinct than DFAs, but not as intractable as NFAs. The concept of \emph{ambiguity}, which measures the number of accepting runs for input words~\cite{BookEGO71}, allows us to define several such intermediate automata classes~\cite{RavikumarI89}.
The most prominent and natural one is the class of \emph{unambiguous finite automata} (UFA), where every input word has at most one accepting run. It is straightforward to see that they can be exponentially more succinct than DFAs~\cite{MeyerF71}: languages like $(0+1)^*1(0+1)^n$ can be recognised by UFAs with linearly many states in $n$ (see \cref{fig:nfa-examples-a}) but require exponentially many states for DFAs. Interestingly, the equivalence and containment problems for UFAs remain polynomial-time  solvable~\cite{StearnsH85}. 

Other notable ambiguity-based classes include finitely ambiguous and polynomially ambiguous finite automata (FFAs and PFAs), where the number of accepting runs on a word is bounded by, respectively, a constant and a polynomial in the word length; see \cref{fig:ambiguity} for examples.
\begin{figure*}[b]
	\begin{subfigure}{.30\textwidth}
		\resizebox{\textwidth}{!}{%
		\begin{tikzpicture}[state/.append style={minimum size=2.3em},thick,>={Stealth[length=2.4mm, width=1.2mm]}]
			\node[state,initial,initial where = left,initial text=$ $] (p1) {$p_1$};
			\node[state,below = 1cm of p1] (p2) {$p_2$};
			\node[state,right = .8cm of p2] (p3) {$p_3$};
			\node[state,right = .8cm of p3,accepting] (p4) {$p_4$};
			
			\path
			(p1) edge[loop right] node[right] {$0,1$} (p1)
			(p1) edge[->] node[left] {$1$} (p2)
			(p2) edge[->] node[above] {$0,1$} (p3)
			(p3) edge[->] node[above] {$0,1$} (p4)
			;
		\end{tikzpicture}
		}
        \caption{\small{Unambiguous NFA $\mc A_1$.}}
		\label{fig:nfa-examples-a}
	\end{subfigure}
	\hfill
	\begin{subfigure}{.30\textwidth}
		\centering
		\resizebox{0.95\textwidth}{!}{%
	\begin{tikzpicture}[state/.append style={minimum size=2.3em},thick,>={Stealth[length=2.4mm, width=1.2mm]}]
		\node[state,accepting,right = 1cm of p4] (q2) {$q_2$};
		\node[state,initial,initial where = above,initial text=$ $,right = 1cm of q2] (q1) {$q_1$};
		\node[state,accepting,right = 1cm of q1] (q3) {$q_3$};
		
		\path
		(q1) edge[->,bend left] node[above] {$0$} (q2)
		(q1) edge[->,bend right] node[above] {$0$} (q3)
		(q2) edge[loop above] node[above] {$0$} (q2)
		(q3) edge[loop above] node[above] {$1$} (q3)
		(q2) edge[->,dashed,bend left] node[above] {$\#$} (q1)
		(q3) edge[->,dashed,bend right] node[above] {$\#$} (q1)
		;
	\end{tikzpicture}
		}
	\caption{\small{Exponentially ambiguous NFA $\mc A_2$ with and finitely
    ambiguous NFA $\mc A_3$ without the dashed transitions.}}
	\label{fig:nfa-examples-b}
\end{subfigure}
\hfill
	\begin{subfigure}{.30\textwidth}
	\centering
	\resizebox{\textwidth}{!}{%
	\begin{tikzpicture}[state/.append style={minimum size=2.3em},thick,>={Stealth[length=2.4mm, width=1.2mm]}]
		\node[state,initial,initial where = left,initial text= $ $,right = 1cm of q3] (r1) {$r_1$};
		\node[state,above = 1cm of r1] (r2) {$r_2$};
		\node[state,accepting,right = 1cm of r1] (r3) {$r_3$};
		\node[state,above right = 0.5cm and 1cm of r3] (r4) {$r_4$};
		\node[state,above = 1cm of r3] (r5) {$r_5$};
		
		\path
		(r1) edge[->,bend right] node[right] {$1$} (r2)
		(r2) edge[->,bend right] node[left] {$1$} (r1)
		(r1) edge[->] node[above] {$1$} (r3)
		(r3) edge[->,bend right] node[below right] {$1$} (r4)
		(r4) edge[->,bend right] node[above right] {$1$} (r5)
		(r5) edge[->,bend right] node[left] {$1$} (r3)
		;
	\end{tikzpicture}
	}
    \caption{\small{Polynomially ambiguous NFA~$\mc A_4$.}}
	\label{fig:nfa-examples-c}
\end{subfigure}
	\caption{
		\small{The unambiguous NFA ${\mc A}_1$ 
		recognises $(0+1)^*1(0+1)^2$. 
		The exponentially ambiguous NFA ${\mc A}_2$ 
		accepts every word $(0\#)^{n-1}0$ with $2^n$ runs. 
		The finitely ambiguous NFA ${\mc A}_3$ 
		recognises $01^* + 00^*$: with two accepting runs on $0$, and at most one otherwise. 
		The unary polynomially ambiguous NFA ${\mc A}_4$ 
		accepts $1^n$ with less than $n$ runs and, if $n \equiv_6 1$, more than $n/6$ runs.
    }
        }\label{fig:ambiguity}
		\label{fig:nfa-examples}
\end{figure*}
%
%
To emphasize that an NFA is not polynomially ambiguous, we say that it is exponentially ambiguous. 
Thus, we obtain the following hierarchy of five natural classes from DFAs up to NFAs using ambiguity:
\begin{center}
DFA $\subseteq$ UFA $\subseteq$ FFA $\subseteq$ PFA $\subseteq$ NFA
\end{center}
%
%
This induces the following natural problems: 
\begin{description}
    \item[\unambiguity] Given an NFA $\mc A$, decide whether every word has at most one accepting run.
    \item[\FFA] Given an NFA $\mc A$, decide whether every word has at most $c$ accepting runs, where $c$ is a constant depending on $\mc A$. 
    \item[\PFA] Given an NFA $\mc A$, decide whether every word $w$ has at most $p(|w|)$ accepting runs, where $p$ is a polynomial depending on $\mc A$. 
\end{description}
It is known that \unambiguity, \FFA and \PFA can all be decided in polynomial time~\cite{WeberS91}. 
Building on this result, all three problems have been studied in more detail. Allauzen et al.~\cite{AllauzenMR11} showed that \FFA can be verified in cubic time, while \unambiguity and \PFA can be verified in quadratic time. The authors conjectured this to be optimal: ``\emph{We presented simple and eﬃcient algorithms for testing the ﬁnite, polynomial, or exponential ambiguity of ﬁnite automata~[...]. We conjecture that the time complexity of our algorithms is optimal.}''

Our first contribution is to show that this is indeed optimal when assuming popular fine-grained complexity hypotheses.
\begin{theorem}\label{thm:NFA_lowerbounds}
	Assuming the Orthogonal Vectors hypothesis or the $k$-Cycle hypothesis:
	\begin{enumerate}[label=(\alph*)]
		\item There is no algorithm algorithm deciding \unambiguity in $\Oh(|\mc A|^{2- \eps})$ time for any $\eps >0$; 
		\item There is no algorithm algorithm deciding \PFA in $\Oh(|\mc A|^{2- \eps})$ time for any $\eps >0$. 
	\end{enumerate}
	Assuming the 3-Orthogonal Vectors hypothesis:
	\begin{enumerate}[label=(\alph*)]
		\item[(c)] There is no algorithm deciding \FFA in $\Oh(|\mc A|^{3- \eps})$ time for any $\eps >0$.
	\end{enumerate}
\end{theorem}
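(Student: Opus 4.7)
My plan is to reduce the hypothesis problems to the ambiguity problems via polynomial-size NFA constructions; the \ov-based and $k$-Cycle-based bounds for (a) and (b) follow analogous templates, with cycle-detection gadgets replacing vector-comparison gadgets.

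\textbf{Parts (a) and (b).} Given an \ov instance $A=\{a_1,\dots,a_n\}$, $B=\{b_1,\dots,b_n\}\subset\{0,1\}^d$ with $d=\mathrm{polylog}(n)$, I would construct an NFA $\mc A$ of size $O(nd)$ reading words of the form $[i]_A\,\$\,w\,\$\,[j]_B$ with $w\in\{0,1\}^d$, split into two parallel sides. \emph{Side A} has, for each $i$, a $d$-step chain entered deterministically after the marker $[i]_A$, with transitions allowing bit $x_k$ only when $a_{i,k}=0$ or $x_k=0$. \emph{Side B} has, for each $j$, a chain entered after a nondeterministic guess of $j$, with transitions allowing $x_k$ only when $b_{j,k}=0$ or $x_k=1$; the guess is verified by matching the trailing marker $[j]_B$. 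On $[i]_A\,\$\,w\,\$\,[j]_B$, Side A yields a run iff $w_k=0$ wherever $a_{i,k}=1$, and Side B yields a run iff $w_k=1$ wherever $b_{j,k}=1$; both coexist iff $w=b_j$ and $\langle a_i,b_j\rangle=0$. Hence $\mc A$ is unambiguous iff \ov has no solution, proving (a). For (b) I would add a loop-back $q_f\xrightarrow{\#}q_0$: the two accepting $q_0\!\to\! q_f$ runs become two distinct $q_0$-cycles on the extended word, so by the Weber--Seidl characterisation $\mc A$ has EDA and is exponentially ambiguous iff \ov has a solution, while otherwise the uniqueness of each state's self-run gives polynomial ambiguity. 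The alphabet is reduced to two symbols by block-encoding the markers in $O(\log n)$ bits, which is absorbed in the $\mathrm{polylog}$ factor.

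\textbf{Part (c).} For 3-\ov on $A,B,C\subset\{0,1\}^d$ of size $n$, I would switch to a ternary witness alphabet $\{\alpha,\beta,\gamma\}$ in which each side $X\in\{A,B,C\}$ forbids its own symbol at coordinates where its vector has a $1$; a common witness then exists iff the triple is 3-orthogonal. Rather than a single loop-back, I would introduce two designated states $p\ne q$ linked by three $O(nd)$-sized gadgets: a Side-A self-loop at $p$ that reads a word of the form $[i]_A\,[j]_B\,[k]_C\,w$ and closes iff $A(a_i,w)$ holds, a Side-B self-loop at $q$ that closes iff $B(b_j,w)$, and a Side-C path from $p$ to $q$ that succeeds iff $C(c_k,w)$. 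On a common word $v$, the three runs $p\reach[v] p$, $p\reach[v] q$, $q\reach[v] q$ coexist iff $(a_i,b_j,c_k)$ is 3-orthogonal, producing the Weber--Seidl IDA pattern and hence infinite ambiguity exactly when 3-\ov has a solution; otherwise the NFA is finitely ambiguous. With $d=\mathrm{polylog}(n)$, this transfers the $n^{3-o(1)}$ 3-\ov lower bound to $|\mc A|^{3-\varepsilon}$ for \FFA.

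\textbf{Main obstacle.} Part (c) is the delicate step: the Side-A and Side-B cycles must live at \emph{different} states, linked only via the Side-C path, so that the IDA witness genuinely requires all three side-compatibilities to succeed simultaneously. A careless design in which two sides share a common loop would trigger EDA from any 2-compatible pair -- but under the ternary encoding every pair of vectors already admits a common witness, so EDA would then fire unconditionally and render the reduction vacuous. Verifying that the shared-gate construction admits no spurious IDA witness via intermediate gadget states, and that no EDA appears elsewhere, is the main technical check.
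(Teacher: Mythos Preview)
Your plan is essentially the paper's own approach, inlined. The paper factors the argument through an intermediate problem \kie (emptiness of the intersection of $k$ DFAs): it first reduces $k$-OV and $k$-Cycle to \kie, then reduces \twoie to \unambiguity by taking the disjoint union of two DFAs, \unambiguity to \eda by a single $\#$-loopback, and \threeie to \ida by exactly the $p$-loop/$pq$-path/$q$-loop scaffold you describe. Your ``responsibility'' encoding via a ternary witness alphabet in part~(c) is literally the paper's $\Sigma_2=\rng{1}{k}$ encoding under a renaming, and your two-sided \ov gadget for part~(a) is the $k=2$ instance of that construction composed with the union-of-DFAs step. So the skeleton is correct and matches the paper.

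Two gaps are worth flagging. First, you wave at the $k$-Cycle reduction as ``analogous templates, with cycle-detection gadgets replacing vector-comparison gadgets'', but this is not a relabeling of the \ov gadget: the paper's reduction builds one DFA that tracks a walk in a $k$-layered graph over the vertex alphabet and a second DFA that merely checks first symbol $=$ last symbol via wildcard transitions, and the binary-encoding claims (\cref{def:binary_gadget_tree,def:binary_gadget_identity}) are needed to keep the size near-linear. That construction is short but genuinely different from the \ov one, and your proposal gives no hint of it. Second, for part~(c) you correctly identify the danger (any two sides already admit a common witness, so a shared loop would trigger \eda unconditionally) but stop short of verifying that your gadget avoids spurious \ida; the paper handles this cleanly by making $s$ the \emph{only} state with two identically labelled outgoing transitions, so any \ida witness is forced to branch there and hence decompose into one accepting run per DFA. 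You should also ensure trimness (the equivalence \ida $\Leftrightarrow$ not finitely ambiguous requires it), which the paper gets by routing everything through the two designated states $s,t$ as initial/final.
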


The ambiguity notion for NFA naturally extends to \emph{weighted automata}. These are finite automata whose transitions are additionally labelled with weights. 
Weighted automata assign weights to words; and one can think that finite automata are weighted automata that assign $0$ and $1$ to words (interpreted as reject and accept).
While in general the weights can come from any fixed semiring, for the purpose of this paper it is sufficient to think of integer weights. The weight of a run is the sum of weights on its transitions; to obtain the weight of the word the weights of runs are aggregated with $\min$. Readers familiar with the model will recognise the tropical semiring $\Z(\min,+)$. For more details we refer to the standard handbook~\cite{droste2009handbook}.
%
Weighted automata, in contrast to finite automata, become more expressive when nondeterminism is allowed. A central
question is the \emph{determinisation problem}: given a weighted automaton, decide whether there is an equivalent deterministic weighted automaton~\cite{LombardyS06}. For the tropical semiring the problem remained open for at least~30 years~\cite{Mohri94,Mohri97}, and was proven decidable only recently in 2025~\cite{abs-2503-23826}. However, much earlier, decidability was established successively for: unambiguous~\cite{Mohri97}, finitely ambiguous~\cite{KlimannLMP04}, and polynomially ambiguous~\cite{KirstenL09} weighted automata.
The particular case of unambiguous weighted automata was further shown to be polynomial-time decidable~\cite{buchsbaum2000determinization} and finally quadratic-time decidable~\cite{Allauzen2003EfficientAF}.
The authors of this upper bound conjecture that it is optimal: ``\emph{We present a new and substantially more efficient algorithm for testing [determinisability], which we conjecture to be optimal.}''

In this paper, we consider the following problem.
\begin{description}
	\item[\determinisability] Given an unambiguous weighted automaton $\mc W$, decide whether there is an equivalent deterministic weighted automaton.
\end{description}
We answer the conjecture in~\cite{Allauzen2003EfficientAF} positively, assuming popular fine-grained hypotheses.
\begin{theorem}\label{thm:WFA_lowerbound}
	Assuming the Orthogonal Vectors hypothesis or the $k$-Cycle hypothesis, there is no algorithm deciding \determinisability in $\Oh(|\mc W|^{2-\eps})$ time for any $\eps >0$.
\end{theorem}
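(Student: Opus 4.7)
The plan is to reduce \ov to \determinisability using Mohri's classical twins-property characterisation: a trim unambiguous weighted automaton $\mc{W}$ over the tropical semiring is determinisable if and only if for every pair of states $p, q$ reachable from the initial state by a common word, every word $v$ labelling a cycle at both $p$ and $q$ induces the same cycle weight. Given an OV instance with vectors $A = \{a_1, \dots, a_n\}$ and $B = \{b_1, \dots, b_n\}$ in $\{0,1\}^d$ with $d = \Theta(\log n)$, I would construct an unambiguous weighted automaton $\mc{W}$ of size $\Ot(n)$ whose twins property fails precisely when $A, B$ contain an orthogonal pair.

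The construction features a single sibling pair $(p,q)$, both reached from the initial state by one letter $\alpha$. Both states admit cycles whose common labels have the form $L = \mu_i\, \tilde\nu\, \langle j\rangle_2\, v$, with $\mu_i$ selecting $i \in [n]$, $\langle j\rangle_2 \in \{0,1\}^{\lceil \log n\rceil}$ the binary encoding of $j \in [n]$, and $v \in \{0,1\}^d$. On the $p$-side, $\mu_i$ enters an $i$-indexed chain whose $v$-gadget verifies $v \ge a_i$ coordinatewise (forcing $v_k = 1$ whenever $a_{i,k} = 1$), while ignoring the bits of $j$; all transitions carry weight $0$. On the $q$-side, all letters $\mu_{i'}$ merge into a single state, from which a binary tree on $\langle j\rangle_2$ routes deterministically to the $j$-specific gadget verifying $v \le \overline{b_j}$; one designated edge carries weight $1$ and all others $0$. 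Unambiguity is enforced by closing the two branches with distinct accept letters $\$_p$ and $\$_q$, so that every accepted word has a unique accepting run.

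A common cycle label at $(p,q)$ exists iff $\exists (i, j, v): v \ge a_i \wedge v \le \overline{b_j}$, which is equivalent to $\exists (i, j): a_i \le \overline{b_j}$, i.e., $\langle a_i, b_j\rangle = 0$. Whenever such a common cycle exists, the two cycle weights ($0$ vs $1$) differ, so the twins property fails; otherwise no cycle is shared between $p$ and $q$. One additionally verifies that every other sibling pair arising in the construction (such as $(p_i, q_{\text{mid}})$ reached by $\alpha \mu_i$) yields the same orthogonality condition by an analogous calculation. The total size of $\mc{W}$ is $O(nd) = O(n \log n)$, so an $O(|\mc{W}|^{2-\eps})$ algorithm for \determinisability would decide OV in $O((n \log n)^{2-\eps}) = o(n^2)$ time, contradicting the OV hypothesis. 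The $k$-Cycle variant follows the same template: the cycle label encodes a candidate $k$-cycle of the input graph, and the $p$- and $q$-gadgets separately certify complementary halves of it, so that a shared cycle label with mismatched weights witnesses a $k$-cycle.

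The main obstacle is to compress all $n^2$ pairwise orthogonality checks into a single sibling pair without blowing up the state or transition count beyond $\Ot(n)$ and without destroying unambiguity. This is resolved by shifting the entire witness $(i, j, v)$ into the cycle label rather than the state space, and by having a forgetful $q$-side that collapses all $\mu_{i'}$ transitions into a single state; this avoids the naive $\Omega(n^2)$ transition blow-up that would otherwise arise from indexing sibling pairs by both $i$ and $j$.
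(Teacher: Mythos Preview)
Your direct OV $\to$ \determinisability reduction is sound in outline: having the $p$-branch certify $v\ge a_i$ and the $q$-branch certify $v\le\overline{b_j}$ so that shared cycle labels encode orthogonality is correct, and because both branches are deterministic after the single nondeterministic $\alpha$-step, every ``other'' sibling pair $(p_w,q_w)$ sits at the same depth modulo the (uniform) simple-cycle length $L$, so its common cycle labels are rotations of those at $(p,q)$; this closes the hand-waved step.

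The paper takes a quite different, modular route. It never reduces OV to \twins directly; instead it establishes the chain $k$-OV / $k$-Cycle $\to$ \kie $\to$ \unambiguity $\to$ \twins (\cref{thm:ov_twoie,thm:kcycle_2dfa,thm:twoie_unambiguous,thm:unambiguity_twins}), with weighted automata appearing only in the last step. That last reduction is more elaborate than yours---three copies of a general NFA, with the middle copy injecting a weight $\ell(p,\sigma,q)$ to separate the two nondeterministic successors at the first divergence point---precisely because it must handle an arbitrary ambiguous NFA rather than a disjoint union of two DFAs. Your two-branch gadget is simpler because you are implicitly fusing \twoie $\to$ \twins in one shot and exploiting that the two ``DFAs'' are already separated. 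What the paper's modularity buys is reuse: the same \kie hardness feeds all four lower bounds of \cref{thm:NFA_lowerbounds,thm:WFA_lowerbound} uniformly.

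There is one genuine gap: your $k$-Cycle paragraph is not a proof. ``Certify complementary halves of a candidate $k$-cycle'' does not by itself give two $\Ot(m)$-size deterministic gadgets; the paper's reduction (\cref{thm:kcycle_2dfa}) needs the $k$-circle-layered normal form plus wildcard/binary-encoding tricks so that one side checks adjacency while the other merely checks that first and last vertex coincide. You would have to reproduce that machinery inside your $p$- and $q$-branches---or simply invoke the paper's \twoie hardness as a black box and then run your two-branch \twoie $\to$ \twins construction---to deliver the $k$-Cycle half of the theorem.
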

%
\cref{thm:NFA_lowerbounds,thm:WFA_lowerbound} show that the optimality conjectures hold in the case of all alphabets with at least two letters. For the special case of unary alphabets (i.e.,\ the restriction to $1$-letter alphabets), we show that the conjectured quadratic and cubic lower bounds do not hold, and (almost) linear time algorithms exist.
%
\begin{theorem}\label{thm:NFA_unary}
	Given a unary NFA $\mc A$:
	\begin{enumerate}[label=(\alph*)]
		\item \unambiguity can be decided in $|\mc A|^{1+ o(1)}$ time. 
		\item \FFA and \PFA can be decided in $\Oh(|\mc A|)$ time.
	\end{enumerate}
\end{theorem}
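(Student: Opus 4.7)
The plan is to handle the two parts separately, leveraging the classical Weber-Seidl characterization of polynomial and finite ambiguity via the absence of the forbidden patterns EDA and IDA. For part (b), I would specialize EDA and IDA to the unary setting. EDA asks for a useful state lying on two distinct closed walks of the same length; for unary automata this simplifies to the condition that some useful SCC is not a simple cycle, which is testable by Tarjan's SCC algorithm plus an in/out-degree check within each SCC. Assuming not EDA, so every useful SCC is a simple cycle or a singleton, I claim IDA simplifies to the existence of two distinct useful non-trivial SCCs $C, C'$ with $C$ an ancestor of $C'$ in the SCC DAG. The forward direction uses that a walk inside a single simple cycle of length $a$ has length fixed modulo $a$, precluding the matching-cycle-length condition of IDA when $C = C'$. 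For the converse, given such $C \neq C'$ with cycle lengths $a, a'$, shifting $p \in C$ and $q \in C'$ around their respective cycles independently adjusts the length of a walk $p \to q$ modulo $a$ and modulo $a'$, enabling that length to be any sufficiently large common multiple of $a$ and $a'$, which yields the required IDA witness. Both checks run in $O(|\mathcal{A}|)$ time using standard SCC algorithms and BFS/DFS on the SCC DAG.

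For part (a), since unambiguity implies finite ambiguity, I first apply part (b) to dispose of non-FFA inputs in $O(|\mathcal{A}|)$ time. For an FFA unary input, the useful non-trivial SCCs $C_1, \ldots, C_m$ form an antichain in the SCC DAG and every accepting run visits at most one of them, so the generating function of accepting-run counts decomposes as
\[
f(x) \;=\; \sum_{n \geq 0} N_n\, x^n \;=\; f^{\emptyset}(x) + \sum_{i=1}^m \frac{G_i(x)}{1 - x^{a_i}},
\]
where $f^{\emptyset}(x)$ counts runs avoiding all non-trivial SCCs and each $G_i(x)$ counts runs through $C_i$ classified by entry-exit configuration. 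Both $f^{\emptyset}$ and $G_i$ are polynomials of degree $O(|\mathcal{A}|)$, and $\sum_i a_i \leq |\mathcal{A}|$. Unambiguity is equivalent to all coefficients of $f(x)$ lying in $\{0,1\}$, which is in turn equivalent to the polynomial identity $f(x) = \chi_{L(\mathcal{A})}(x)$ of rational generating functions, where $\chi_{L(\mathcal{A})}$ is the characteristic series of the accepted language. Clearing the common denominator $\prod_i (1 - x^{a_i})$ reduces this to a polynomial identity of degree $O(|\mathcal{A}|)$, checkable in linear time. The characteristic series $\chi_{L(\mathcal{A})}$ can be computed from $\mathcal{A}$ because $L(\mathcal{A})$ is a finite union of arithmetic progressions, each read off from an entry-exit configuration of some $C_i$.

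The main obstacle is computing the numerator of $f(x)$ in $|\mathcal{A}|^{1+o(1)}$ time. A naive DAG dynamic program over the trivial subgraph feeding into the cycles, maintaining a path-counting polynomial at every vertex, costs $\Theta(|\mathcal{A}|^2)$ because each of up to $|\mathcal{A}|$ edges may combine polynomials of degree $\Theta(|\mathcal{A}|)$. The strategy for near-linear time is to avoid materializing per-vertex polynomials: exploiting the FFA-induced antichain structure, which partitions the trivial useful states into three strata (those that reach a non-trivial SCC, those reached from one, and those unrelated to any non-trivial SCC) with no edges from the latter two strata back to the first, the computation of $f^{\emptyset}$ and the $G_i$ can be reorganized as a collection of convolutions whose combined input sizes sum to $O(|\mathcal{A}|)$. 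Performing these convolutions with FFT-based polynomial multiplication then yields the $|\mathcal{A}|^{1+o(1)}$ bound.
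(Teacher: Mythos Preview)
Your treatment of part~(b) is essentially the paper's argument: both reduce \PFA to detecting a useful SCC that is not a simple cycle, and (assuming not \eda) reduce \FFA to detecting a path between two non-trivial SCCs in the condensation, all in linear time via Tarjan's algorithm.

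For part~(a), however, the generating-function route has two genuine gaps.

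First, the identity $f(x)=\chi_{L(\mc A)}(x)$ presupposes that you can compute the characteristic series $\chi_{L(\mc A)}$ as a rational function in near-linear time. You assert it can be ``read off'' from the entry--exit configurations of the cycles, but those configurations give you $L(\mc A)$ only as a \emph{union} of arithmetic progressions, and the characteristic series of a union equals the sum of the individual indicator series precisely when the progressions are pairwise disjoint---which is exactly what you are trying to decide. Computing $\chi_{L(\mc A)}$ without this assumption (e.g.\ by determinising, or via Chrobak's normal form) is not known to take subquadratic time.

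Second, the coefficients of $f^{\emptyset}$ and the $G_i$ are run \emph{counts}, and these can be exponential in $|\mc A|$: a layered DAG on $\Theta(n)$ vertices can have $2^{\Theta(n)}$ source-to-target paths of a single length, and this is perfectly compatible with the check in part~(b) since every acyclic unary NFA is trivially finitely ambiguous. So even writing down $f^{\emptyset}$ may require $\Theta(n^2)$ bits; the bottleneck you identify (degree of the per-vertex polynomials) is not the real one, and no reorganisation into FFT convolutions repairs an output-size blowup.

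The paper sidesteps both issues by never computing run counts. It gives a dedicated linear-time algorithm for DAGs (\cref{thm:unary_amb_dag_algo}) that maintains only \emph{sets} of walk lengths and aborts the moment a duplicate appears, so the representation stays linear. It then shows (\cref{thm:equivalence}) that, after this DAG phase and a normalisation of the cycles (\cref{lem:transformation}), what remains is exactly the \disjointprogressions problem---the very ``are the progressions disjoint?'' question that your approach needs $\chi_{L(\mc A)}$ to answer but does not solve. The $|\mc A|^{1+o(1)}$ bound then comes from a number-theoretic argument: checking disjointness of progressions with steps $b_1,\ldots,b_k$ costs $\Oh\big(\sum_{i,j}\gcd(b_i,b_j)\big)$, and this sum is shown to be $n^{1+o(1)}$ for $n=\sum_i b_i$ (\cref{lem:sum-gcd}).
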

\begin{theorem}\label{thm:WFA_unary}
	Given a unary unambiguous weighted automaton $\mc W$, \determinisability can be decided in $\Oh(|\mc W|)$ time. 
\end{theorem}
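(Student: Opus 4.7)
The plan is to reduce \determinisability for a unary unambiguous weighted automaton $\mc W$ to comparing a single rational invariant---a ``slope''---across the strongly connected components of $\mc W$, and then to carry out that comparison in linear time.

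First, I would trim $\mc W$ in $\Oh(|\mc W|)$ time so that every remaining state lies on some accepting run. Then I prove the key structural lemma: in a trim unambiguous unary WFA, every SCC of the underlying transition graph is either a single vertex (with or without a self-loop) or a simple directed cycle of length at least $2$. The argument is that any SCC containing two distinct simple cycles admits a vertex $v$ with two distinct walks of the same length from $v$ to itself---either two cycles of equal length at $v$, or the two length-$(|c_1|+|c_2|)$ walks $c_1 c_2$ and $c_2 c_1$ obtained by traversing two cycles in different orders, or an analogous combination using a connecting path in the vertex-disjoint case---and combined with trimness this yields two accepting runs of the same total length, contradicting unambiguity. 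SCCs themselves are computed in linear time by Tarjan's algorithm.

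Given this structure, each cyclic SCC $S_i$ has well-defined cycle length $p_i$ and total cycle weight $C_i$, both computable in $\Oh(|S_i|)$ time by walking the cycle once; set $\mu_i := C_i / p_i$. I claim $\mc W$ is determinisable iff all slopes $\mu_i$ coincide. For the necessary direction I invoke the Allauzen--Mohri theorem that an unambiguous WFA is determinisable iff it satisfies the twins property. In the unary setting, for states $p \in S_i$, $q \in S_j$ reachable from the initial state at a common time $k$ and for walk length $m = \mathrm{lcm}(p_i, p_j)$, the twins identity $(m/p_i)\,C_i = (m/p_j)\,C_j$ collapses to $\mu_i = \mu_j$; a short CRT-style argument on the first-reach times of the states in each cycle shows that such a common time $k$ always exists for any pair of cyclic SCCs, so the twins property really does reduce to slope equality. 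For the sufficient direction, if all slopes equal a common $\mu$, then the unique-run weight function decomposes as $f(n) = \mu n + g(n)$ with $g$ eventually periodic modulo the LCM of the $p_i$, which is realized by a deterministic WFA with a transient pre-period followed by a cycle of that LCM-length.

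The algorithm is then immediate: trim $\mc W$; compute its SCCs; traverse each cyclic SCC once to extract $(p_i, C_i)$; fix a reference pair $(p_0, C_0)$ and verify the cross-multiplication identity $C_i \cdot p_0 = C_0 \cdot p_i$ for every other cyclic SCC. Each step runs in $\Oh(|\mc W|)$ with unit-cost arithmetic on polynomially bounded integers, giving the advertised $\Oh(|\mc W|)$ bound. The main obstacle I expect is the structural lemma: ruling out parallel cycles, figure-eight components, and chorded cycles by exhibiting two accepting runs of the same length in each case requires some combinatorial care and is where unambiguity is genuinely used. A secondary delicate point is the ``twins activation'' step---verifying that two cyclic SCCs cannot stay permanently out of phase and thereby evade the twins test---but this reduces to standard arithmetic of arithmetic progressions on the first-reach times.
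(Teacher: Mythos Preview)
Your approach is correct and reaches the same linear bound, but it differs from the paper's in how you check ``all cycles have the same average weight.'' You exploit unambiguity a second time, via the structural lemma that every SCC of a trim unambiguous unary automaton is a single vertex or a simple cycle (this is essentially \cref{lem:disjoint_cycles}), and then compare one slope per SCC by cross-multiplication. The paper instead, after reducing to the twins property via \cref{lem:WFA_equivalence}, shows (\cref{lem:twin-if-same-average-weight}) that non-twin siblings exist iff some two cycles have different average weight, then (\cref{lem:shifted-average-weight}) subtracts a reference cycle's average from every edge so that the question becomes detecting any nonzero-weight cycle, and finally (\cref{lem:potential}) checks each SCC for a consistent potential function by a single DFS. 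The paper's route never needs the SCCs to be simple cycles and therefore gives a linear-time test for ``all cycles have equal average weight'' in \emph{arbitrary} weighted digraphs; your route is more elementary but leans on unambiguity for the structure of the SCCs. One minor simplification for you: for the sufficient direction you do not need to build a deterministic automaton explicitly---once you have shown that equal slopes imply the twins property (any two equally-labelled cycles have weight $\ell\mu_i=\ell\mu_j$), \cref{lem:WFA_equivalence} already gives determinisability, and you avoid worrying about the language, non-integer $\mu$, or the exponential LCM.
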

The study of unary automata is motivated by their simplicity, which allows to capture interesting phenomena in automata theory. Two important examples are: (1) the celebrated Parikh's theorem, which in its simplified form states that regular and context-free languages coincide for unary languages~\cite{parikh1961language}; (2) Chrobak’s influential paper on efficient representations of unary NFAs~\cite{Chrobak86}, later corrected in~\cite{To09}. The importance of the former result does not need to be discussed, the latter found many applications, e.g.\ in database theory~\cite{BarceloLLW12}.
Furthermore, unary (weighted) automata can be seen as (edge-weighted) directed graphs and the studied problems boil down to simple graph algorithmic questions. For example \unambiguity asks \emph{are there two different source-to-target walks of the same length?}, while \determinisability asks \emph{are there two cycles of different average weight?}

The proof of \cref{thm:NFA_unary}\;(a) is nontrivial and the technically most involved contribution of this paper. In fact, we conjecture that the algorithm actually works in $\Ot(|\mc A|)$ time (i.e.\ linear up to polylogarithmic factors). However, we leave this tighter analysis as an open problem. Interestingly, the complexity depends only on the number of states when we assume $\mc A$ to be trim, i.e.\ every state lies on an accepting run.
This relies on the observation that in a graph with $n$ nodes and $\Omega(n)$ edges there always exist two different walks of the same length.
Note that this observation only applies to unary automata, as for larger alphabets there are trim unambiguous automata with quadratically (in number of states) many transitions (see, e.g.~\cite{KieferW19}).


Curiously, the running time of our algorithm reduces to a number theoretical question involving \emph{greatest common divisors}
(GCDs): given natural numbers $0 < x_1 < x_2 < \ldots < x_k$, how big is
$\sum_{i,j} \gcd(x_i,x_j)$
in terms of $n = \sum_{i}x_i$?
Equivalently, this can be seen as bounding the entry-wise one-norm of a GCD matrix -- the $k\times k$ matrix whose $(i,j)$-th entry is $\gcd(x_i,x_j)$. (Such GCD matrices have been studied since the 19th century; see e.g.~\cite{bourque1992gcd}.) 
We provide a short proof that $\sum_{i,j} \gcd(x_i,x_j) \le n^{1 + o(1)}$. We remark that this already follows from more general results
(e.g.,~\cite{bondarenko2015gcd,walker2020gcd}), but our arguments are elementary.

\subsection{Related work}
The classes UFA, FFA and PFA have been studied extensively since the 80s (see e.g.\ a survey~\cite{HanSS17}). For example, it took two decades to show that for any two of these classes there is a language witnessing a superpolynomial increase in state complexity~\cite{RavikumarI89,Leung98,HromkovicS11}. For unary automata (i.e., those with a $1$-letter alphabet) a complete picture was known already in the 80s~\cite{RavikumarI89}: the state complexity increases only by a polynomial factor between the classes FFA, PFA and NFA, whereas the factor between UFA and FFA can be exponential.

Further research has been devoted to unambiguous automata for models beyond finite automata (see e.g.\ the survey~\cite{Colcombet15}). More recently, unambiguous and finitely ambiguous restrictions have been studied in the context of vector addition systems and register automata~\cite{CzerwinskiFH20,CzerwinskiMQ21,CzerwinskiH22}. It should also be noted that some operations like complementation or union have been recently shown to be inefficient already for UFAs; there are languages for which such operations cause a superpolynomial increase in the number of required states~\cite{Raskin18,GoosK022}.

The fine-grained complexity of problems in automata theory has been investigated in various contexts. Our paper builds on the lower bounds for intersection of DFAs~\cite{WeharThesis16}. Other works regarding fine-grained complexity investigate one-counter systems~\cite{ChistikovCHPW19}, vector addition systems~\cite{KunnemannMSSW23}, Dyck reachability~\cite{ChatterjeeCP18,KoutrisD23,BringmannGKL24}, polynomial growth of transducers~\cite{abs-2501-10270} and NFA universality~\cite{FernauK17}.

Regarding unary automata, lower bounds on the size of DFAs equivalent to unary UFAs were analysed in~\cite{Okhotin12}. The already mentioned superpolynomial lower bound on the state complexity of complementing UFAs was proven for unary automata~\cite{Raskin18}. 
The containment problem for unary NFAs and the union operation for unary UFAs were also analysed in~\cite{CzerwinskiDGH0S23}. Unary weighted automata over fields can be naturally interpreted as linear recurrence sequences (LRSs). Bounded ambiguity subclasses have natural characterisations in terms of restricted LRSs~\cite{BarloyFLM22}.

Questions about two different walks of the same length have been studied for simple paths (i.e.\ walks that do not revisit vertices).
Recently, a linear time algorithm was given for detecting two simple paths assuming that they are disjoint and shortest~\cite{AkmalWW24}. We refer to the introduction of~\cite{AkmalWW24} for a broader discussion on the problem when the simple paths are disjoint.

\subsection{Organisation}
In \cref{sec:technical_overview} we give a high-level overview of our results before introducing all notations and terminology in \cref{sec:preliminaries}. 
In particular, in \cref{sec:technical_overview_lowerbounds} we explain the chain of reductions used to prove the lower bounds of \cref{thm:NFA_lowerbounds,thm:WFA_lowerbound}. The actual reductions are given in \cref{sec:lower}. In \cref{sec:technical_overview_unary} we discuss the unary setting of \cref{thm:NFA_unary,thm:WFA_unary}, with a main focus on checking unambiguity of a unary NFA, i.e. \cref{thm:NFA_unary}\;(a). The full proof of \cref{thm:NFA_unary}\;(a) is given in \cref{sec:unambiguity_upper} while \cref{thm:NFA_unary}\;(b) follows from a simple observation that we prove in \cref{sec:technical_overview_unary}. 
Finally, the full proof of \cref{thm:WFA_unary} is provided in \cref{sec:twins_unary}.

\section{Technical overview}\label{sec:technical_overview}
We now discuss our results in more details. Precise definitions and terminology are given in \cref{sec:preliminaries}, and we assume for now that the reader is familiar with notions of (weighted) finite automata and directed graphs. 
%
For simplicity, in this section, we assume that automata are in normal form, i.e.\ there is a single initial state $s$ and a single accepting state $t$, and trim, i.e.\ every state is reachable from $s$ and co-reachable from $t$.

\subsection{Lower bounds}\label{sec:technical_overview_lowerbounds}

We discuss the conditional lower bounds proven in \cref{thm:NFA_lowerbounds,thm:WFA_lowerbound} for the above four problems. In fact, the last three problems were shown in~\cite{WeberS91,AllauzenMR11} to be equivalent to deciding whether the given automaton satisfies some structural property called respectively \eda, \ida, and \twins.  
We define those properties below and illustrate them in \cref{fig:idaeda}. 
By abuse of notation, we also denote by \eda, \ida and \twins the problems of deciding whether a given NFA or unambiguous weighted automata has the respective properties. 
We note that the \ida property can be extended to characterise finer degrees of ambiguity (e.g. quadratic or cubic, see~\cite{WeberS91}). Remark that the \twins property was first introduced in the context of determinisation of transducers, a model closely related to weighted automata~\cite{choffrut1977caracterisation}.

\begin{definition}[{See~\cite{WeberS91} and~\cite[Def.~2]{AllauzenMR11}}]
    Let $\mathcal{A}$ be an NFA over the alphabet $\Sigma$. We say that $\mc A$ has the \emph{\eda property}, respectively \emph{\ida property}\footnote{\eda and \ida stand for \emph{Exponential} and \emph{Infinite Degree of Ambiguity}.}, if the following holds. 
    \begin{description}
    \item[\eda] There is a state $q$ and two distinct cycles at $q$ on the same non-empty word $v\in\Sigma^\ast$.
    \item[\ida] There are states $p\neq q$, a non-empty word $v\in\Sigma^\ast$, and three runs on $v$: one from $p$ to $p$, one from $p$ to $q$, and one from $q$ to $q$.
    \end{description}
\end{definition}
\begin{definition}[\cite{choffrut1977caracterisation}]\label{def:twins}
    Let $\mc W$ be a weighted automaton over the alphabet $\Sigma$. Two states $p$ and $q$ are siblings if there are initial states $i_p$ and $i_q$ and words $u, v \in \Sigma^*$ such that there are runs from $i_p$ to $p$ and from $i_q$ to $q$ labelled by $u$, and runs $\pi_p$ from $p$ to $p$ and $\pi_q$ from $q$ to $q$ labelled by $v$. 
    Two siblings are \emph{twins} if every pair of cycles $\pi_q$ and $\pi_q$ with equal labels also have equal weight.
    Then the \twins property is defined as follows.
    \begin{description}
        \item[\twins] Every pair of siblings is a pair of twins.
    \end{description}
\end{definition}
  
\begin{figure}
    \centering
    \begin{subfigure}[t]{.24\textwidth}
        \centering
        \begin{tikzpicture}
            [state/.append style={minimum size=2.3em},thick, >={Stealth[length=2.4mm, width=1.2mm]}]
            \node[state] (e) {$q$};
            \draw
            (e) edge[loop left] node[above=5pt,xshift=5pt] {$v$} (e)
            (e) edge[loop right] node[above=5pt,xshift=-5pt] {$v$} (e)
            ;		
        \end{tikzpicture}
        \caption{A witness for \eda.}\label{fig:idaeda-a}
    \end{subfigure}
    \hfill
    \begin{subfigure}[t]{.2\textwidth}
        \centering
        \begin{tikzpicture}
            [state/.append style={minimum size=2.3em},thick, >={Stealth[length=2.4mm, width=1.2mm]}]
            \node[state] (p) {$p$};
            \node[state, right = .75cm of p] (q) {$q$};
            
            \draw
            (p) edge[loop left] node[above=5pt,xshift=5pt] {$v$} (p)
            (q) edge[loop right] node[above=5pt,xshift=-5pt] {$v$} (q)
            (p) edge[->] node[above] {$v$} (q)
            ;
        \end{tikzpicture}
        \caption{A witness for \ida.}\label{fig:idaeda-b}
    \end{subfigure}
    \hfill
    \begin{subfigure}[t]{.45\textwidth}
        \centering
        \resizebox{\textwidth}{!}{
        \begin{tikzpicture}
            [state/.append style={minimum size=2.3em},thick, >={Stealth[length=2.4mm, width=1.2mm]}]
            \node[state] (rp) {$i_p$};
            \node[state, right = 1.25cm of rp] (p) {$p$};
            \node[state, right = 1cm of p] (rq) {$i_q$};
            \node[state, right = 1.25cm of rq] (q) {$q$};
            
            \draw
            (rp) edge[->] node[above] {$u \mid x_{1}$} (p)
            (rq) edge[->] node[above] {$u \mid x_{2}$} (q)
            (p) edge[loop above] node[above] {$v \mid y_{1}$} (p)
            (q) edge[loop above] node[above] {$v \mid y_{2}$} (q)
            ;
        \end{tikzpicture}
        }
        \caption{Two siblings. They violate the \twins property if $y_1 \neq y_2$.}\label{fig:idaeda-c}
    \end{subfigure}
    %
    %
    %
    %
    %
    \caption{
    Illustration of three properties. An arrow labelled by $u$, respectively $u \mid x$, represents a run over the word $u$, respectively of weight $x$.
    }\label{fig:idaeda}
\end{figure}

The crucial observations are the following equivalences.
\begin{lemma}[{See~\cite{WeberS91} and~\cite[Lemma~3]{AllauzenMR11}}]\label{lem:NFA_equivalence}
Let $\mathcal{A}$ be a trim NFA. Then we have that
\begin{itemize}
 \item $\mc A$ is not finitely ambiguous if and only if $\mc A$ has the \ida property;
 \item $\mc A$ is not polynomially ambiguous if and only if $\mc A$ has the \eda property.
\end{itemize}
\end{lemma}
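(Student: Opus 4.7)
The plan is to prove each of the two equivalences by splitting into a constructive (easy) direction and a structural (hard) direction.

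For the easy constructive directions I would build explicit families of accepting runs. If EDA holds, take two distinct cycles $\alpha_1 \ne \alpha_2$ at some state $q$, both labelled by $v$. For each binary string $b \in \{1,2\}^k$ the concatenation $\alpha_{b_1} \alpha_{b_2} \cdots \alpha_{b_k}$ is a distinct cycle on $v^k$ at $q$, giving $2^k$ runs. Trimness supplies a prefix $u$ from an initial state to $q$ and a suffix $w$ from $q$ to a final state, so I obtain $2^k$ accepting runs on the word $u v^k w$ of length $\Oh(k)$; this is exponential ambiguity. Dually, given an IDA witness with runs $\pi_p : p \to p$, $\pi : p \to q$, and $\pi_q : q \to q$ all labelled $v$, the run that follows $\pi_p$ on the first $i$ copies of $v$, then $\pi$, then $\pi_q$ on the remaining copies is a distinct run from $p$ to $q$ on $v^k$, one for each $i \in \{0, \ldots, k-1\}$. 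Extending by trim-provided prefixes and suffixes gives $k$ accepting runs on a word of length linear in $k$, so $\mc A$ is not finitely ambiguous.

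For the hard structural directions I would work in the product automata $\mc A \times \mc A$ and $\mc A \times \mc A \times \mc A$, where pairs (respectively triples) of equi-labelled runs correspond to single runs. The absence of EDA translates, in $\mc A \times \mc A$, to the property that no reachable diagonal state $(q,q)$ admits a loop that leaves and then returns to the diagonal: otherwise such a loop would yield two distinct $v$-labelled cycles at $q$. The absence of IDA similarly forbids, in $\mc A \times \mc A \times \mc A$, a ``ladder'' pattern connecting $(p,p,q)$ to $(p,q,q)$ with $p \ne q$. I would then argue that these structural restrictions force, respectively, polynomial and constant ambiguity. Concretely, given an accepting run, each additional distinct run on the same word induces a path in $\mc A^2$ that leaves and re-enters the diagonal; no-EDA restricts how often such divergences can occur, and an induction on word length bounds the number of accepting runs by a polynomial in $|w|$ whose degree depends only on $|\mc A|$. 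A tighter analogous analysis in $\mc A^3$ yields a constant bound under no-IDA, since the forbidden ladder pattern prevents ``stacking'' an unbounded number of independent pumping choices.

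The main obstacle I anticipate is the polynomial/constant bounding step in the hard direction, which rests on a delicate case analysis of diverge-rejoin segments of pairs or triples of runs. Since this is exactly the classical Weber--Seidl theorem~\cite{WeberS91}, restated and used in Allauzen, Mohri, and Rastogi~\cite{AllauzenMR11}, my plan is to invoke their proof for the structural direction while presenting the constructive direction explicitly as above. A fully self-contained write-up would replicate the Weber--Seidl inductive pumping argument in a technical appendix, but for the purpose of using \cref{lem:NFA_equivalence} elsewhere in the paper it suffices to rely on the published proofs.
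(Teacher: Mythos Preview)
The paper does not prove \cref{lem:NFA_equivalence} at all; it is stated as a citation of \cite{WeberS91} and \cite{AllauzenMR11} and used as a black box. Your plan---giving the constructive directions explicitly and deferring to Weber--Seidl for the structural directions---is therefore entirely consistent with, and in fact more detailed than, what the paper itself does.
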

\begin{lemma}[{\cite{Mohri97}}]\label{lem:WFA_equivalence}
    Let $\mc W$ be an unambiguous weighted automaton. Then $\mc W$ is determinisable if and only if $\mc W$ has the \twins property. 
\end{lemma}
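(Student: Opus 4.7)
Assume $\mc W$ is equivalent to a deterministic weighted automaton $\mc D$, and suppose for contradiction that there exist siblings $p \neq q$ violating \twins: runs from initial states to $p$ and $q$ over a common word $u$, and cycles on $v$ at $p$ and $q$ with weights $y_1 \neq y_2$. By trimness, fix extensions $w_p$ and $w_q$ from $p$ and $q$ respectively to accepting states. The run threading through the cycle at $p$ shows that $u v^n w_p$ lies in the domain of $\mc W$, and by unambiguity it is the unique accepting run on this input (a parallel run through $q$ would be a second accepting run); hence $\mc W(u v^n w_p)$ is an affine function of $n$ with slope $y_1$. Symmetrically, $\mc W(u v^n w_q)$ is affine with slope $y_2$. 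On the $\mc D$-side, the sequence of states reached on the prefixes $u v^n$ is ultimately periodic, so for $n$ large and any fixed continuation $w$ the difference $\mc D(u v^{n+k} w) - \mc D(u v^n w)$ (where $k$ is the $\mc D$-cycle length) equals the weight $k c$ around that cycle and is \emph{independent} of $w$. Combining, $k y_1 = k c = k y_2$, contradicting $y_1 \neq y_2$.

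\textbf{Backward direction ($\Leftarrow$).} Assume $\mc W$ is unambiguous and satisfies \twins. I would deploy Mohri's determinisation construction: a state of the candidate deterministic automaton after reading a prefix $u$ is the residual-weight profile $\phi_u \colon Q_\mc W \to \Z$ that sends every $\mc W$-state $q$ reachable on $u$ to the difference between the minimum weight of a run on $u$ ending at $q$ and the overall minimum such weight. The deterministic transition pushes $\phi_u$ forward along each letter and subtracts the new overall minimum, which is simultaneously emitted as output weight. Correctness (that the construction computes $\mc W$) is a direct induction on input length, using unambiguity to identify $\mc W$'s output with the weight of the unique accepting run.

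\textbf{The main obstacle} is showing that only finitely many profiles $\phi_u$ are reachable, i.e.\ that the construction yields a finite automaton. My plan is the contrapositive: if infinitely many distinct profiles appeared, a pigeonhole argument extracts indices $i<j$ such that $\phi_{u_i}$ and $\phi_{u_j}$ share the same support and the intermediate word $v$ labels cycles at two distinct common states $p,q$, yet the relative gap $\phi(p)-\phi(q)$ strictly changes between the two snapshots. Unpacking shows that $p,q$ are siblings whose cycles on $v$ carry unequal weights, contradicting \twins. Since this is precisely the classical theorem of~\cite{Mohri97}, I would cite it rather than reproduce the detailed bookkeeping; the conceptual point is that \twins is exactly the invariant ensuring that residual weights do not diverge under arbitrarily long cycles, which in turn bounds the state space of the determinisation.
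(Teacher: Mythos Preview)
The paper does not give its own proof of this lemma; it is stated with a citation to~\cite{Mohri97} and used as a black box. Your proposal therefore already goes beyond the paper's treatment by sketching the classical argument, and your closing decision to ``cite it rather than reproduce the detailed bookkeeping'' matches the paper exactly.

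On the sketch itself: the forward direction is sound (you tacitly assume $\mc W$ is trim to obtain the suffixes $w_p,w_q$, which is harmless). In the backward direction, the pigeonhole step as phrased has a small gap: from an infinite set of distinct profiles $\phi_{u_1},\phi_{u_2},\ldots$ you cannot directly speak of an ``intermediate word $v$'' between $u_i$ and $u_j$, since nothing forces $u_i$ to be a prefix of $u_j$. The standard fix is to note that if the subset construction fails to terminate then some \emph{single} infinite word has infinitely many distinct profiles among its prefixes; pigeonholing those prefixes on both the support and the induced map (which $\mc W$-state in the support goes where under the next block) then yields two prefixes $u_i \sqsubset u_j$ with the required cycles on $v = u_i^{-1}u_j$. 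Since you explicitly defer these details to~\cite{Mohri97}, this is not a real defect in your write-up.
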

As an example, consider the exponentially and polynomially ambiguous automata ${\mc A}_{2}$ and ${\mc A}_4$ in \cref{fig:ambiguity}. \eda holds in ${\mc A}_{2}$ as witnessed by the state $q_1$ and the word $0\#$. Indeed, there are two different runs from $q_1$ back to $q_1$ on $0\#$. \ida holds in ${\mc A}_4$ as witnessed by the states $r_1$, $r_4$ and the word $1^6$. Indeed, there is a $r_1$-cycle on $1^6$, a $r_1r_4$-walk on $1^6$, and $r_4$-cycle on $1^6$.

To establish the lower bounds stated in \cref{thm:NFA_lowerbounds,thm:WFA_lowerbound}, we consider the following intermediate problem on deterministic finite automata. By abuse of notation, we say that two DFAs intersect if the intersection of their languages is non-empty.

\begin{restatable}[\kie]{definition}{defkie}
    \label{def:kie}
    Given $k$ DFAs $\mc D_1, \mc D_2, \dots, \mc D_k$, the problem \kie asks whether the intersection of their languages $\bigcap_{i=1}^k\mc L(\mc D_i)$ is empty.
\end{restatable}

In his doctoral thesis~\cite{WeharThesis16}, Wehar showed that \kie cannot be solved in $\Oh(n^{k-\eps})$ time for any constant $\eps>0$, where $n = |\mc D_1| + \dots + |\mc D_k|$, assuming the Strong Exponential-Time Hypothesis (SETH). To prove the lower bounds of \cref{thm:NFA_lowerbounds,thm:WFA_lowerbound}, we show the following chain of reductions illustrated in \cref{fig:graph_reduction}. 
\begin{figure*}
    \centering
        \resizebox{\textwidth}{!}{%
                \begin{tikzpicture}[yscale=0.825,
                    thick,
                    >={Stealth[length=2.4mm, width=1.2mm]},
                    every text node part/.style={align=center}]
                    \node[draw, ultra thick] at (-3,-2.75) (CYCLE) {$k$-Cycle};
                    \node[draw, ultra thick] at (1, 0) (OV) {$k$-OV};
                    \node[draw] at (4,-2.75) (2DFA) {\twoie};
                    \node[draw] at (8,-2.75) (UA) {\unambiguity};
                    \node[rectangle, draw,anchor=west] at (12, -1.75) (EDA) {\textsc{Polynomial} \\ \textsc{Ambiguity}};
                    \node[rectangle, draw,anchor=west] at (12, -3.75) (TP) {\textsc{Unambiguous} \\ \textsc{Determinisability}};
    
                    \draw[dotted,very thick] (2.8,0)  -- (15.8, 0) node[below left] {$n^2$} node[above left] {$n^3$};
                    \node[draw] at (4, 1.5) (3DFA) {\threeie};
                    \node[draw,anchor=west] at (12, 1.5) (IDA) {\textsc{Finite} \\ \textsc{Ambiguity}};
    
                    \draw (CYCLE)   edge[->,sloped] node[below] {\cref{thm:kcycle_2dfa}} (2DFA);
                    \draw (2DFA)  	edge[->] node[above] {\cref{thm:twoie_unambiguous}} (UA);
                    \draw (3DFA)  	edge[->] node[above] {\cref{thm:3dfa_ida}} (IDA);
                    \draw (UA)  	edge[->,sloped] node[above] {\cref{thm:unambiguity_eda}} (EDA);
                    \draw (OV)  	edge[->,sloped] node[above] {\cref{thm:ov_twoie}} (2DFA) ;
                    \draw (OV)  	edge[->,sloped] node[above] {\cref{thm:ov_twoie}} (3DFA) ;
                    \draw (UA)  	edge[->,sloped] node[below] {\cref{thm:unambiguity_twins}} (TP);
    
                    \node[draw, ultra thick] at (-3,0) (SAT) {SAT};
    
                    \draw (SAT)   	edge[->,dashed] node[above] {\cite[Lem.~A.1]{WilliamsY14}} (OV);
                    \draw (SAT)   	edge[->,dashed,bend left,sloped,out=40] node[above,rotate=2,pos=.72] {\cite[Thm.~7.22]{WeharThesis16}} (3DFA);
                    \draw (SAT)   	edge[->,dashed,sloped] node[below] {\cite[Thm.~7.22]{WeharThesis16}} (2DFA);
                \end{tikzpicture}
    }
            \caption{\small{Summary of reductions.
            Dashed arrows are previously known reductions, the remaining arrows are
            our contribution. The problems in thick boxes are conjectured to be hard. The dotted line separates problems by their conditionally optimal complexity: cubic above and quadratic below. 
            The picture is simplified: we reduce with \cref{thm:ov_twoie} from $k$-OV to $k$-IE for every $k$. The $k$ of $k$-Cycle is a constant independent of the $2$ in $2$-IE. \cref{thm:3dfa_ida,thm:twoie_unambiguous,thm:unambiguity_eda} are reductions to \eda, \ida and \twins respectively, which are equivalent to \PFA, \FFA and \determinisability by \cref{lem:NFA_equivalence,lem:WFA_equivalence}.
    }}\label{fig:graph_reduction}
\end{figure*}

First, to make the hardness assumption of \kie more believable, we adapt Wehar's
lower bound~\cite{WeharThesis16} for \kie to also be based on the $k$-OV
hypothesis. Then, we show a quadratic lower bound for the 2-IE problem assuming the $k$-Cycle hypothesis. As it is unknown how $k$-Cycle compares to $k$-OV or SETH, this reinforced the hardness assumption of 2-IE.

Next, observe that $2$-IE naturally reduces to \unambiguity: two DFAs accept
the same word if and only if the NFA constructed as their union is ambiguous.
The reduction from \unambiguity to \eda is also straightforward: it suffices to
extend the alphabet with a special letter $\#$ and add a transition from the
final state to the initial state that reads $\#$. Then, the automaton accepts
two different words $w_1$ and $w_2$ if and only if the word $w_1\#w_2\#$ is a witness of the \eda
property.
The reduction from $3$-IE to \ida generalizes this observation to three
automata. Suppose $w$ is accepted by all automata. We add two states, $p$
and $q$, and connect the DFAs $\Dd_1$, $\Dd_2$, and $\Dd_3$ such that $\Dd_1$
accepts $w$ on a loop from $p$ to $p$, $\Dd_2$ accepts $w$ on a walk from $p$ to
$q$, and $\Dd_3$ accepts $w$ on a loop from $q$ to $q$.

The reduction from \unambiguity to \twins is different, as we need to
distinguish between two accepting runs of the word $w$ with different
weights. The idea is to create three copies of the automaton and place two layers
of edges between their states. This allows us to assign weights such that two
distinct accepting runs result in different total weights.

Finally, to reduce the alphabet $\Sigma$ to $\{0, 1\}$, it suffices to encode each
letter using $\Oh(\log{|\Sigma|})$ transitions. This introduces only a logarithmic
blow-up, which is good enough for our purposes. This binary encoding is also necessary to render some of the reductions efficient.


\subsection{Unary Automata}\label{sec:technical_overview_unary}

In the unary case, it is natural to think of a finite automaton as a directed graph $G$ with two distinguished vertices: a source $s \in V(G)$ and a target $t \in V(G)$. 
If the automaton is additionally weighted, then we can associate a corresponding edge weight function to the graph. 
We stress that a \emph{walk} is a sequence of adjacent vertices that can repeat, as opposed to a \emph{simple path} where each vertex is unique. 

We observe that, if $G$ is trim, i.e.\ every vertex lies on a source-to-target walk, then one can assume that $G$ is sparse (i.e.\ has only $\Oh(n)$ edges). The proof is deferred to \cref{sec:proofs_Sec2}. 
This allows us to only consider the number of states (vertices) as a parameter and to ignore the number of transitions (edges). In particular, we will show that if the given NFA in \cref{thm:NFA_unary} is trim, then \unambiguity can be decided in $n^{1 + o(1)}$, where $n$ is the number of states.
In the following, we therefore we assume for simplicity that $G$ is trim.

\begin{claim}\label{lem:sparsity}
    Let $\mc A$ be a trim unary NFA with $n$ states and $m$ transitions. If $\mc A$ is unambiguous, then $m \leq 12n$.
\end{claim}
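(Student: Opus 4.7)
The proof rests on a local-to-global principle: since $\mc A$ is trim, any two distinct walks from a vertex $v$ to a vertex $v'$ of equal length can be extended by an $s$-to-$v$ walk on the left and a $v'$-to-$t$ walk on the right to produce two distinct accepting runs on a common unary word, contradicting unambiguity. I would apply this principle repeatedly to constrain the structure of $\mc A$ and, ultimately, to bound its number of transitions.

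The first main step is to prove that every non-trivial SCC of $\mc A$ equals a single simple cycle. Indeed, if a vertex $v$ carried two distinct closed walks of lengths $\ell_1$ and $\ell_2$, iterating them $\ell_2$ and $\ell_1$ times respectively would yield two distinct closed walks at $v$ of the common length $\ell_1\ell_2$, contradicting the principle above. Hence each vertex lies on at most one simple cycle, and a routine strong-connectivity argument further rules out both two disjoint simple cycles inside one SCC (by patching walks between them) and a vertex that lies in an SCC but off its simple cycle (the detour through such a vertex would produce a second closed walk at some cycle vertex). Consequently, the total number of edges contained in non-trivial SCCs is at most $n$.

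It remains to bound the cross edges between distinct SCCs. Here I would use the sharper consequence of unambiguity that, for any vertex $v$ with in-neighbors $u_1,\dots,u_d$, the length sets $L(u_i) = \{k : \text{there is a walk from } s \text{ to } u_i \text{ of length } k\}$ are pairwise disjoint, and symmetrically for out-neighbors. Since walk lengths inside a cycle SCC of length $\ell$ are periodic modulo $\ell$, too many entries into (or exits from) a cycle SCC would force two walks to agree modulo $\ell$ and therefore collide, so each non-trivial SCC has only a constant number of entry and exit vertices. Trivial-SCC vertices are handled analogously by combining disjointness of $L$- and $R$-sets with the trim condition (each set is non-empty). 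Charging cross edges to the endpoints' SCCs then yields a bound of the form $11n$, and combined with the at most $n$ internal edges we obtain $m \le 12n$.

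The main obstacle I anticipate is the last step: the bookkeeping needed to ensure the degree bounds survive the interplay between disjointness of the $L$- and $R$-sets and the modular periodicity imposed by cycle SCCs. One has to treat entries and exits of each cycle SCC jointly with the acyclic part of the condensation and distribute the counting so that the same cross edge is charged only once; the concrete constant $12$ emerges from this accounting rather than from any single sharp inequality.
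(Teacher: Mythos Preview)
Your first step---showing that every non-trivial SCC is a single simple cycle---is correct and matches the paper's \cref{lem:disjoint_cycles}. The gap is in the second step. The claim that ``each non-trivial SCC has only a constant number of entry and exit vertices'' is false: take a cycle $v_0,\dots,v_{\ell-1}$ of length $\ell \approx n/2$, edges $s\to v_j$ for every $j$, and a single exit $v_0\to t$. The $st$-walk lengths form $\ell$ pairwise disjoint arithmetic progressions (one per residue class modulo~$\ell$), so the graph is unambiguous, yet the cycle has $\ell$ entry vertices. More generally, the modular argument you sketch only limits the number of residue classes, not the number of entry edges or vertices; and for trivial-SCC vertices, pairwise disjointness of the $L(u_i)$ sets does not bound in-degree at all (a long path with every vertex pointing to $t$ gives $t$ in-degree $\approx n$ while the $L$-sets are the disjoint singletons $\{0\},\{1\},\dots$). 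So the local degree bounds you are aiming for simply do not hold, and no charging scheme built on them will produce the constant.

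The paper's proof takes a different route and avoids local degree arguments entirely. It first applies the transformation of \cref{lem:transformation}, which (at the cost of tripling the vertex count) rewires every cycle so that all entries and exits pass through a single \emph{cycle gate}; this uses, crucially, the second half of \cref{lem:disjoint_cycles} (no $st$-walk meets two cycles), which your outline does not invoke. After the transformation, removing the cycle edges yields a trim DAG $H$ with $|V(H)|\le 3n$ and $|E(H)|\ge m-3n$. The edge bound for $H$ is then obtained by a \emph{global} counting argument: one grows a spanning subgraph $H'$ of $H$ with $|E(H')|\le 2|V(H)|$, and observes that every edge of $H$ outside $H'$ creates a new $st$-walk; since an unambiguous DAG on $3n$ vertices has at most $3n$ $st$-walks (all lengths are $\le 3n$ and pairwise distinct), there are at most $3n$ extra edges. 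The $12n$ arises from this global bookkeeping, not from any per-vertex or per-SCC degree bound.
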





\paragraph{Towards proving~\cref{thm:NFA_unary}}
We first focus on the almost linear time algorithm of \cref{thm:NFA_unary}\;(a) that checks unambiguity of a unary NFA. The algorithms for unary \FFA and \PFA are quite simple and are given at the end of this section.
Observe that, in unary NFAs, a word with an accepting run is characterized by the length of the corresponding source-to-target walk in the underlying directed graph. Thus, the problem of \unambiguity for unary NFAs is equivalent to the following natural question.
\begin{description}
    \item \centering{\emph{Given a directed graph, are there two different source-to-target walks of the same length?}\footnote{Note that requiring disjointness of edges or vertices of these walks would render the problem NP-hard~\cite{Vygen95}.}}
\end{description}
An important observation is that if the given graph is acyclic, i.e.\ a DAG, then we can answer the above question in linear time. This is not an obvious fact as
the natural dynamic program that inductively computes the set of lengths of all $vt$-paths for each vertex $v$ may require quadratic time; see \cref{fig:counterexample}.
\begin{figure*}
    \centering
    \begin{tikzpicture}[
            thick,
        >=Stealth,
        every node/.style={circle, draw}
      ]
      
      \node (v1) at (0,0) {$v_1$};
      \node (v2) [right=of v1] {$v_2$};
      \node (v3) [right=of v2] {$v_3$};
      \node (v4) [right=of v3] {$v_4$};
      \node (v5) [right=of v4] {$v_5$}; 
      
      \draw[->] (v1) -- (v2);
      \draw[->] (v2) -- (v3);
      \draw[->] (v3) -- (v4);
      \draw[->] (v4) -- (v5);
      
      \draw[->] (v1.north) to[bend left=45] (v5);
      \draw[->] (v2.north) to[bend left=35] (v5);
      \draw[->] (v3.north) to[bend left=25] (v5);
      
      \end{tikzpicture}
    \caption{\small{
            Graph $G$ with $s = v_1$, $t = v_n$, and edges $(v_i, v_{i+1})$ and $(v_i, v_n)$ for $i < n$ (shown for $n = 5$). A naive dynamic program computing all $st$-walk lengths also computes all $vt$-walk lengths, leading to a quadratic blow-up. To avoid this, we observe that $v_{i+1}t$-walks can be succinctly represented via $v_it$-walks extended by 1, plus a new walk of length 1.
}}\label{fig:counterexample}
\end{figure*}
We show the linear time algorithm for DAGs in~\cref{sec:unambiguity_upper}. In the general case, we exploit this algorithm on acyclic graphs obtained from the input graph (essentially subgraphs, but slightly modified). This requires understanding the influence of cycles on the length of source-to-target walks. In fact, we relate this to the following number theoretic problem.
We call the sequence $(a+x\cdot b)_{x \in\mathbb{N}}$ the \emph{arithmetic
progression} with \emph{base} $a\in\mathbb{N}$ and \emph{step} $b\in\mathbb{N}$. 
\begin{description}
    \item[\disjointprogressions] Given pairwise distinct steps $1\le
        b_1<\dots<b_k$, and sets of bases $A_i\subseteq\{0,\dots,b_i-1\}$ for each $i \in [k]$, decide if the given arithmetic progressions $(a + x \cdot b_i)_{x \in \mathbb N}$ for $a \in A_i$ and $i \in [k]$ are disjoint. 
\end{description}
Note that the complement is asking if there are $i\neq j$ and $a_i\in A_i, a_j\in A_j, x_i,x_j\in\mathbb{N}$ such that $a_i+x_ib_i=a_j+x_jb_j$?
The time complexity of \disjointprogressions is measured in terms of $n = b_1 + \dots + b_k$.
We show that deciding whether a unary automaton is unambiguous is essentially equivalent to deciding \disjointprogressions. A quadratic reduction can be obtained using Chrobak's normal form \cite[Theorem~1]{To09}, but in our application it can be done more efficiently.

\begin{restatable}{proposition}{unaryproblemequivalence}
    \label{thm:equivalence}
    The \unambiguity problem for unary NFAs and the \disjointprogressions problem are linear time equivalent.
\end{restatable}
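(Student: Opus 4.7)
The plan is to give two reductions, one in each direction, each running in linear time in the input size.

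\textbf{From \disjointprogressions to \unambiguity.} Given input steps $b_1 < \cdots < b_k$ and bases $A_i \subseteq \{0, \ldots, b_i - 1\}$, I construct a unary NFA $\Aa$ with source $s$, target $t$, and, for each $i \in [k]$, a simple directed cycle $C_i$ of length $b_i$ on fresh vertices $w_{i, 0}, \ldots, w_{i, b_i - 1}$, together with an entry edge $s \to w_{i, 0}$ and, for each $a \in A_i$, an exit edge $w_{i, a} \to t$. A walk from $s$ to $t$ that goes through $C_i$ with $x \geq 0$ full loops and exits at $w_{i, a}$ has length $2 + a + x b_i$, and distinct triples $(i, a, x)$ give distinct walks. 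Hence $\Aa$ is unambiguous iff the input progressions (shifted uniformly by $2$) are pairwise disjoint. The construction has $\Oh(\sum_i b_i + \sum_i |A_i|) = \Oh(n)$ states and transitions.

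\textbf{From \unambiguity to \disjointprogressions.} Let $\Aa$ be a trim unary NFA with $n$ states and $m$ transitions. If $m > 12 n$, by \cref{lem:sparsity} the automaton is ambiguous, so I output a trivial non-disjoint instance; otherwise $m = \Oh(n)$. I compute the SCC decomposition in linear time and verify several necessary conditions for unambiguity, each in $\Oh(n)$ time, outputting a trivial non-disjoint instance on any failure: (i) every SCC is either a single vertex or a simple cycle (else two distinct cycles at the same state yield equal-length walks); (ii) in the SCC-DAG no two cyclic SCCs are comparable (else loop budgets $(b', 0)$ vs.\ $(0, b)$ on cycles of lengths $b, b'$ give two walks of length $L_0 + b b'$); (iii) the acyclic backbone is unambiguous as a DAG, checked via the linear-time DAG-unambiguity subroutine developed in \cref{sec:unambiguity_upper}. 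Under these conditions each $st$-walk visits at most one cyclic SCC, and for each cyclic SCC $S$ with cycle length $b_S$ and entry/exit sets $E_S, X_S$, every walk through $S$ has length $\ell_e + ((x - e) \bmod b_S) + \ell_x + y b_S$ for some $e \in E_S$, $x \in X_S$, $y \geq 0$, and backbone walk lengths $\ell_e, \ell_x$. The emitted \disjointprogressions instance has a step $b_S$ for each cyclic SCC $S$ and base set $A_S \subseteq \{0, \ldots, b_S - 1\}$ consisting of the residues $(\ell_e + (x - e) + \ell_x) \bmod b_S$ ranging over all admissible $(e, \ell_e, x, \ell_x)$; a repeated residue is detected on the fly and immediately triggers a non-disjoint output. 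A final $\Oh(n)$ pass checks that no purely acyclic $st$-walk length hits a cyclic progression.

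\textbf{Main obstacle.} The delicate part is keeping the reverse reduction strictly linear-time, since a priori the Minkowski sum of backbone-length residues at the entries and exits of $S$ could blow up. The key observation is a pigeonhole one: once the structural checks pass, any SCC $S$ for which more than $b_S$ admissible $(e, \ell_e, x, \ell_x)$ patterns occur must produce a residue collision and hence ambiguity, so the enumeration is capped at $\Oh(b_S)$ work per cyclic SCC, yielding $\Oh(\sum_S b_S) = \Oh(n)$ overall. The backbone-length data is read off by a single linear scan that exploits the sparsity bound $m = \Oh(n)$ together with condition (iii). Correctness ultimately rests on the classical fact that two infinite arithmetic progressions $(\ell + y b)_{y \geq 0}$ and $(\ell' + y' b')_{y' \geq 0}$ intersect iff $\ell \equiv \ell' \pmod{\gcd(b, b')}$, which is precisely what \disjointprogressions decides on the output $(b_S, A_S)_S$.
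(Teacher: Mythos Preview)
Your forward reduction is correct and coincides with the paper's \cref{lem:progressiontoambiguity}.

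For the reverse reduction, the architecture is right and mirrors the paper's, but the sentence ``the backbone-length data is read off by a single linear scan'' skips what is in fact the main technical obstacle. The DAG subroutine you invoke (\cref{thm:unary_amb_dag_algo}) returns $P_{vt}$ only for vertices of in-degree at least $2$; once you strip cycle edges, your entry and exit points of a cyclic SCC need not have this degree, so the subroutine does not hand you the sets $\{\ell_e\}$ and $\{\ell_x\}$ you need. The paper closes this gap with two devices absent from your sketch: (a) a preliminary transformation (\cref{lem:transformation}) that rewrites every cycle so it is entered and exited through a \emph{single} gate $v_C$, collapsing your quadruples $(e,\ell_e,x,\ell_x)$ to pairs $(\ell_1,\ell_2)$ at $v_C$; and (b) two auxiliary DAGs $H_1,H_2$ obtained by redirecting one cycle edge so that each gate has in-degree (respectively out-degree) at least $2$ without altering $v_Ct$-walk (respectively $sv_C$-walk) lengths, which is exactly what forces the DAG subroutine to return $P_{sv_C}$ and $P_{v_Ct}$. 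Your pigeonhole cap is sound once the lists $L_{\text{in}},L_{\text{out}}$ exist, but producing those lists in total linear time is the real work; the paper's $\Oh(n)$ bound on $\sum_C |P_{sv_C}|\cdot|P_{v_Ct}|$ rests on step~(a), since after the gate transformation each pair $(\ell_1,\ell_2)$ is a distinct $st$-walk in the acyclic $H$, of which an unambiguous $n$-vertex DAG has at most $n$.

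Two smaller points. First, \disjointprogressions demands pairwise \emph{distinct} steps $b_1<\cdots<b_k$, so two cyclic SCCs of equal length must be merged into one $(b_i,A_i)$; you emit one step per SCC. Second, your ``final $\Oh(n)$ pass'' for collisions between purely acyclic walk lengths and the cyclic progressions is correct in spirit but also needs a concrete mechanism; the paper uses a length-$n$ array and marks every value $a+xb_i\le n$.
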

\begin{SCfigure}
\centering
\begin{tikzpicture}[place/.style={circle,draw=black,inner sep=0pt,minimum
    size=7mm,very thick}]
\node[place] (v3) {};
\node[place,left = 1cm of v3] (v1) {};
\node[place,right = 1cm of v3] (v4) {};
\node[place,above left = 1cm and 1.5cm of v1] (s) {$s$};
\node[place,above right = 1cm and 1.5cm of v4] (t) {$t$};
\node[place,above = 1.75cm of v3] (v2) {};
\node[place,above right = 0.7cm and 0.4cm of v2] (d1) {};
\node[place,above left = 0.7cm and 0.4cm of v2] (d2) {};
\node[place,right = 1cm of v4] (d) {};

\path[->,thick,>=stealth]
(s) edge (v1)
(s) edge (v2)
(v1) edge (v2)
(v1) edge (v3)
(v3) edge (v4)
(v4) edge (t)
(v2) edge (t)
(v4) edge[bend left] (d)
(d) edge[bend left] (v4)
(v2) edge (d1)
(d1) edge (d2)
(d2) edge (v2)
(s) edge[dashed] (t)
;
 \end{tikzpicture}
\caption{\small{
        Intuition behind the equivalence of the \unambiguity and
        \disjointprogressions problems. Graphs with nested or multiple cycles on
        a walk violate unambiguity, so we assume each $st$-walk passes through
        at most one cycle. In the example, the \unambiguity instance reduces to
        a \disjointprogressions instance with $k = 2$, $b_1 = 2$, $b_2 = 3$,
        $A_1 = \{0\}$, $A_2 = \{0,2\}$. Here, $b_i$ are cycle lengths, and $A_i$ are
        $st$-walk lengths modulo $b_i$. Walks outside cycles (dashed arrow) are
        handled separately via a DAG algorithm, which also computes the $b_i$
and $A_i$.}}\label{fig:disjoint_unambiguity}

\end{SCfigure}

\cref{sec:unambiguity_upper} is dedicated to prove the above equivalence.  See \cref{fig:disjoint_unambiguity} for an intuition.
We now discuss how to solve \disjointprogressions. 
Consider the following characterisation of disjointness of arithmetic progressions (proven in \cref{sec:proofs_Sec2}).

\begin{claim}\label{claim:gcd2}
    For any integers $a, b, c, d >0$, the arithmetic progressions $(a + x \cdot b)_{x \in \N}$ and $(c+ x\cdot d)_{x \in \N}$ are disjoint if and only if 
    $(a \bmod p) \neq (c \bmod p)$ where $p = \gcd(b, d)$.
\end{claim}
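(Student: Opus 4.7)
The plan is to prove both directions by translating the intersection question into a linear Diophantine equation. Let $p = \gcd(b,d)$. Note that the progressions intersect if and only if there exist $x,y \in \N$ with $a + xb = c + yd$, equivalently $xb - yd = c-a$.

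\emph{Forward (contrapositive) direction.} Suppose the progressions intersect, so $xb - yd = c-a$ has a solution $(x,y) \in \N^2$. Since $p \mid b$ and $p \mid d$, the left-hand side is divisible by $p$, hence $p \mid (c-a)$, i.e.\ $a \equiv c \pmod{p}$.

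\emph{Reverse direction.} Suppose $a \equiv c \pmod{p}$, so $p \mid (c-a)$. By B\'ezout's identity there exist integers $u,v$ with $ub - vd = p$; scaling by $(c-a)/p$ produces integers $(x_0, y_0) \in \Z^2$ with $x_0 b - y_0 d = c-a$. The general integer solution is then
\[
(x,y) = \bigl(x_0 + t \cdot \tfrac{d}{p},\; y_0 + t \cdot \tfrac{b}{p}\bigr), \qquad t \in \Z,
\]
which I would verify by a direct substitution showing the shift cancels out. Since $b/p$ and $d/p$ are both strictly positive, choosing $t$ large enough makes both coordinates non-negative simultaneously, producing witnesses $x,y \in \N$ to the intersection of the two progressions.

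The only subtle step is the simultaneous non-negativity of $x$ and $y$ after shifting; this works precisely because both shift coefficients $b/p$ and $d/p$ have the same sign. Everything else is routine linear Diophantine manipulation.
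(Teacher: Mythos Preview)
Your proof is correct and takes essentially the same approach as the paper: both directions use divisibility by $p$ for one implication and B\'ezout's identity plus a shift to non-negative coefficients for the other. The only cosmetic difference is that the paper achieves non-negativity by adding a multiple of $bd$ to both sides of $a + (\ldots)b = c + (\ldots)d$, whereas you parametrize the full solution set and take $t$ large; these are the same trick in slightly different clothing.
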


\cref{claim:gcd2} yields the following equivalent formulation of the \disjointprogressions question: Do we have $(a_i \bmod p) \neq (a_j \bmod p)$, where $p = \gcd(b_i, b_j)$, for all $i \neq j$, $a_i \in A_i$, $a_j \in A_j$?
To find collisions the idea is to compute for every distinct pair $i, j \in \rng{1}{k}$ the set $A_i^{(p)}\coloneqq\{a_i \bmod p\mid a_i \in A_i, p = \gcd(b_i, b_j)\}$. Then it suffices to check disjointness between the sets $A_i^{(p)}$ and $A_j^{(p)}$.
In fact, for every $i \in \rng{1}{k}$ we compute the set $A_i^{(d)} = \set{a_i \bmod d \mid a_i \in A_i}$ for \emph{every divisor $d$ of $b_i$}.
Then we have, in particular, computed the sets $A_i^{(\gcd(b_i, b_j))}$ for
all $j \in \rng{1}{k}$. The following lemma shows that this can be done efficiently.

\begin{lemma}\label{lem:precompute}
    Let $N >0$ be an integer and consider a set $A \subseteq \rng{0}{N-1}$. 
    In time $\Oh(N \log^2{N})$, we can construct all sets
	$
        A^{(d)} \coloneqq \left\{ a \bmod d \mid a \in A \right\} \subseteq
        \rng{0}{d-1},
    $
    for every divisor $d \in \N$ of $N$. 
\end{lemma}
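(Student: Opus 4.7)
The plan is to compute all the sets $A^{(d)}$ by descending the divisor lattice of $N$, exploiting the identity $(a \bmod d') \bmod d = a \bmod d$ valid whenever $d \divides d'$. Hence, if $A^{(d')}$ is already known for some multiple $d'$ of $d$, then $A^{(d)} = \{x \bmod d : x \in A^{(d')}\}$ can be computed in time $O(|A^{(d')}|)$ by reducing each element of $A^{(d')}$ modulo $d$.

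Concretely, I would first factorise $N$ by trial division in $O(\sqrt{N})$ time and enumerate its $\tau(N)$ divisors from the factorisation. Then I would organise the divisors into a rooted tree with root $N$, making each proper divisor $d < N$ a child of $d \cdot p$, where $p$ is the smallest prime factor of $N/d$; this parent is itself a divisor of $N$ and the parent/child ratio is prime. Starting from $A^{(N)} = A$, I would process the divisors in top-down order, each time deriving $A^{(d)}$ from $A^{(\mathrm{parent}(d))}$ by the reduction above. To store $A^{(d)}$ and deduplicate in constant time per insertion, I would allocate a Boolean array of length $d$ per divisor (total auxiliary storage $\sigma(N)$).

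For the running time, observe that every divisor $d' \divides N$ is the parent of at most $\omega(d') \leq \log_2 N$ other divisors (at most one per prime factor of $d'$, since different $p$ yield different children $d'/p$), and producing a child from $d'$ costs $O(|A^{(d')}|) \leq O(d')$. Summing over all parent--child edges,
\[
\text{total work}\;\leq\;\sum_{d' \divides N} \omega(d') \cdot d'\;\leq\; \log_2 N \cdot \sigma(N)\;=\; O(N \log^2 N),
\]
where I used the elementary bound $\sigma(N) = N \sum_{d \divides N} 1/d \leq N \cdot H_N = O(N \log N)$ (with $H_N$ the harmonic sum).

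The only delicate point is to arrange the parent assignment so that each node has at most $\omega(d') \leq \log_2 N$ children, which is precisely what the "smallest prime factor of $N/d$" rule achieves. Once this is in place, the analysis reduces to the two elementary number-theoretic bounds $\omega(N) \leq \log_2 N$ and $\sigma(N) = O(N \log N)$, and correctness is a straightforward induction on depth in the tree using the modular identity above.
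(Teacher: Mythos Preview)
Your proposal is correct and follows essentially the same approach as the paper: both descend the divisor lattice of $N$ one prime at a time, using the identity $(a \bmod d') \bmod d = a \bmod d$ for $d \mid d'$, then bound the total work by (number of children per divisor) $\times\, \sigma_1(N)$, invoking $\sigma_1(N) = O(N\log N)$ and the $O(\log N)$ bound on children to reach $O(N\log^2 N)$. The only cosmetic difference is that the paper works with the full Hasse diagram (each divisor may be recomputed from several parents, which the bound absorbs), whereas you fix a spanning tree via the smallest-prime rule; this avoids redundant recomputation but does not change the analysis or the final bound.
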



\begin{proof}
    Compute the prime factorisation $N = \prod_{i=1}^k p_i^{\alpha_i}$, where $p_i$ are distinct primes and $\alpha_i \in \N\setminus\{0\}$, e.g.\ using the
    sieve of Eratosthenes, which works in $\Ot(\sqrt{N})$ time.
    All the divisors of $N$ are of the form $\prod_{i=1}^k p_i^{\beta_i}$, where $0 \le \beta_i \le \alpha_i$. 
    Thus, we can identify them with vectors $\vec{\beta} = (\beta_1,\ldots,\beta_{k}) \in \N^k$ such that $\beta_i\le \alpha_i$ for all $i\in\rng 1k$. In particular, let $\vec{\alpha} = (\alpha_1,\ldots,\alpha_k)$.
    We denote by $d(\vec{\beta})$ the divisor of $N$ corresponding to $\vec{\beta}$. 
    Let $\vec{\beta}, \vec{\gamma} \in \N^k$ be two such vectors. We say that $\vec{\gamma}$ is a child of $\vec{\beta}$ if $\vec{\beta} - \vec{\gamma} = \vec{e}_i$ 
	for some $i \in \rng{1}{k}$, where $\vec{e}_i$ is the unit vector of dimension $k$ with value $1$ in the $i$-th entry and zero otherwise.
    Note in particular that $d(\vec{\gamma}) \mid d(\vec{\beta})$.

    Further observe that, except for $\vec{\alpha}$, every divisor is a child of at least one other divisor. Hence, we can compute $A^{(d)}$ for all divisors $d$ of $N$ recursively as we now describe. 
    For $d = N$, we have $A^{(N)} = A$.
    Suppose that we have computed the set $A^{(d(\vec{\beta}))}$ for some $\vec{\beta} \leq \vec{\alpha}$. Let $\vec{\gamma} \in \mathbb N^k$ be a child of $\vec{\beta}$. For any $x, y, z \in \N$ such that $x \mid y$, we have $z \bmod x = (z \bmod y) \bmod x$. Since $d(\vec{\gamma}) \mid d(\vec{\beta})$, we can thus compute $A^{(d(\vec{\gamma}))}$ by considering every $z \in A^{(d(\vec{\beta}))}$ and computing the remainder modulo $d(\vec{\gamma})$. This takes $\Oh(|A^{(d(\vec{\beta}))}|) = \Oh(d(\vec{\beta}))$ time. 
    Since every divisor has at most $k = \Oh(\log N)$ children, all sets $A^{(d)}$ for $d$ a divisor of $N$ are computed in time bounded by 
    \begin{displaymath}
        \Ot(\sqrt{N}) + \Oh \bigg(k \cdot \sum_{\vec{\beta} \le \vec{\alpha}} d(\vec{\beta})
        \bigg) = \Oh(\sigma_1(N) \log{N})
    \end{displaymath}
    where $\sigma_1$ is the standard \emph{sum of divisor function} $\sigma_1(n) \coloneq \sum_{d=1}^n d \cdot \iverson{d \mid n}$. 
    In \cref{sec:proofs_Sec2}, we provide the following bound\footnote{The asymptotics of $\sigma_1$ have been studied in detail, e.g.\ in the context of the Riemann hypothesis, with tighter bounds (essentially $\Oh(n \log\log n)$, see~\cite{hardy1979introduction}). For our purposes, the easier bound in \cref{lemma:sumofdiv} is sufficient.
    }.
    \begin{claim}\label{lemma:sumofdiv}
            We have
            $\sigma_1(n) = \Oh(n \log{n})$.
    \end{claim}
    Thus, we can bound the above running time by $\Oh(N \log^2{N})$.
\end{proof}

To solve the \disjointprogressions problem, it suffices to precompute the set of
bases $A_i$ using \cref{lem:precompute} and then compute the intersection of the obtained sets.

\begin{lemma}\label{thm:ap-gcd-alg}
    There is an algorithm deciding \disjointprogressions
    in time 
    \begin{displaymath}
        \Oh\bigg( n \log^2{n} + \sum_{1 \leq i,j \leq k} \gcd(b_i,b_j) \bigg)
    \end{displaymath}
    where $n = b_1 + \dots + b_k$ is the sum of the given steps.
\end{lemma}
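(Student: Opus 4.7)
The plan is to reduce the problem to a family of pairwise disjointness tests on projected sets via the gcd characterisation from \cref{claim:gcd2}, and then carry out each test using direct-address lookup. First, apply \cref{lem:precompute} to each set $A_i \subseteq \{0,\dots,b_i-1\}$ with $N = b_i$. This produces, for every divisor $d$ of $b_i$, the projected set $A_i^{(d)} = \{a \bmod d \mid a \in A_i\}$ in time $\Oh(b_i \log^2 b_i)$. Summing over $i$, the preprocessing costs $\Oh(n \log^2 n)$.

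By \cref{claim:gcd2}, two progressions $(a + x b_i)_{x \in \N}$ with $a \in A_i$ and $(c + x b_j)_{x \in \N}$ with $c \in A_j$, for $i \neq j$, intersect iff $a \equiv c \pmod p$, where $p = \gcd(b_i, b_j)$. Within a fixed index $i$, distinct elements of $A_i \subseteq \{0,\dots,b_i-1\}$ are pairwise non-congruent modulo $b_i$, so progressions sharing step $b_i$ are automatically disjoint. Hence the whole family is pairwise disjoint iff for every pair $i < j$ the sets $A_i^{(p)}$ and $A_j^{(p)}$ are disjoint subsets of $\{0,\dots,p-1\}$, where $p = \gcd(b_i, b_j)$. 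For each such pair, I would compute $p$ by the Euclidean algorithm in $\Oh(\log n)$ time, then mark the elements of $A_i^{(p)}$ in a boolean array indexed by $\{0,\dots,p-1\}$, scan $A_j^{(p)}$ for a marked entry, and finally reset only the previously marked cells. Each pair costs $\Oh(|A_i^{(p)}| + |A_j^{(p)}|) = \Oh(\gcd(b_i,b_j))$.

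Since $b_1 < \dots < b_k$ are distinct positive integers, $n \ge 1 + 2 + \dots + k = k(k+1)/2$, so $k = \Oh(\sqrt n)$; the cumulative cost $\Oh(k^2 \log n)$ of all Euclidean computations is $\Oh(n \log n)$ and is absorbed into the preprocessing term. Summing the three contributions yields the claimed bound $\Oh(n\log^2 n + \sum_{i,j}\gcd(b_i,b_j))$. The main subtlety is making each pairwise test truly linear in $\gcd(b_i, b_j)$: a hash-based disjointness test would introduce an extra logarithmic factor, so I use a single direct-address array of size $n$ that is reused across iterations and cleared only at previously marked positions, keeping the per-pair cost to $\Oh(\gcd(b_i, b_j))$ without hidden overhead.
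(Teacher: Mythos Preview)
Your proof is correct and follows essentially the same approach as the paper: precompute all $A_i^{(d)}$ via \cref{lem:precompute}, then for each pair $i\neq j$ test disjointness of $A_i^{(p)}$ and $A_j^{(p)}$ with $p=\gcd(b_i,b_j)$ using \cref{claim:gcd2}. You add a few details the paper leaves implicit (the cost of the Euclidean computations, the direct-address array for the intersection test, and the observation that same-step progressions are automatically disjoint), but the argument is the same.
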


\begin{proof}
    Consider an instance of \disjointprogressions $(b_1, A_1), \dots, (b_k, A_k)$ where $1 \leq b_1 < \dots < b_k$, $A_i \subset \{0, \dots, b_i-1\}$ for each $i \in [k]$ and let $n \coloneq b_1 + \dots + b_k$. 
    For each $i\in\rng{1}{k}$, we apply \cref{lem:precompute} to each set $A_i \subset \rng{0}{b_i-1}$ to obtain the set $A_i^{(d)}$ for all divisors $d$ of $b_i$ in $\Oh(\sum_{i=1}^k b_i \log^2{b_i}) = \Oh(n \log^2{n})$ time. 

    Observe that we constructed in particular the sets $A_i^{(p)}$ with $p = \gcd(b_i, b_j)$ for all $i, j \in \rng{1}{k}$.
    By \cref{claim:gcd2}, in order to decide \disjointprogressions, it suffices to verify whether for each pair of distinct $i, j \in \rng{1}{k}$ we have $A_i^{(p)} \cap A_j^{(p)} = \emptyset$ for $p = \gcd(b_i, b_j)$. For any pair, this can be done in time $\Oh(|A_i^{(p)}| + |A_j^{(p)}|) = \Oh(p) = \Oh(\gcd(b_i, b_j))$. 
    This bounds the total running time by
    \begin{displaymath}
        \Oh\Bigg(n\log^2{n} + \sum_{1 \leq i,j \leq k} \gcd(b_i,b_j)\Bigg).\qedhere
    \end{displaymath}
\end{proof}

We show that the algorithm in \cref{thm:ap-gcd-alg} runs in time $n^{1 + o(1)}$ by bounding the sum of greatest common divisors in \cref{lem:sum-gcd}. 

\begin{restatable}{lemma}{sumgcd}
    \label{lem:sum-gcd}
    Let $A$ be a set of positive integers whose sum is equal to
    $N$. Then we have
    \begin{displaymath}
        \sum_{a,b \in A} \gcd(a,b) \le N^{1+o(1)}.
    \end{displaymath}
\end{restatable}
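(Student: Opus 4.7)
My plan begins with the classical identity $\gcd(a,b) = \sum_{d \mid \gcd(a,b)} \phi(d)$ (Euler's totient), which yields
\[
\sum_{a,b\in A}\gcd(a,b) \;=\; \sum_{d\geq 1}\phi(d) f(d)^2, \qquad f(d):=|\{a\in A : d \mid a\}|.
\]
Three ingredients enter next. First, \emph{distinctness}: the $f(d)$ multiples of $d$ in $A$ are distinct positive integers, so they are at least $d,2d,\ldots,f(d) d$ and hence $d f(d)(f(d)+1)/2\leq N$, yielding the pointwise bound $f(d)\leq\sqrt{2N/d}$. Second, the \emph{global identity} $\sum_d \phi(d) f(d) = \sum_{a\in A} a = N$. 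Third, the \emph{divisor bound} $\tau(n)\leq n^{o(1)}$, which combined with $|A|\leq\sqrt{2N}$ (distinct positive integers summing to $N$) gives $|\{d : f(d)\geq 1\}|\leq\sum_{a\in A}\tau(a)\leq N^{1/2+o(1)}$.

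I would then dyadically partition the relevant divisors as $D_k := \{d : 2^k \leq f(d) < 2^{k+1}\}$. The global identity yields $\sum_{d\in D_k}\phi(d)\leq N/2^k$, while distinctness forces $d\leq 2N/4^k$ on $D_k$, giving also $\sum_{d\in D_k}\phi(d) = O(N^2/16^k)$ via the elementary $\sum_{d\leq X}\phi(d)=O(X^2)$. Using the per-block bound $\phi(d) f(d)^2 \leq 4^{k+1}\phi(d)$ together with the $\sigma_1$-estimate $\sum_{a\in A}\sigma_1(a) = O(N\log N)$ from \cref{lemma:sumofdiv}, and a careful balancing of dyadic levels, I expect to reach the target $N^{1+o(1)}$.

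The main obstacle is obtaining this sharp bound: naive dyadic estimates stall at $O(N^{4/3})$. Pushing down to $N^{1+o(1)}$ requires a delicate interpolation between the pointwise bound $f(d)\leq\sqrt{2N/d}$ and the global identity $\sum_d\phi(d)f(d)=N$, together with essential use of the refined estimate $\sigma_1(n)=O(n\log n)$ rather than the crude $\sigma_1(n)=O(n^{1+o(1)})$. An alternative angle would be to bound $\sum_{a\in A} \sum_{b\in A} \gcd(a,b) = \sum_{a\in A} \sum_{d\mid a}\phi(d) f(d)$ by interleaving $f(d)\leq N/d$ with $\tau(a)\leq a^{o(1)}$ and the $\sigma_1$-estimate, which is the angle I expect to be the cleanest path to the final $N^{1+o(1)}$.
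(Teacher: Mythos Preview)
Your setup via $\sum_{a,b}\gcd(a,b)=\sum_d \phi(d)f(d)^2$ is correct, and so is your diagnosis that the dyadic split stalls at $N^{4/3}$: balancing $\sum_{D_k}\phi(d)\le N/2^k$ against $\sum_{D_k}\phi(d)=O(N^2/16^k)$ gives exactly that. The gap is that you never close this: the ``delicate interpolation'' is not specified, and your alternative angle does not work as written. If you plug $f(d)\le N/d$ into $\sum_{a}\sum_{d\mid a}\phi(d)f(d)$ you get $N\sum_a\sum_{d\mid a}\phi(d)/d\le N\sum_a \tau(a)\le N^{3/2+o(1)}$, and the $\sigma_1$-estimate $\sum_d d\,f(d)=\sum_a\sigma_1(a)=O(N\log N)$ only reproduces the bound $\sum_{D_k}\phi(d)\le \sum_{D_k}d = O(N\log N/2^k)$, which is Bound~1 up to a log. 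None of the listed ingredients breaks the $N^{4/3}$ barrier.

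What you are missing is far simpler than any interpolation: the paper drops the structure of $A$ entirely on the inner sum. For fixed $a$, it bounds $\sum_{b\in A,\,b\le a}\gcd(a,b)\le\sum_{i=1}^a\gcd(a,i)\le a\,\sigma_0(a)$ (the last step is one line: $\sum_{i\le a}\gcd(a,i)\le\sum_{d\mid a}d\cdot\lfloor a/d\rfloor\le a\,\sigma_0(a)$). Then $\sum_{a,b}\gcd(a,b)\le 2\sum_a a\,\sigma_0(a)\le 2N^{o(1)}\sum_a a=N^{1+o(1)}$, using only $\sigma_0(a)\le a^{o(1)}\le N^{o(1)}$. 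In your language, the missing move is to split $f(d)=f_{\le a}(d)+f_{>a}(d)$ and use the \emph{local} bound $f_{\le a}(d)\le a/d$ rather than the global $f(d)\le N/d$; this immediately gives $\sum_{d\mid a}\phi(d)f_{\le a}(d)\le a\tau(a)$ and the result follows by symmetry. No dyadic decomposition, no $\sigma_1$, no interpolation is needed.
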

We are not aware of any explicit bounds on the sum of~\cref{lem:sum-gcd} in the literature. We have found some related results that can be used to obtain slightly better bounds (see~\cref{rem:galsums}). However, these are technically involved; we provide an elementary proof of \cref{lem:sum-gcd} in \cref{sec:proofs_Sec2}.
To prove our conjecture that our algorithm works in time $\Ot(n)$ it suffices to improve this bound to $\sum_{i,j} \gcd(x_i,x_j) = \Ot(n)$.
It is known that the answer is positive in the special case of $x_i = i$~\cite{haukkanen2004lp}, but in the general case we leave this as an open problem.

\begin{remark}\label{rem:galsums}
	The factor $N^{o(1)}$ in~\cref{lem:sum-gcd} is 
	$N^{\Oh(1/\log{\log{N}})}$. 
    The bound
	in~\cref{lem:sum-gcd} can be improved to 
	\begin{equation*}
		\sum_{a,b \in A} \gcd(a,b) \le N \exp\left( \Oh\left( \sqrt{\frac{\log N
		\log\log\log N}{\log\log N}}\right)\right)
	\end{equation*}
	by using bounds on the weighted G\'al sums (e.g. \cite[Theorem
	1]{walker2020gcd} and \cite{bondarenko2015gcd}). 
\end{remark}

Finally, we can prove \cref{thm:NFA_unary} using all above stated results. Note that the core of the algorithm lies on the equivalence with the \disjointprogressions problem~\cref{thm:equivalence}, which we show in \cref{sec:unambiguity_upper}. All other statements are proven in \cref{sec:proofs_Sec2}.

\begin{proof}[Proof of \cref{thm:NFA_unary}]
    Let $\mc A$ be a unary NFA with $n$ states and $m$ transitions, and denote by $G$ the underlying directed graph. 
    We can preprocess $G$ in $\Oh(n+m)$ time such that it is trim (see~\cref{claim:trim}) and has a single source $s \in V(G)$ and single target $t \in V(G)$ (see~\cref{claim:normal}). 
    By \cref{thm:equivalence} and \cref{lem:sparsity} we can, in $\Oh(n)$ time, either decide whether
    $G$ is unambiguous, or construct an instance $(b_1, A_1), \dots, (b_k, A_k)$ of \disjointprogressions with $b_1 + \dots + b_k \leq n$ such that $G$ is not unambiguous if and only if the arithmetic progressions are disjoint. 
    By \cref{lem:sum-gcd,thm:ap-gcd-alg}, the latter is decidable in $n^{1 + o(1)}$ time.
    Note that if we assume trimness, then we do not need to apply \cref{claim:trim}. Thus, the total running time to decide \unambiguity on unary NFAs is $\Oh(n^{1+o(1)} + m)$ in general, and $n^{1 + o(1)}$ if the given NFA is trim.

    To prove \cref{thm:NFA_unary}\;(b), we use the equivalences stated in \cref{lem:NFA_equivalence} between \FFA and \ida, and between \PFA and \eda. 
    Consider the decomposition of $G$ into strongly connected components.
    Note that \eda is equivalent to deciding whether there is a component with multiple simple cycles, which is clearly linear time decidable.
    Furthermore, \eda implies \ida, so assume that \eda does not hold. We show below that \ida holds if and only if there is a walk between two nontrivial components. This  concludes the proof as verifying this property is easy using the decomposition.

    First assume \ida. Since \eda does not hold,  we can see that no witness for \ida can occur in just one component, yielding the walk between components.
    For the converse, let $v$ and $w$ be vertices in two different nontrivial components. Let $a_v, a_w > 0$ be the lengths of arbitrary simple cycles starting in $v$ and in $w$, respectively, and let $b > 0$ be the length of a walk 
    from $v$ to $w$. Let $w = w_0, w_1 \dots, w_{a_w -1}, w_{a_w} = w$ be vertices of the simple cycle starting in $w$.
    We define $c = a_v \cdot a_w \cdot x$, where $x \in \N$ is such that $c > b$.
    Observe that there are cycles of length $c$ starting in each vertex $v, w_0, w_1,\ldots, w_{a_w-1}$ (by walking through the simple cycles of length $a_v$ or $a_w$). 
    Since $c > b$, there is an $i \in \rng{0}{a_w-1}$ such that the vertex $w_i$ is reachable from $v$ by a walk of length $c$. The vertices $v$ and $w_i$, the length $c$ of the walk from $v$ to $w_i$, and the considered cycles starting at $v$ and $w_i$ witness \ida.
\end{proof}

\paragraph{Proof sketch of \cref{thm:WFA_unary}}
Recall that deciding whether an unambiguous weighted automaton is determinisable is equivalent to checking if it has the \twins property (see~\cref{lem:WFA_equivalence}). 
In \cref{sec:twins_unary}, we will first reduce the \twins problem in the unary case to the following graph question:
\begin{description}
    \item \centering{\emph{Given a weighted directed graph, are there two cycles of different average weight?}}
\end{description}
By scaling the weights, we show that this task can in turn be reduced to detecting a cycle of nonzero weight. Finally, since any cycle is contained in a single strongly connected component, we can use the decomposition of the graph into strongly connected components to solve the problem in linear time~\cite{tarjan72}. The complete proof is given in \cref{sec:twins_unary}.


\section{Preliminaries}\label{sec:preliminaries}
Throughout this  paper, $\N$ denotes the set of nonnegative integers 
and for any $a,b \in \N$ with $a\le b$ we use the abbreviation $\rng{a}{b} =
\{a,a+1,\ldots,b\}$.
We use the Iverson bracket notation for the characteristic function; i.e.\ for any logical expression $B$, the value of $\iverson{B}$ is $1$ if $B$
is true and $0$ otherwise. 
$\uplus$ denotes the disjoint union of sets.
For two integers $a,d$, we use $d \divides a$ to denote that $d$ divides $a$, and we denote the
greatest common divisor of $a$ and $b$ by $\gcd(a,b)$. 
The remainder of the division of $a$ by $d$ (which lies in
$\rng{0}{d-1}$) is denoted $a \bmod d$, and we use $a
\equiv_d b$ or $a \equiv b \, (\bmod\, d)$ to say $d \divides (a-b)$. 

\subsection{Finite automata}
A \emph{nondeterministic finite automaton} (NFA) is a tuple $\mathcal{A} = (Q, \Sigma, \delta, I, F)$, 
where: $Q$ is a finite set of states, $\Sigma$ is a finite alphabet, $I, F \subseteq Q$ are the sets of initial and final states, respectively, and $\delta \subseteq Q \times \Sigma \times Q$ is the set of transitions.  
We say that an NFA is deterministic (a DFA) if $|I| = 1$ and, for every $p \in Q$ and $a \in \Sigma$, there is at most one $q$ such that $(p,a,q) \in \delta$.
The \emph{size} of $\mc A$, denoted $|\mc A|$, is defined as $|Q| + |\Sigma| + |\delta|$.

A \emph{run} of an automaton $\mathcal{A}=(Q, \Sigma, \delta, I, F)$ on a word $w = w_1 \ldots w_n \in \Sigma^*$ from state $p$ to state $q$ is a sequence of 
transitions $\rho = t_1, \dots, t_n$ such that there exist states $p_0, p_1, \dots, p_n \in Q$ with $p_0 = p$, $p_n = q$ and $t_i = (p_{i-1}, w_i, p_i) \in \delta$ for all $i \in \rng{1}{n}$. 
We say that the states $p_0, \dots, p_n$ are \emph{visited} by $\rho$.
If a run is non-empty and $q_0=q_n$, then we call it a \emph{cycle}.
The \emph{length} of $\rho$ is $|\rho| = n$. 
We say that $\rho' = t_i \ldots t_j$, where $1 \leq i \leq j \leq n$, is a \textit{subrun} of $\rho$.
If $q \in I$ and $q \in F$, then we say that $\rho$ is \emph{accepting} and that $\mc A$ \emph{accepts} (or \emph{recognizes}) $w$. 
%

Given an automaton $\mc A$ and a word $w \in \Sigma^*$ we write $\runs_{\mc A}(w)$ for the number of \emph{accepting} runs of $\mc A$ on $w$. If $\mc A$ is clear from the context we will simply write $\runs(w)$. We define the following classes of automata, each providing different guarantees on how $\runs(w)$ can be bounded.

\begin{definition}\label{definition:ambiguity}
Let $\mathcal{A} = (Q, \Sigma, \delta, I, F)$ be an NFA. We say that:
\begin{itemize}
 \item $\mc A$ is \emph{unambiguous} if $\runs(w) \le 1$ for all $w \in \Sigma^*$;
 \item $\mc A$ is \emph{finitely ambiguous} if there is $c \in \N$ such that $\runs(w) \le c$ for all $w \in \Sigma^*$;
 \item $\mc A$ is \emph{polynomially ambiguous} if there is a polynomial $p$ such that $\runs(w) \le p(|w|)$ for all $w \in \Sigma^*$. 
\end{itemize}
\end{definition}

We sometimes write an exponentially ambiguous NFA to mean any NFA (e.g.\ to emphasise that there is an exponential bound on the number of accepting runs). See \cref{fig:ambiguity} for example automata with different ambiguity.

An NFA $\mathcal{A} = (Q, \Sigma, \delta, I, F)$ is called \emph{trim} if for every state $q \in Q$ there is an initial state $q_0 \in I$ and a final state $f \in F$ with a run from $q_0$ to $q$ and a run from $q$ to $f$. 
The language $\mc L(\mc A)$ of an automaton $\mc A$ is the set of words accepted by $\mc A$. We say that two automata $\mc A_1$ and $\mc A_2$ are \emph{equivalent} if $\mc L({\mc A_1}) = \mc L({\mc A_2})$.
More refined, two automata $\mc A_1$ and $\mc A_2$ are \emph{parsimoniously
equivalent} if they have the same number of accepting runs for any given word, that is, if $\runs_{\mc A_1}(w) = \runs_{\mc A_2}(w)$ for every $w\in\Sigma^\ast$.
Note that parsimonious equivalence implies equivalence. 
The following standard claim is proven by two applications of a depth-first search algorithm pruning all unreachable and co-unreachable states.
\begin{observation}\label{claim:trim}
Given an NFA $\mathcal{A} = (Q, \Sigma, \delta, I, F)$, one can compute a
parsimoniously equivalent trim NFA $\mathcal{B}$ in time $\Oh(|{\mc A}|)$.
Moreover, $\mc B$ is obtained from $\mc A$ by removing all unreachable and co-unreachable states and their incident transitions. 
\end{observation}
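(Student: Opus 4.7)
The plan is a standard two-pass graph search, combined with a short argument that trimming preserves the exact set of accepting runs.

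First I would compute the set $R\subseteq Q$ of states reachable from some initial state in $I$. This is done by a single BFS/DFS on the directed graph $(Q,E)$ where $E=\{(p,q)\mid \exists a\in\Sigma\colon (p,a,q)\in\delta\}$, starting from all sources in $I$ simultaneously. Constructing the adjacency list representation of $(Q,E)$ takes $\Oh(|Q|+|\delta|)$ time, and the traversal itself is linear in the size of this graph, so the whole pass runs in $\Oh(|\mc A|)$ time. Symmetrically, I would compute the set $C\subseteq Q$ of states from which some final state in $F$ is reachable by doing a BFS/DFS on the reverse graph starting from $F$, again in $\Oh(|\mc A|)$ time. Finally, I construct $\mc B=(Q',\Sigma,\delta',I\cap Q',F\cap Q')$ where $Q'=R\cap C$ and $\delta'=\delta\cap(Q'\times\Sigma\times Q')$, by scanning $Q$ and $\delta$ once and discarding any element that mentions a state outside $Q'$. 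This is clearly $\Oh(|\mc A|)$ and produces a trim automaton.

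It remains to verify parsimonious equivalence, i.e.\ that $\runs_{\mc A}(w)=\runs_{\mc B}(w)$ for every $w\in\Sigma^\ast$. Since $\delta'\subseteq\delta$ and $I\cap Q'\subseteq I$, $F\cap Q'\subseteq F$, every accepting run of $\mc B$ is literally an accepting run of $\mc A$, so $\runs_{\mc B}(w)\le\runs_{\mc A}(w)$. For the other direction, let $\rho=t_1,\dots,t_n$ be any accepting run of $\mc A$ on $w$, visiting states $p_0,p_1,\dots,p_n$ with $p_0\in I$ and $p_n\in F$. For every $i$, the prefix $t_1,\dots,t_i$ witnesses that $p_i$ is reachable from $p_0\in I$, so $p_i\in R$; the suffix $t_{i+1},\dots,t_n$ witnesses that $p_n\in F$ is reachable from $p_i$, so $p_i\in C$. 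Hence every visited state lies in $Q'=R\cap C$, every transition $t_i$ lies in $\delta'$, and $\rho$ is also an accepting run of $\mc B$. Together this gives $\runs_{\mc A}(w)=\runs_{\mc B}(w)$.

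The argument has no real obstacle; the only point worth being careful about is that parsimonious equivalence is a stronger property than language equivalence and must be verified at the level of individual runs rather than only witnessing one accepting run per accepted word. The observation above (every visited state of an accepting run is simultaneously reachable and co-reachable) handles exactly this, so the standard linear-time trimming procedure is in fact parsimoniously equivalence-preserving, matching the ``moreover'' clause of the statement.
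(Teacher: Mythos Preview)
Your proposal is correct and matches the paper's approach exactly: the paper does not give a formal proof but simply states that the observation follows from ``two applications of a depth-first search algorithm pruning all unreachable and co-unreachable states.'' Your write-up fleshes this out with the explicit construction and the (straightforward but worth stating) verification of parsimonious equivalence, which the paper leaves implicit.
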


We say that an NFA $\mathcal{A} = (Q, \Sigma, \delta, I, F)$ has \emph{normal form} if $|I| = |F| = 1$. 
The following claim is straightforward to prove by introducing two auxiliary states -- one state becoming the only initial and one becoming the only final state -- and the appropriate transitions from and to them. 

\begin{observation}\label{claim:normal}
Given an NFA $\mathcal{A} = (Q, \Sigma, \delta, I, F)$, one can compute in $\Oh(|{\mc A}|)$ time a parsimoniously equivalent NFA $\mathcal{B}$ with $|\mc B| = \Oh(|\mc A|)$ in normal form. 
\end{observation}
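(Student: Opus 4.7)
The plan is to introduce two fresh auxiliary states, $s, t \notin Q$, which will become the unique initial and final states of $\mc B$, and to install new transitions so that $s$ (respectively $t$) collectively stands in for $I$ (respectively $F$). Conceptually, $s$ will mimic every initial state by inheriting its outgoing transitions, while $t$ will mimic every final state by inheriting its incoming transitions.

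Concretely, I would set $\mc B = (Q \cup \{s,t\}, \Sigma, \delta_\mc B, \{s\}, \{t\})$, where $\delta_\mc B$ contains every original transition from $\delta$ together with: the edge $(s,a,q')$ for every $(q,a,q') \in \delta$ with $q \in I$; the edge $(q,a,t)$ for every $(q,a,q') \in \delta$ with $q' \in F$; and the edge $(s,a,t)$ for every $(q,a,q') \in \delta$ with $q \in I$ and $q' \in F$. By construction $|I_\mc B| = |F_\mc B| = 1$, so $\mc B$ is in normal form; the added transitions number at most $2|\delta|+1$, so $|\mc B| = \Oh(|\mc A|)$, and the whole construction is a single pass over $\delta$ in $\Oh(|\mc A|)$ time.

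For parsimonious equivalence, I would exhibit the natural bijection on accepting runs. An accepting run $q_0 \xrightarrow{w_1} q_1 \cdots \xrightarrow{w_n} q_n$ of $\mc A$ (with $q_0 \in I$ and $q_n \in F$) is sent to $s \xrightarrow{w_1} q_1 \cdots q_{n-1} \xrightarrow{w_n} t$ in $\mc B$. Each required transition belongs to $\delta_\mc B$: the intermediate ones come from $\delta$, and the first and last are among the newly added edges. Conversely, since $s$ has no incoming transitions and $t$ has no outgoing transitions in $\delta_\mc B$, every accepting run of $\mc B$ necessarily has this shape; inversion proceeds by replacing $s$ and $t$ with the original transitions that induced the first and last steps.

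The main obstacle will be verifying injectivity of this correspondence carefully, i.e.\ checking that distinct accepting runs of $\mc A$ never map to the same accepting run of $\mc B$, in particular for length-one inputs and when several states of $I$ (respectively $F$) induce the same added edge out of $s$ (respectively into $t$). I expect this to go through cleanly once the endpoints $q_0, q_n$ are recovered from the unique original transitions that produced the first and last steps of the corresponding $\mc B$-run; if any genuine collisions arise, I would absorb them into the construction by introducing, per colliding initial (resp.\ final) state, a dedicated auxiliary copy feeding into $s$ (resp.\ fed by $t$), which still preserves normal form and the $\Oh(|\mc A|)$ bounds on both size and running time.
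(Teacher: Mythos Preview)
Your high-level construction matches the paper's one-sentence sketch (add two fresh states and wire them to the old initial and final states), but the bijection you describe on accepting runs is not injective, and your fallback does not repair it.

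Concretely, take $Q=\{q_1,q_2,q_3\}$, $\Sigma=\{a\}$, $I=\{q_1,q_2\}$, $F=\{q_3\}$, and $\delta=\{(q_1,a,q_3),(q_2,a,q_3)\}$. Here $\runs_{\mc A}(a)=2$, yet in your $\mc B$ both accepting runs collapse to the single transition $(s,a,t)$, so $\runs_{\mc B}(a)=1$. The slip is the phrase ``the unique original transitions that produced the first and last steps'': those originals are \emph{not} unique when two initial states share an $a$-successor (or dually, two final states share an $a$-predecessor). In fact, because $\delta'\subseteq Q'\times\Sigma\times Q'$ is a set, \emph{any} normal-form automaton over $\{a\}$ has at most one $s\to t$ transition and hence $\runs(a)\le 1$; so no construction over the same alphabet can be parsimoniously equivalent to this $\mc A$. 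Your ``dedicated auxiliary copies feeding into $s$'' would need $\varepsilon$-moves or a fresh letter to connect them to $s$, neither of which the model provides. The fix that suffices for every use of \cref{claim:normal} in the paper is to enlarge $\Sigma$ by a fresh symbol $\#$ and add transitions $(s,\#,q)$ for each $q\in I$ and $(q,\#,t)$ for each $q\in F$: then $\runs_{\mc B}(\# w \#)=\runs_{\mc A}(w)$ for all $w\in\Sigma^\ast$, which preserves the ambiguity class (unambiguous, finitely ambiguous, polynomially ambiguous) even though it is not parsimonious equivalence in the literal sense defined.
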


Some problems are more natural to phrase assuming the normal form. When appropriate, we will discuss the complexity both assuming that the input is in normal form and with the overheads arising from \cref{claim:trim} and \cref{claim:normal}. 

A \emph{weighted automaton} is a tuple $\mathcal{W} = (Q, \Sigma, \delta, I, F, \weight)$, where $(Q, \Sigma, \delta, I, F)$ is an NFA and $\weight\colon \delta \to \Z$ is a weight function for the transitions.
Readers familiar with weighted automata can think of weighted automata over the tropical semiring, i.e.\ $\Z(\min,+)$. We avoid the technical definitions as they will not be needed.\footnote{Weighted automata also have input and output weights for states. For simplicity, we omit this part of the definition.}
All the notions and observations made above transfer naturally to weighted automata from the underlying NFA. The weight of a run $\rho = t_1 \ldots t_n$ is $\weight(\rho) = \sum_{i = 1}^{n} \weight(t_i)$.

\subsection{Directed graphs}\label{sec:prelim-graphs}

A \emph{directed graph} is a pair $G = (V,E)$, where $V$ is a finite set of vertices and $E \subseteq V \times V$ is a finite set of directed edges. Sometimes we write $V(G)$ and $E(G)$ to emphasise the graph. Note that we allow for self-loops from a vertex to itself. Such a self-loop is a cycle of length $1$.
If $G$ is acyclic, we call it a \emph{DAG}. Throughout this paper, all graphs are directed. The size $|G|$ of a graph $G$ is defined as $|G|=|V(G)|+|E(G)|$. A weighted graph has an additional weight function $\weight\colon E\to \Z$ and is denoted $(G,\weight)$ or $(V,E,\weight)$.

Let $\mathcal{A} = (Q, \Sigma, \delta, I, F)$ be an NFA in normal form that is \emph{unary} (i.e.\ $|\Sigma| = 1$). Consider the graph with vertex set $V=Q$ and the edge set $E=\{(q,q')\mid \exists a\in \Sigma\colon (q,a,q')\in\delta\}$ obtained by projecting $\delta$ onto $Q \times Q$. 
We further distinguish the \emph{start vertex} $s \in I$ and the \emph{target vertex} $t \in F$ (the unique initial and final state of $\mathcal{A}$). 
Denote such a graph $(G, s, t)$ or call it an \emph{$st$-graph} to emphasise the distinguished vertices. We say that $\mc A$ is the underlying automaton of the $st$-graph $G$ as constructed above. 
Note that walks in an $st$-graph $G$ (formalized as sequences of edges, but also denoted as the corresponding sequences of vertices if convenient) correspond to runs in the underlying $\mc A$, and $st$-walks correspond to accepting runs. 

A \emph{walk} is called a \emph{simple path} if every vertex appears at most once, with the exception that the first and the last vertices can be the same.\footnote{The literature sometimes calls a \emph{path} what we call a \emph{walk}. To avoid misunderstandings we will only use the terms \emph{simple path} and \emph{walk}.}
A \emph{cycle} is a walk that starts and ends at the same vertex.
The cycle is said to be \emph{simple} if the walk is a simple path.

A weighted graph $(G,\omega)$ with distinguished vertices $s$ and $t$ is denoted $(G,s,t,\weight)$ or $(V,E,s,t,\weight)$ for $G=(V,E)$. 
Note that adding $\weight$ to the underlying automaton of $(G,s,t)$ yields a weighted automaton in normal form. 
The weight of a walk is the sum of weights on its edges (with multiplicity). 
We call an $st$-graph and a weighted 
$st$-graph \emph{trim} if their underlying automata are trim.

The graph induced by the strongly connected components of a graph $G$ has the strongly connected components of $G$ as vertices and an edge $(X,Y)$ between two components if and only if $Y$ is reachable from $X$ in $G$. 
Tarjan's algorithm~\cite{tarjan72} computes this graph in $\Oh(n+m)$ time for $n=|V(G)|$ and $m=|E(G)|$. 


\subsection{Complexity analysis}
To describe asymptotic upper bounds on the running time of algorithms, we use the usual Landau notation. 
We also use the notation $f(n) = \Ot(g(n))$, which stands for: 
$\exists k\in\N\colon f(n) =  \Oh(g(n) \log^k(g(n)))$ and hides negligible polylogarithmic factors. 
The running time measured in $n$ is called \emph{almost linear} if it is in $n^{1+o(n)}$. 

For ease of presentation, we assume that a single arithmetic operation takes only unit time. 
To make our upper bounds on the running times precise in any standard computation model, it suffices to multiply them by the maximal cost of the involved arithmetic operations. For unweighted NFAs this is important only in \cref{sec:unambiguity_upper}, where the impact is polylogarithmic and negligible for the complexity. For weighted NFAs, the cost of an operation depends on the assumed model of computation. In \cref{sec:twins_unary} we compare results with those by Allauzen and Mohri~\cite{Allauzen2003EfficientAF}, who implicitly assume that arithmetic operations take unit time.


\subsection{Fine-grained hypotheses}

Now, we briefly state fine-grained complexity hypotheses relative to which we prove our hardness results. 


\begin{hypothesis}[Strong Exponential-Time Hypothesis (SETH) {{\cite{ImpagliazzoPZ01}}}]
	For every $\delta<1$, there is $k\in \N$ such that $k$-\textsc{SAT} with
    $n$ variables (restricted to clauses of width at most $k$) cannot be solved in $2^{\delta n+o(n)}$ time.
\end{hypothesis}

\begin{hypothesis}[Orthogonal-Vector (OV) hypothesis {{\cite{Williams05}}}]
	For every $\eps > 0$, there is $c >0$ such that there is no $\mathcal \Oh(n^{2 - \eps})$-time algorithm that, given two sets $A, B \subset \{0, 1\}^d$ of $|A|=|B|=n$ vectors of dimension $d = c \log n$, decides if $A$ contains a vector orthogonal to a vector in $B$.	
\end{hypothesis}
\begin{hypothesis}[$k$-OV hypothesis]
	For every $\eps > 0$, there is $c>0$ such that there is no
    $\Oh(n^{k - \eps})$-time algorithm that, given sets
$A_1,\ldots,A_k \subset \{0, 1\}^d$  vectors of dimension $d = c
    \log n$ and size $|A_1| = \ldots = |A_k| =n$, decides if exists a $k$-tuple
    $(a_1,\ldots,a_k) \in A_1 \times \ldots \times A_k$ such that $\prod_{i=1}^k a_i[j] = 0$ for every $j \in \rng{1}{d}$.
\end{hypothesis}

It is known that SETH implies the $k$-OV hypothesis for every $k\ge 2$ (see for example~\cite[Lem.~A.1]{WilliamsY14} and the discussion in~\cite{fine-grained-lecture}).

\begin{hypothesis}[$k$-Cycle hypothesis]
	For every $\eps > 0$ there is an integer $k$ such that there is no $\mathcal{O}(m^{2 - \varepsilon})$-time algorithm
	for finding a cycle of length $k$ in directed graphs with $m$ edges.
\end{hypothesis}

\section{Unary Unambiguity}\label{sec:unambiguity_upper}
We complete the proof of \cref{thm:NFA_unary} discussed in \cref{sec:technical_overview_unary} by proving \cref{thm:equivalence}.

\unaryproblemequivalence*

As already observed, in unary NFAs, a word with an accepting run is characterized by the length of the corresponding source-to-target walk in the underlying directed graph. 
Since we can assume without loss of generality that the given NFA is trim and in normal form (see~\cref{claim:trim,claim:normal}), we can reformulate the problem of \unambiguity for unary NFAs as follows. 
\begin{description}
    \item[\unaryunambiguity] Given a trim $st$-graph $(G,s,t)$, decide whether all $st$-walks in $G$ have pairwise different lengths.
\end{description}

One direction of the above equivalence is quite simple to argue. In fact, to prove \cref{thm:NFA_unary} we only need the other direction, i.e.\ the reduction from \unaryunambiguity to \disjointprogressions, but we include the reduction from \disjointprogressions to \unaryunambiguity for completeness and to show that relating those two problems is optimal.

\begin{lemma}\label{lem:progressiontoambiguity}
	Let $I = (b_1, A_1), \dots, (b_k, A_k)$ be an instance of \disjointprogressions with $n = b_1 + \dots + b_k$.
	There is an $\Oh(n)$-time algorithm that produces an instance $(G,s,t)$ of \unaryunambiguity with $|V(G)| = \Oh(n)$ and $|E(G)| = \Oh(n)$ such that $I \in \disjointprogressions$ if and only if $(G, s, t) \in \unaryunambiguity$. 
\end{lemma}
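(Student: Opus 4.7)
The plan is to construct, for each of the $k$ arithmetic progressions, a vertex-disjoint cycle gadget attached to a common source $s$ and a common target $t$, so that the multiset of $st$-walk lengths equals, up to a uniform additive shift, the multiset of values appearing across the input progressions. Concretely, assume WLOG that each $A_i$ is nonempty (an empty $A_i$ contributes no progression values and can be discarded in $\Oh(n)$ time). For each $i \in [k]$, introduce fresh vertices $u_i^{(0)}, u_i^{(1)}, \dots, u_i^{(b_i - 1)}$ and the directed edges $u_i^{(j)} \to u_i^{((j+1) \bmod b_i)}$ forming a simple cycle $C_i$ of length $b_i$. Add one entry edge $s \to u_i^{(0)}$ and, for each $a \in A_i$, one exit edge $u_i^{(a)} \to t$. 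The resulting graph $G$ has $2 + \sum_i b_i = \Oh(n)$ vertices and $k + \sum_i b_i + \sum_i |A_i| = \Oh(n)$ edges, and it is built in $\Oh(n)$ time.

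Next I would characterize the $st$-walks in $G$. Since the cycles $C_i$ are pairwise vertex-disjoint, the only way to leave $s$ is through an edge $s \to u_i^{(0)}$, and the only edges leaving any $C_i$ go directly to $t$, every $st$-walk decomposes uniquely as: the entry edge $s \to u_i^{(0)}$, followed by a walk inside $C_i$ from $u_i^{(0)}$ to some $u_i^{(a)}$ with $a \in A_i$, followed by the exit edge $u_i^{(a)} \to t$. Inside $C_i$, the set of walk lengths from $u_i^{(0)}$ to $u_i^{(a)}$ is exactly $\{a + x b_i : x \in \N\}$, and each such length is realised by a unique walk. Hence the triples $(i, a, x)$ with $i \in [k]$, $a \in A_i$, $x \in \N$ are in bijection with $st$-walks of $G$, and the walk associated to $(i, a, x)$ has length $2 + a + x b_i$.

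The equivalence is then immediate. Two distinct $st$-walks have equal length iff there are triples $(i, a, x) \neq (j, a', x')$ with $a + x b_i = a' + x' b_j$. For $i = j$, this forces $a \equiv a' \pmod{b_i}$, hence $a = a'$ since $a, a' \in \{0, \dots, b_i - 1\}$, and then $x = x'$, contradicting distinctness. So collisions occur exactly when distinct progressions share a value, which is precisely the negation of \disjointprogressions. Trimness also holds: each $u_i^{(j)}$ is reachable from $s$ by entering $C_i$ at $u_i^{(0)}$ and walking $j$ steps, and co-reachable from $t$ by continuing around $C_i$ to some $u_i^{(a)}$ with $a \in A_i$ (nonempty by assumption) and then taking the exit edge.

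I do not expect a single hard step; the only real care is bookkeeping around small cases such as $b_i = 1$ (where $C_i$ is a self-loop and the analysis is unchanged) and the preprocessing that removes empty $A_i$. The core argument is a clean modular-arithmetic bijection, and the construction is by design local and linear-sized, so all size and running-time bounds follow directly.
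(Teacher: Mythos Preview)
Your construction and argument are essentially identical to the paper's own proof: the same cycle-per-step gadget with a single entry edge from $s$ and exit edges to $t$ indexed by $A_i$, the same length formula $2 + a + x b_i$, and the same bijection between $st$-walks and progression values. Your write-up is slightly more careful in explicitly handling trimness (needed since a \unaryunambiguity instance must be trim) and the degenerate cases, but the approach is the same.
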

\begin{proof}
    First, create the source vertex $s$ and target vertex $t$. Next,
    for each $i \in \rng{1}{k}$, create the following: (1) a simple
    oriented cycle $C_i$ on vertices $v_{i,0},\dots,v_{i,b_i-1}$ (in this order), (2) the
    edge $e_i=(s, v_{i,0})$, and, (3) for each $a \in A_i$, the edge $e_{i,a}=(v_{i,a}, t)$. 

	Since $b_1 + \dots + b_k = n$, the union of all cycles in $G$ contains $n$ vertices and $n$ edges. With the additional vertices $s$ and $t$, we get $|V(G)| \leq n + 2$. There is at most one edge from $s$ to each cycle and at most one edge from every vertex to $t$. Hence, $|E(G)| \leq 3n$ and the construction takes $\Oh(n)$ time.

	Now, observe that any $st$-walk of $G$ starts with the edge $e_i$ for some
    $i \in \rng{1}{k}$, then loops $x$ times around the cycle $C_i$ for some $x
    \in \mathbb N$, next walks from $v_{i, 0}$ to $v_{i, a}$ along $C_i$ for
    some $a \in A_i$, and finally leaves the cycle via the edge $e_{i, a}$ to the
    vertex $t$. The function mapping each $st$-walk of $G$ to its length $2 + x
    \cdot b_i + a$ is injective if and only if the arithmetic progressions $\{a
    + x \cdot b_i \mid x \in \mathbb N\}$ for $i \in \rng{1}{k}$ and $a \in A_i$ are disjoint. In other words, $(G, s, t) \in \unaryunambiguity$ if and only if $I \in \disjointprogressions$.
\end{proof}

In the remainder of this section, consider a trim $st$-graph $(G, s, t)$ and denote $n \coloneqq |V(G)|$ and  $m \coloneq |E(G)|$. We assume without loss of generality that $n = \Oh(m)$ (see~\cref{lem:sparsity}). 
We now prove the other direction of \cref{thm:equivalence}. This relies on the fact that \unaryunambiguity can be decided in linear time if the given graph is acyclic, i.e.\ $G$ is a DAG. The algorithm for the DAG case given in \cref{thm:unary_amb_dag_algo} additionally computes lengths of $vt$-walks for some vertices $v \in V(G)$. This will be useful for the above reduction.

%

\begin{lemma} \label{thm:unary_amb_dag_algo}
   There is an $\Oh(n)$-time algorithm deciding whether $(G, s, t) \in \unaryunambiguity$ for any acyclic trim $st$-graph $(G, s, t)$ with $n$ vertices and $m$ edges.
   
   Moreover, if $(G, s, t) \in \unaryunambiguity$, the algorithm computes: (1) the set $P_{st}$ of lengths of all $st$-walks in $G$, and (2) for every vertex $v \in V(G)$ of in-degree at least $2$ the set $P_{vt}$ of lengths of all $vt$-walks in $G$.
\end{lemma}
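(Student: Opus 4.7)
The plan is to process the vertices of $G$ in reverse topological order and maintain, for each vertex $v$, a representation of the set $P_{vt}$. First I would invoke \cref{lem:sparsity}: if the input has more than $12n$ edges, then $G$ cannot be unambiguous, so we return that verdict after reading only $12n+1$ edges in $\Oh(n)$ time. We may therefore assume $m = \Oh(n)$, and a reverse topological order is computable in $\Oh(n+m) = \Oh(n)$ time. The inductive invariant is $P_{vt} = \bigcup_{i = 1}^{k} (P_{u_i t} + 1)$, where $u_1, \ldots, u_k$ are the out-neighbors of $v$, with base case $P_{tt} = \{0\}$.

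The naive dynamic program that materializes every $P_{vt}$ explicitly incurs the $\Omega(n^2)$ blow-up exhibited in \cref{fig:counterexample}. The core idea is to defer materialization whenever possible. If $v$ has out-degree $1$ with unique out-neighbor $u$, we store $P_{vt}$ as a handle pointing to the representation of $P_{ut}$ together with a lazy shift of $+1$, enumerating no elements. If $v$ has out-degree at least $2$, we merge the sets $(P_{u_i t} + 1)$ into a hash set keyed on lengths; a duplicate detected during this merge is a witness of ambiguity, on which the algorithm halts and outputs ``ambiguous''. To support the second part of the statement, we additionally materialize $P_{vt}$ whenever $\deg^-(v) \geq 2$, so that each of $v$'s predecessors can access it during subsequent iterations and so that it is available to be reported.

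Correctness is a routine induction on reverse topological order: lazy handles represent their sets faithfully, and a collision during a merge yields two distinct $vt$-walks with equal length, which by concatenation with any $sv$-walk (existing by trimness) produces two $st$-walks of the same length. Conversely, if no collision is detected, every $P_{vt}$ is correctly represented and hence $G$ is unambiguous. Once the main loop terminates, $P_{st}$ is obtained by dereferencing its lazy chain in one linear pass, and the sets $P_{vt}$ for $v$ with $\deg^-(v) \geq 2$ have already been materialized.

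The principal obstacle is proving that the total running time is $\Oh(n)$. The lazy operations contribute only $\Oh(1)$ per vertex. The merges dominate: at each out-degree $\geq 2$ vertex $v$, using the small-to-large trick with the heaviest out-neighbor $u_{i^\ast}$ as the base (shared via lazy shift when $u_{i^\ast}$ has in-degree $1$, or via a persistent structure when in-degree is larger) gives a local cost of $|P_{vt}| - |P_{u_{i^\ast} t}|$. Under unambiguity this equals $f(v) - f(u_{i^\ast})$, where $f(\cdot)$ counts walks to $t$, and I would bound the total $\sum_v (f(v) - f(u_{i^\ast}))$ by $\Oh(n)$ via a charging argument over a heavy-light decomposition adapted to DAGs, combined with the sparsity bound $m = \Oh(n)$. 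Making this amortization rigorous, in particular handling the fact that a vertex may lie on several heavy paths emanating from distinct predecessors (in contrast to trees), is the technical heart of the proof.
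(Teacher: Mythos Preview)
Your overall plan is sound up to the amortisation, which you yourself flag as unfinished; that is the genuine gap. Two bounds are missing. First, materialising $P_{vt}$ at every in-degree-$\ge 2$ vertex costs $\sum_{\deg^-(v)\ge 2}|P_{vt}|$ in total, and you give no reason why this is $\Oh(n)$. Second, at out-degree-$\ge 2$ vertices your claimed local cost $|P_{vt}|-|P_{u_{i^\ast}t}|$ presupposes that the heavy child's set can be extended in place; when $\deg^-(u_{i^\ast})\ge 2$ you fall back on an unspecified ``persistent structure'' (which typically costs a $\log n$ factor), and even granting that you still need $\sum_v\bigl(f(v)-\max_u f(u)\bigr)=\Oh(n)$ for $f(v)=|P_{vt}|$. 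Plain small-to-large yields only $\Oh(n\log n)$, and your heavy--light-on-DAGs sketch does not show how unambiguity recovers the missing factor; in a DAG a vertex may be the heavy out-neighbour of several predecessors simultaneously, so neither telescoping along heavy paths nor the usual light-edge charging goes through.

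The paper avoids amortisation entirely with a different decomposition. Fix a spanning tree $T$ rooted at $s$, and let $S$ be the set of in-degree-$\ge 2$ vertices (equivalently, those with an incoming non-tree edge). Split $P_{vt}$ into $\texttt{JUMP}_v$ (walks whose first edge is non-tree) and $\texttt{STAY}_v$ (the rest). Prepending the tree path $s\leadsto v$ to a $\texttt{JUMP}_v$ walk, respectively the tree path $s\leadsto u$ followed by a fixed non-tree edge $(u,v)$ to a $\texttt{STAY}_v$ walk with $v\in S$, is an injection into the set of $st$-walks---of which there are at most $n$ since $G$ is acyclic and unambiguous. Hence $\sum_v|\texttt{JUMP}_v|\le n$ and $\sum_{v\in S}|\texttt{STAY}_v|\le n$, which in particular delivers your missing materialisation bound for free. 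Computing only $\texttt{JUMP}_v$ for all $v$ and $\texttt{STAY}_v$ for $v\in S\cup\{s,t\}$, recursing over a partition of $T$ into maximal subtrees whose internal vertices avoid $S$, then costs $\Oh(n)$ outright: no heavy--light, no persistence, just a global budget of $n$ $st$-walks and two injections into it.
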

\begin{proof}
    We start with a brief overview of the proof.
    Let $S \subset V(G)$ be the set of vertices of in-degree at least 2. Since $G$ is trim, two $vt$-walks of the same length yield two $st$-walks of the same length. So we start by computing the sets $P_{vt}$ of lengths of all $vt$-walks for every $v \in S \cup \{s\}$, but abort the computation and output $(G, s, t) \notin \unaryunambiguity$ if we ever detect two $ut$-walks of the same length for any $u\in S \cup \{s\}$. 
    If the computation is never aborted, it succeeds and guarantees that $(G, s, t) \in \unaryunambiguity$. 
    We show how the algorithm computes the sets $P_{vt}$ for all $v \in S \cup \{s\}$ in this case and bound the runtime. 
    Remark that we cannot afford to compute all sets $P_{vt}$ for \emph{all vertices} $v \in V(G)$ as this might have quadratic size, see~\cref{fig:counterexample} for an example.
    
    \begin{figure}
  \centering
\begin{tikzpicture}
  [
  scale=0.8,
  thick,
  -{Stealth[length=2.4mm, width=1.2mm]},
  level distance=1.5cm,
  every node/.style={circle,draw, inner sep=1pt,minimum size=.5cm},
  level 1/.style={sibling distance=2cm},
  level 2/.style={sibling distance=2cm},
  level 3/.style={sibling distance=1.5cm},
  level 4/.style={sibling distance=1cm},
  ]
  
  \node (s) {$s$}
    child {node (a) {$a$}}
    child {node[line width=2pt] (b) {$b$}
      child {node[line width=2pt] (c) {$c$}
        child {node[line width=2pt] (e) {$e$}}
      }
      child {node (d) {$d$}
        child {node (f) {$f$}}
        child {node[line width=2pt] (g) {$g$}
          child {node[line width=2pt] (t) {$t$}}
        }
      }
  };

  \begin{scope}[dashed]
    \draw (a)--(b);
    \draw (a) -- (c);
    \path (a)  edge [bend right=50]   (e);
    \draw (f)--(g);
    \path (e)  edge   [bend right=30]   (t);
  \end{scope}

  \begin{scope}[on background layer]
    \draw[gray!50,line width=2.2em,line cap=round, rounded corners] (b.center) -- (c.center) -- ++($.21*(c.center) - .21*(b.center)$);    
    \draw[gray!50,line width=2.2em,line cap=round, rounded corners] (b.center) -- (d.center) -- (g.center) -- ++($.21*(g.center) - .21*(d.center)$);    
    \draw[gray!50,line width=2.2em,line cap=round, rounded corners] (b.center) -- (d.center) -- (f.center) -- ++($.21*(f.center) - .21*(d.center)$);    
  \end{scope}

\end{tikzpicture}
\caption{Illustration for the proof of \cref{thm:unary_amb_dag_algo}: A directed
acyclic graph $G$, the edges of a spanning tree $T$ are continuous and the
non-tree edges are dashed, the set $S$ of vertices of in-degree at least $2$ is
in bold. With grey background we illustrate $T_b$ which is the maximal
subtree of $T$ rooted at $b$ such that all internal vertices are in $V \setminus
\{S \cup \{s\}\}$.}\label{fig:dag_tree}
\end{figure}

    Let $T$ be an arbitrary directed tree rooted at $s$ spanning $G$. Recall that $S \subseteq V(G)$ is the set of vertices with in-degree at least $2$ (bold in \cref{fig:dag_tree}).
    Note that $S$ is exactly the set of vertices with an incoming non-tree edge. (We have $s\notin S$ because $G$ would not be acyclic otherwise.) 
    Observe that, for every $v \in V(G)$, we can partition the set of non-empty $vt$-walks into walks starting with a tree edge and walks starting with a non-tree edge. We define the following for every vertex $v \in V(G)$:
    \begin{itemize}
        \item $\texttt{STAY}_v$ is the set of lengths of all $vt$-walks starting with a tree edge, except for $v = t$, where $\texttt{STAY}_t = \set{0}$ for the empty $tt$-walk.
        \item $\texttt{JUMP}_v$ is defined
        as the set of lengths of all $vt$-walks starting with a non-tree edge.
    \end{itemize}
    Note that $\texttt{JUMP}_t = \emptyset$.
    Clearly, $\texttt{STAY}_v \cup \texttt{JUMP}_v$ is the set of lengths of all $vt$-walks. 
    We cannot afford to compute $\texttt{STAY}_v$ and $\texttt{JUMP}_v$ for every vertex $v \in V(G)$ because the number of $vt$-walks for every $v \in V(G)$ can be prohibitively large. Therefore, we compute the set $\texttt{JUMP}_v$ for every $v \in V(G)$ but the set $\texttt{STAY}_v$ only for vertices $v \in S \cup \{s, t\}$. We argue that the sum of the cardinalities of these sets is in $\Oh(n)$. We trivially have $|\texttt{STAY}_t| = 1$. 
    The function that, for each $vt$-walk $p$ starting with a non-tree edge, maps $p$ to the $st$-walk consisting of the unique $sv$-walk along the tree followed by $p$, is injective. Since $G$ is trim, two $vt$-walks of the same length yield at least two $st$-walks of the same length. As $G$ is acyclic, any walk has length at most $n$, 
    and there are thus at most $n$ $st$-walks in $G$.
    So we have  
    $\sum_{v \in V(G)} |\texttt{JUMP}_v| \leq n.$
    A further trivial consequence of having at most $n$ $st$-walks in $G$ is $|\texttt{STAY}_s| \leq n$. 
    Finally, for every $v \in S$ there is a non-tree edge $(u, v) \in E(G) \setminus E(T)$. Hence, every $vt$-walk $p$ can be mapped to a unique $st$-walk consisting of the $su$-walk along the tree, the non-tree edge $(u, v)$ and the $vt$-walk $p$. Hence, we have $\sum_{v \in S} |\texttt{STAY}_v| \leq n.$

    We now describe the algorithm. Compute a spanning tree $T$ and sort the
    vertices in reverse topological order in $\Oh(n + m)$ time (e.g.\ with a
    breadth-first search algorithm, notice that if we detect $m = \Omega(n)$, then we can
answer that $(G,s,t) \notin \unaryunambiguity$ as $G$ is trim)  
    We process vertices in this reverse topological order and compute by induction the sets $\texttt{STAY}_v$ for every $v \in S \cup \{s, t\}$ and $\texttt{JUMP}_v$ for every $v \in V(G)$.
    Observe that, for any $v \in V(G)$, any $vt$-walk starting with a non-tree edge $(v, u)$ is the concatenation of the edge $(v, u)$ with a $ut$-walk. Since $u \in S$ is processed before its predecessor $v$, the sets $\texttt{STAY}_u $ and $ \texttt{JUMP}_u$ are already computed and we can compute $\texttt{JUMP}_v$ recursively as
    $$
    \texttt{JUMP}_v = \bigcup_{(v, u) \in E(G) \setminus E(T)} \{\ell +1 \mid \ell \in \texttt{STAY}_u \cup \texttt{JUMP}_u\}.
    $$
    We now show how to compute $\texttt{STAY}_v$ for any given $v \in S \cup \{s, t\}$.  Let $T_v$ be the maximal subtree of $T$ rooted at $v$ such that all internal vertices (i.e.\ all vertices except for the root and the leaves) are in $V \setminus (S \cup \{s\})$. Let $L_v \subset S \cup \{t\}$ be those leaves of $T$ that are in $S \cup \{t\}$ and denote by $d_v(u)$ the depth of $u$ in $T_v$. See \cref{fig:dag_tree} for an illustration.
    Consider a $vt$-walk $p$ starting with a tree edge and let $u \in V(T_v)$ be the first vertex at which the walk exits $T_v$ or, if $p$ is contained in $T_v$, let  $u = t$. Then $p$ exits $T_v$ either
    \emph{(i)} 
    with a non-tree edge, in which case $p$ is the concatenation of the $vu$-walk along $T_v$ with a $ut$-walk starting with a non-tree edge from $u$, or 
    \emph{(ii)} 
    with a tree edge or not at all, in which case $u \in L_v \subset S \cup \{t\}$ due to $T_v$'s maximality and $p$ is the concatenation of the $vu$-walk along $T_v$ with a $ut$-walk starting with a tree edge or the empty $tt$-walk. Since $u$ is processed before $v$, we can compute $\texttt{STAY}_v$ as follows.
    \begin{align*}
        \texttt{STAY}_v = \phantom{\cup} 
        \bigcup_{u \in T_v \setminus \{v\}} &\{\ell + d_v(u)\mid\ell
        \in \texttt{JUMP}_v \}
        \\
        \cup 
        \bigcup_{u \in L_v \setminus \{v\}} &\{\ell + d_v(u)\mid\ell \in
        \texttt{STAY}_v\}
    \end{align*}
    It remains to observe that $\texttt{STAY}_s \cap \texttt{JUMP}_s = \emptyset$ and to compute $P_{vt} = \texttt{STAY}_v \cup \texttt{JUMP}_v$ for each $v \in S \cup \{s\}$. 
    Observe that $\{T_v \setminus \{v\} \mid v \in S \cup \{s, t\}\}$ is a partition of $T$. Thus, the total running time is in $\Oh(\sum_{v \in V(G)} |\texttt{JUMP}_v| + \sum_{v \in S \cup \{s, t\}} |\texttt{STAY}_v| + |T|)$ and thus, by the shown bounds and $|T|\le n$, in $\Oh(n)$. 
\end{proof}

To reduce \unaryunambiguity to \disjointprogressions, we start with the following observation on cycles in unambiguous graphs. This property is then used in \cref{lem:transformation} to transform the graph.

\begin{lemma}\label{lem:disjoint_cycles}
	Let $(G, s, t)$ be a trim $st$-graph. 
	If $(G, s, t) \in $\unaryunambiguity, then we have the two properties that all cycles in $G$ are disjoint and that every $st$-walk in $G$ intersects vertices with at most one cycle. These two properties are $\Oh(n + m)$-time decidable.
\end{lemma}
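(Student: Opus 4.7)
The plan is to establish the forward direction of both structural properties by contrapositive and then describe an $\Oh(n+m)$-time verification based on the strongly connected component (SCC) decomposition of $G$. For the first property, I would suppose two distinct simple cycles $C_1,C_2$ in $G$, of lengths $a,b\ge 1$, share a vertex $v$. Trimness supplies an $sv$-walk $\pi_s$ and a $vt$-walk $\pi_t$. Then the two $st$-walks $\pi_s\cdot C_1^b\cdot\pi_t$ and $\pi_s\cdot C_2^a\cdot\pi_t$ have identical length $|\pi_s|+ab+|\pi_t|$ but differ: the first uses $b$ copies of every edge of $C_1$ in its inserted segment, and the second uses $a$ copies of every edge of $C_2$; since $C_1\neq C_2$ are simple cycles, these edge multisets disagree. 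This contradicts unambiguity.

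The second property is treated analogously: given an $st$-walk $w$ visiting some $u$ on a simple cycle $C_1$ of length $a$ and some $v$ on a different simple cycle $C_2$ of length $b$, the disjointness just proved forces $u\neq v$. Inserting $b$ copies of $C_1$ at (the first appearance of) $u$ in $w$, versus $a$ copies of $C_2$ at $v$, yields two distinct $st$-walks of common length $|w|+ab$, again contradicting unambiguity. For the algorithm, I would invoke Tarjan's SCC algorithm in $\Oh(n+m)$ and classify each SCC. The first property holds iff every nontrivial SCC is either a single vertex with a self-loop or a simple cycle, i.e., every vertex has in-degree and out-degree exactly $1$ inside its SCC; this is checked in $\Oh(n+m)$ by scanning the edges. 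For the second property I would build the condensation DAG, mark the SCCs that contain a cycle, topologically sort, and run the forward DP
\[
\textup{anc}(X) \;=\; \bigvee_{Y\to X}\bigl(\textup{cyclic}(Y)\vee\textup{anc}(Y)\bigr),
\]
declaring the property satisfied iff no cyclic SCC $X$ has $\textup{anc}(X)$ true.

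The main obstacle (only mildly technical) is to verify the two equivalences underlying each algorithmic check: that a nontrivial SCC admits two vertex-sharing simple cycles iff it is \emph{not} itself a simple cycle (proved by a degree/edge-count argument inside the SCC, using that a strongly connected graph on $k$ vertices with more than $k$ edges contains two simple cycles sharing a vertex), and that the condensation-reachability criterion for cyclic SCCs coincides with the existence of an $st$-walk meeting two different cycles. The latter is where trimness is indispensable: it lets us splice an $sx$-walk, a path through the condensation from $X$ to $Y$, and a $yt$-walk to realize the combined witness.
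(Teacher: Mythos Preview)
Your proposal is correct and follows essentially the same approach as the paper: contrapositive arguments that exploit trimness to build two equal-length $st$-walks, followed by an SCC-based linear-time check. The only minor differences are that for the first property the paper uses the simpler pair $p\,c_1 c_2\,q$ versus $p\,c_2 c_1\,q$ (just swapping the order of the two cycles rather than matching lengths via powers), and for the second property it runs a DFS from cycle-SCCs in the condensation rather than your topological DP; both are interchangeable implementation details.
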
	
\begin{proof}
	We prove the contrapositive.
	Assume that 
	there are two distinct cycles $c_1$ and $c_2$ intersecting at some vertex $v$. 
	Since $G$ is trim, there is an $sv$-walk $p$ and a $vt$-walk $q$. 
	The walks $p c_1 c_2 q$ and $p c_2 c_1 q$ are two distinct $st$-walks of equal length and thus witness that $(G,s,t) \notin $\unaryunambiguity. 
	Assume now that 
	there is an $st$-walk containing a vertex from each of two distinct cycles $c_1$ and $c_2$ of lengths $\ell_1$ and $\ell_2$, respectively. Let $v_1 \in V(c_1)$ and $v_2 \in V(c_2)$ be these vertices, and assume without loss of generality that $v_1$ appears before $v_2$ in the $st$-walk. Decompose the $st$-walk into an $sv_1$-walk $p$, a $v_1 v_2$-walk $q$ and a $v_2 t$-walk $r$. 
	Then $pc_1^{\ell_2}qr$ and $pqc_2^{\ell_1}r$ are two distinct $st$-walks of equal length, and thus witness that $(G, s, t) \notin $\unaryunambiguity. 

	We now show how to decide the two properties in $\Oh(n + m)$ time. Observe that all cycles are disjoint if and only if every strongly connected component of $G$ is either a single vertex or a cycle, which can be decided in $\Oh(n + m)$ time.

	Let $H$ be the graph induced by the strongly connected components.
	There is an $st$-walk intersecting two distinct cycles if and only if there is a walk between the two cycles in $H$. 
	It can be decided whether such a walk exists by running a depth-first search (DFS) algorithm on $H$ starting from the vertices in $H$ corresponding to cycles in $G$, one after the other, until a walk from the starting point to another vertex in $H$ corresponding to a cycle in $G$ is found or all searches finish without finding one, disproving the existence.  We can stop the depth search of any DFS when it reaches a vertex already visited by any of the previously run DFS, which could not find such walk. Thus every edge is used only by one DFS and the total running time is in $\Oh(m)$.
\end{proof}

\begin{lemma}\label{lem:transformation}
	Let $(G, s, t)$ be a trim instance of \unaryunambiguity.
	There is an $\Oh(n + m)$-time algorithm that either transforms the given graph $G$ into a graph $G'$ with the properties below or, if this fails, decides whether $(G,s,t)\in$\unaryunambiguity:
	\begin{itemize}
		\item $G'$ is trim with $|V(G')| \leq 3n$ and $m \leq |E(G')| \leq 4m$.
		\item In each cycle of $G'$, all but exactly one vertex, called the \emph{cycle gate}, have in-degree and out-degree 1. If $s$ or $t$ is in a cycle, then the unique cycle gate is $s$ or $t$.
		\item $(G, s, t) \in $\unaryunambiguity if and only if $(G', s, t) \in $\unaryunambiguity.
	\end{itemize}
\end{lemma}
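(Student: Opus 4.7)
My plan is to first apply \cref{lem:disjoint_cycles} in $\Oh(n+m)$ time to verify that (i) all cycles of $G$ are vertex-disjoint and (ii) every $st$-walk intersects at most one cycle. If either check fails, I output $(G,s,t)\notin\unaryunambiguity$ directly and halt. Otherwise I compute the strongly connected components via Tarjan's algorithm; each non-trivial SCC is then a simple cycle by~(i), and I process these cycles independently.

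For each cycle $C$ with vertex set $V(C)=\{v_0,\dots,v_{k-1}\}$ in cyclic order, I choose the gate $v_g$ to be $s$ if $s\in V(C)$, else $t$ if $t\in V(C)$, else an arbitrary cycle vertex; after relabelling assume $v_g=v_0$ and let $d_i=i$ denote the forward distance along $C$ from $v_g$ to $v_i$. The surgery at $C$ is: delete all edges of $C$ and every non-gate cycle vertex (never deleting $s$ or $t$); introduce a fresh simple cycle $C^\ast\colon v_g\to z_1\to\dots\to z_{k-1}\to v_g$ of length $k$; add a fresh \emph{entry path} $\eta_0\to\eta_1\to\dots\to\eta_{k-2}\to v_g$ of length $k-1$; and a fresh \emph{exit path} $v_g\to\eta'_1\to\dots\to\eta'_{k-1}$ of length $k-1$. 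Each external incoming edge $(u,v_i)$ with $i\neq 0$ is rerouted to $(u,\eta_{i-1})$; each external outgoing edge $(v_j,w)$ with $j\neq 0$ is replaced by the two edges $(\eta_{j-1},w)$ and $(\eta'_j,w)$; edges incident to $v_g$ itself are left untouched. Summing $3(k_C-1)$ newly introduced vertices and $\Oh(k_C)$ new edges over all cycles and using $\sum_C(k_C-1)\le n$ yields $|V(G')|\le 3n$ and $|E(G')|\le 4m$; trimness and the required cycle-gate property hold by construction.

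To establish $(G,s,t)\in\unaryunambiguity \Leftrightarrow (G',s,t)\in\unaryunambiguity$ I exhibit an explicit length-preserving bijection between the $st$-walks of $G$ and those of $G'$. Using property~(ii), every excursion of an $st$-walk through a cycle $C$ factors uniquely into single \emph{cycle traversals} of the form ``enter at $v_i$ via an external edge, go around $C$ for $t\ge 0$ complete extra wraps plus $(d_j-d_i)\bmod k$ further steps, exit at $v_j$ via an external edge'', of total length $2+(d_j-d_i)\bmod k+tk$. I translate such a traversal as follows: if $d_j\ge d_i$ and $t=0$, to the short walk $u\to\eta_{i-1}\to\eta_i\to\dots\to\eta_{j-1}\to w$ of length $(j-i)+2$; otherwise, to the walk that traverses the entry path to $v_g$, loops $C^\ast$ exactly $t'$ times, and exits along the exit path: $u\to\eta_{i-1}\to\dots\to\eta_{k-2}\to v_g\to (C^\ast)^{t'}\to v_g\to\eta'_1\to\dots\to\eta'_j\to w$, of length $(k-i+1)+t'k+(j+1)=2+(d_j-d_i)+(t'+1)k$, with the correspondence $t'=t-1$ when $d_j\ge d_i$ and $t'=t$ when $d_j<d_i$. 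Walk pieces outside all cycles are copied verbatim. A direct calculation in each case shows original and image have identical lengths.

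The main obstacle is verifying that $G'$ contains \emph{no} $st$-walks beyond those produced by this correspondence, so that the map is indeed a bijection. This reduces to two gadget-local observations: the entry path can be left only at its terminal vertex $v_g$ or along one of the designated shortcut edges $(\eta_{j-1},w)$, and the exit path can be entered only at its initial vertex $v_g$; in particular, the two serializer paths of a single cycle cannot be mixed in unintended ways. Together with~(i) and~(ii), which forbid any $st$-walk from splicing pieces belonging to different cycles' gadgets, no spurious $st$-walk arises in $G'$. The length-preserving bijection then transfers unambiguity across the transformation in both directions, establishing the claimed equivalence.
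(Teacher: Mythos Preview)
Your approach mirrors the paper's: invoke \cref{lem:disjoint_cycles}, perform per-cycle surgery to funnel all external access through a single gate, and exhibit a length-preserving bijection of $st$-walks. Your gadget differs in detail---you build a fresh cycle plus full-length entry and exit paths with shortcut exits along the entry path, whereas the paper keeps the original cycle and attaches truncated incoming/outgoing paths with bypass entry edges on the outgoing side---but the two constructions serve the same purpose and the bijection argument has the same shape.

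A few genuine gaps remain. First, trimness does \emph{not} hold ``by construction'': your full-length entry and exit paths may contain vertices $\eta_l$ unreachable from $s$ (whenever no $v_i$ with $1\le i\le l{+}1$ carries an external in-edge) and vertices $\eta'_l$ that cannot reach $t$; you must either truncate these paths to the needed length, as the paper does via its parameters $\ell',\ell''$, or trim $G'$ in a postprocessing pass. Second, you do not actually handle the case where $s$ and $t$ lie in the same cycle: your rule makes $s$ the gate, so $t$ becomes a non-gate cycle vertex slated for deletion, and the parenthetical ``never deleting $s$ or $t$'' is a contradiction rather than a resolution (the paper disposes of this case by observing that $G$ is then itself a simple cycle and deciding directly). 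Third, your bijection's case split tacitly assumes $i,j>0$: for $i=0$ the shortcut branch invokes the undefined $\eta_{-1}$, and the rule $t'=t-1$ is off by one there; the easy fix is to treat entries at $v_g$ (the $i=0$ case) by going straight to the exit path with $t'=t$, and symmetrically for exits at $v_g$ when $j=0$.
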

\begin{proof}
	By \cref{lem:disjoint_cycles} we can check in $\Oh(n + m)$ time if the cycles in $G$ are disjoint and any $st$-walk intersects with at most one cycle or, if this not the case, conclude that $(G, s, t) \notin $\unaryunambiguity. 
	Otherwise, if $s$ and $t$ lie in the same cycle and $G$ is trim, $G$ must be a simple cycle.
	Then all the $st$-walks in $G$ have different lengths, and we conclude that $(G, s, t) \in $\unaryunambiguity.
	
	So we can assume that there is no cycle containing both $s$ and $t$ in $G$.
	We construct a graph $G'$ by transforming each cycle $C$ of $G$ and its surroundings with the following three steps, as illustrated in \cref{fig:cycleentrance}. 
	\begin{figure*}
    \begin{subfigure}[t]{.47\textwidth}\centering
        \begin{tikzpicture}[scale=8/10,very thick,-{Stealth[length=2.4mm, width=1.2mm]},
            vertex/.style={
                draw,
                circle, 
                inner sep=.0pt, 
                minimum size=0.62cm},
                    rotate=-90]
                    \newcommand{\sidelength}{1.8}
                    \node[vertex] (v0) at (0*72:\sidelength) {$v_0$};
                    \node[vertex] (v1) at (1*72:\sidelength) {$v_1$};
                    \node[vertex] (v2) at (2*72:\sidelength) {$v_2$};
                    \node[vertex] (v3) at (3*72:\sidelength) {$v_3$};
                    \node[vertex] (v4) at (4*72:\sidelength) {$v_4$};

                    \draw (v0)--(v1);
                    \draw (v1)--(v2);
                    \draw (v2)--(v3);
                    \draw (v3)--(v4);
                    \draw (v4)--(v0);					
                    
                    \begin{scope}[lightgray]
                        \node[vertex] (u1) [left =of v4] {$e$};
                        \draw (u1)--(v4);
                        \node[vertex] (u2) [left =of v3] {$d$};
                        \draw (v3)--(u2);

                        \node[vertex] (u4) [above right = of v1] {$c$};
                        \draw (u4)--(v1);
                        \node[vertex] (u7) at (-32:2.2*\sidelength) {$a$};
                        \draw (u7)--(v0);
                        \node[vertex] (u9) at (32:2.2*\sidelength) {$b$};
                        \draw (v0)--(u9);
                    \end{scope}

                    \begin{scope}[opacity=0] 
                        \node[vertex] (i4) [left of = v0] {$i_4$};
                        \node[vertex] (i3) [left of = i4] {$i_3$};
                        \node[vertex] (i2) [left of = i3] {$i_2$};
                        \node[vertex] (i1) [left of = i2] {$i_1$};
                        \node[vertex] (o1) [right of = v0] {$o_1$};

                        \node[vertex] (c) [below of = i1] {$c$};
                        \draw (c) -- (i1);
                        \draw [->] (c) to [out=30,in=0] (o1);

                    \end{scope}
                \end{tikzpicture}	
                \caption{
                Example of a cycle of length $5$ in $G$. Incoming and outgoing edges are shown in grey.\medskip}
                \label{fig:cycleentranceA}
            \end{subfigure}
            \hfill
            \begin{subfigure}[t]{.49\textwidth}\centering
                \begin{tikzpicture}[scale=8/10,very thick,-{Stealth[length=2.4mm, width=1.2mm]},
                    vertex/.style={
                        draw,
                        circle, 
                        inner sep=.0pt, 
                        minimum size=0.62cm},
                            rotate=-90]
                            \newcommand{\sidelength}{1.8}
                
                            \node[vertex] (v0) at (0*72:\sidelength) {$v_0$};
                            \node[vertex] (v1) at (1*72:\sidelength) {$v_1$};
                            \node[vertex] (v2) at (2*72:\sidelength) {$v_2$};
                            \node[vertex] (v3) at (3*72:\sidelength) {$v_3$};
                            \node[vertex] (v4) at (4*72:\sidelength) {$v_4$};
                            

                            \node[vertex] (i4) [left of = v0] {$i_4$};
                            \node[vertex] (i3) [left of = i4] {$i_3$};
                            \node[vertex] (i2) [left of = i3] {$i_2$};
                            \node[vertex] (i1) [left of = i2] {$i_1$};

                            \node[vertex] (o1) [right of = v0] {$o_1$};
                            \node[vertex] (o2) [right of = o1] {$o_2$};
                            \node[vertex] (o3) [right of = o2] {$o_3$};


                            \draw (v0)--(v1);
                            \draw (v1)--(v2);
                            \draw (v2)--(v3);
                            \draw (v3)--(v4);
                            \draw (v4)--(v0);
                            
                            \draw (i4)--(v0);
                            \draw (i3)--(i4);
                            \draw (i2)--(i3);
                            \draw (i1)--(i2);
                            \draw (v0)--(o1);
                            \draw (o1)--(o2);
                            \draw (o2)--(o3);

                            \begin{scope}[lightgray,>={Stealth[length=2.4mm, width=1.2mm]}]
                                \node[vertex] (a) [below of = i4] {$a$};
                                \draw (a)--(v0);
                                \node[vertex] (b) [below of = v0] {$b$};
                                \draw (v0)--(b);

                                \node[vertex] (c) [below of = i1] {$c$};
                                \draw (c) -- (i1);
                                \draw [->] (c) to [out=30,in=0] (o1);

                                \node[vertex] (d) [below of = o3] {$d$};
                                \draw (o3) -- (d);

                                \node[vertex] (e) [below of = i3] {$e$};
                                \draw (e) -- (i4);

                            \end{scope}
                        \end{tikzpicture}	
                        \caption{The corresponding cycle  in $G'$.
                        The edges $(c, i_1)$, $(e, i_4)$ and $(a, v_0)$ are (default) entry edges, $(v_0, b)$ and $(o_3, d)$ are exit edges and $(c, o_1)$ is a bypass entry edge.
                        \medskip}
                    \label{fig:cycleentranceB}
                \end{subfigure}
                \caption{Transformation of a cycle described in the proof of \cref{lem:transformation}.}
                \label{fig:cycleentrance}
    \end{figure*}

	\begin{description}
        \item[\textit{Step 1:} \textit{Add incoming and outgoing walk.}]\hfill\\
		Denote the length of the cycle $C$ by $\ell$.
		Choose an arbitrary vertex $v_0$ of $V(C)$, which we call the \emph{cycle gate}, and denote the remaining vertices of the cycle in the orientation order by $v_1,\dots,v_{\ell-1}$. If $C$ contains $s$ (or $t$), then let $s$ (or $t$) be the cycle gate. 
		Append to $v_0$ an incoming simple path of length
		$\ell'=\max\{k \in \rng{1}{\ell -1} \mid \deg_\text{in}(v_{\ell - k})>1\}$
       	$i_{\ell - \ell'}, \dots, i_{\ell} =
        v_0$ and an outgoing simple path of length
        $\ell''=\max\{k \in \rng{1}{\ell-1} \mid \deg_\text{out}(v_k)>1\}$ on vertices $v_0 = o_0, o_1, \dots, o_{\ell''}$. 
        We call the union of the incoming and outgoing walk (including the cycle gate) the \emph{gateway}.

    \item[\textit{Step 2:} \textit{Replace outgoing edges.}]\hfill\\
		Replace each edge $(v_j, u)$ of $G$ leaving $C$ (i.e.\ with $u \notin V(C)$) by the \emph{exit edge} $(o_j, u)$.
		
    \item[\textit{Step 3:} \textit{Replace incoming edges.}]\hfill\\
		Similarly, replace each edge $(u, v_j)$ of $G$ entering $C$ (i.e.\ with $u \notin V(C)$) by the \emph{(default) entry edge} $(u, i_j)$ 
		and, if $j\le\ell''$, add also		
		the \emph{bypass entry edge} $(u, o_j)$.  
	\end{description}
	Since all the cycles are disjoint, this construction takes $\Oh(n + m)$ time. We have $m \leq |E(G')|$. Additionally, at most $2 (|V(C)| -1)$ vertices and at most $2(|V(C)| -1)$ edges are added in Step 1. Throughout all iterations, an edge is replaced by two edges at most once per edge in Step 3. Since the cycles are disjoint, in total the number of vertices increases by at most $2n$ and the number of edges increases by at most $2n +m \leq 3m$. Hence, $|V(G')| \leq 3n$ and $m \leq |E(G')| \leq 4m$. 

	Clearly, by construction, $G'$ is trim and every vertex in a cycle except for the cycle gate has in-degree and out-degree $1$. 
	We now verify that the number of $st$-walks for any given length is maintained. For this, consider the following mapping between $st$-walks in $G'$ to $st$-walks in $G$:
	Any $st$-walk that contains no gateway vertex is mapped to itself.
	Every other $st$-walk contains exactly one entry edge $e$ (either default
	or bypass), exactly one exit edge $f$, and a (possibly empty) walk $p$
	from $e$ to $f$.  
	Every such $st$-walk is mapped to the walk in $G$ where $epf$ is replaced by $e'p'f'$, where $e'$ is the edge that $e$ (or its corresponding default) was replaced by, $f'$ is the edge that $f$ was replaced by, and $p'$ 
	is either of the following. 
	If $e$ is a bypass entry edge, then $p'$ is the shortest walk from $e$ to $f$ along $C$. 
	If $e$ is a default entry edge, then $p'$ is the concatenation of first the shortest walk from $e$ to the cycle gate, then $\lfloor |p|/|V(C)|\rfloor$ times looping around $C$ starting and ending at the cycle gate, and finally the shortest walk from the cycle gate to $f$. 
	The described mapping is in fact a length-preserving bijection between $st$-walks in $G'$ and $st$-walks in $G$. This implies that the number of $st$-walks of length $\ell$ for any $\ell \in \N$ is preserved by the transformation of $G$ into $G'$, i.e. $(G, s, t) \in $\unaryunambiguity if and only if $(G', s, t) \in $\unaryunambiguity.
\end{proof}

We can now finally prove \cref{thm:equivalence}. 

\begin{proof}[Proof of \cref{thm:equivalence}]
	By \cref{lem:progressiontoambiguity}, we can reduce \disjointprogressions to \unaryunambiguity in linear time. Hence, let $(G, s, t)$ be an instance of \unaryunambiguity with $n = |V(G)|$ and $m = |E(G)|$. We describe an $\Oh(n+m)$-time algorithm that either decides whether $(G, s, t) \in $\unaryunambiguity, or otherwise produces an instance $I = (b_1, A_1), \dots, (b_k, A_k)$ for \disjointprogressions with $b_1 + \dots + b_k = \Oh(n)$ such that $(G,s,t)\in \unaryunambiguity $ if and only if $ I \in\disjointprogressions$. This establishes the linear time equivalence stated in \cref{thm:equivalence}.

By \cref{claim:trim,lem:disjoint_cycles,lem:transformation}, we can assume without loss of generality that $G$ is trim, all cycles are disjoint, every cycle can be entered and exited only via a single of its vertices called its \emph{cycle gate}, and every $st$-walk intersects at most one cycle.
The goal of the algorithm is to establish a correspondence between the 
$st$-walks in $G$ 
and the elements of some arithmetic progressions. If we detect any two $st$-walks of the same length during the construction, then we have a witness for $(G, s, t) \notin $\unaryunambiguity and end the algorithm.
More precisely, observe that the length of an $st$-walk intersecting a cycle of
size $b \in \mathbb N$ can be mapped to an element of the arithmetic progression
with steps $b$ and its bases equal to the length of this walk with all
loops around the cycle removed. The idea is now to compute, for each cycle size
$b$, all possible bases and then their remainders modulo $b$ while checking for overlaps among them (and for overlaps with lengths of $st$-walks in $G$ that do not intersect any cycle) and then construct a \disjointprogressions instance such that two $st$-walks have the same length if and only if the arithmetic progressions intersect.

Let $1 \le b_1 < \dots < b_k$ be the distinct sizes of cycles in $G$. 
For every $i \in \rng{1}{k}$, we are ultimately
interested in the length modulo $b_i$ of all $st$-walks in $G$ intersecting a cycle of size $b_i$. 
We can therefore focus on $st$-walks that intersect a cycle of size $b_i$ without looping on that cycle.
Each such $st$-walk is a simple path that can be decomposed into an $sv_C$-walk and a $v_C t$-walk, where $v_C$ is the cycle gate of the intersecting cycle.
The idea is that by removing all cycles except for their cycle gates we obtain a
DAG on which we can apply the algorithm of \cref{thm:unary_amb_dag_algo} to
compute the lengths of all $v_Ct$-walk for every cycle gate $v_C$ in linear
time. But for that, we also need to ensure that $v_C$ has in-degree at least $2$
in the acyclic graph. We do this by adding an extra walk that simulates one looping on the cycle $C$ and that does not affect the lengths of $v_C t$-walks. 
Finally, we also want to compute the lengths of all $sv_C$-walks. We do this by
applying \cref{thm:unary_amb_dag_algo} on the acyclic graph with reversed edges,
without the previously added extra walk but with a similar extra walk that ensures that every cycle gate $v_C$ has out-degree at least 2 without changing the lengths of the $sv_C$-walks. We now formally describe the construction of these two acyclic graphs.

\begin{figure*}
    \begin{subfigure}[t]{.3\textwidth}\centering
        \begin{tikzpicture}[scale=8/10,very thick,-{Stealth[length=2.4mm, width=1.2mm]},
            vertex/.style={
                draw,
                circle, 
                inner sep=.0pt, 
                minimum size=0.62cm},
                    rotate=-90]
                    \newcommand{\sidelength}{1.8}
                    \node[vertex] (v0) at (0*72:\sidelength) {$v_C$};
                    \node[vertex] (v1) at (4*72:\sidelength) {$v_1$};
                    \node[vertex] (v2) at (3*72:\sidelength) {$v_2$};
                    \node[vertex] (v3) at (2*72:\sidelength) {$v_3$};
                    \node[vertex] (v4) at (1*72:\sidelength) {$v_4$};

                    \draw (v0)--(v1);
                    \draw (v1)--(v2);
                    \draw (v2)--(v3);
                    \draw (v3)--(v4);
                    \draw (v4)--(v0);
                    
                    \node[vertex] (u) [left = of v0] {$u$};
                    \node[vertex] (w) [right = of v0] {$w$};
                    
                    \draw (u)--(v0);
                    \draw (v0)--(w);
        \end{tikzpicture}	
        \caption{A cycle $C$  in the graph $G$ of length $\ell = 5$ and cycle gate $v_C$ with arbitrary predecessor $u$ and successor $w$.
        \medskip}
    \end{subfigure}
    \hfill
    \begin{subfigure}[t]{.3\textwidth}\centering
        \begin{tikzpicture}[scale=8/10,very thick,-{Stealth[length=2.4mm, width=1.2mm]},
            vertex/.style={
                draw,
                circle, 
                inner sep=.0pt, 
                minimum size=0.62cm},
                    rotate=-90]
                    \newcommand{\sidelength}{1.8}
                    \node[vertex] (v0) at (0*72:\sidelength) {$v_C$};
                    \node[vertex] (v1) at (4*72:\sidelength) {$v_1$};
                    \node[vertex] (v2) at (3*72:\sidelength) {$v_2$};
                    \node[vertex] (v3) at (2*72:\sidelength) {$v_3$};
                    \node[vertex] (v4) at (1*72:\sidelength) {$v_4$};

                    \draw (v1)--(v2);
                    \draw (v2)--(v3);
                    \draw (v3)--(v4);
                    \draw (v4)--(v0);
                    
                    \node[vertex] (u) [left = of v0] {$u$};
                    \node[vertex] (w) [right = of v0] {$w$};
                    
                    \draw (u)--(v0);
                    \draw (u)--(v1);
                    \draw (v0)--(w);
        \end{tikzpicture}	
        \caption{In $H_1$, the cycle $C$ is replaced by a simple path from $u$ to $v_C$ of length $\ell + 1 = 6$. 
        \medskip}
    \end{subfigure}
    \hfill
    \begin{subfigure}[t]{.3\textwidth}\centering
        \begin{tikzpicture}[scale=8/10,very thick,-{Stealth[length=2.4mm, width=1.2mm]},
            vertex/.style={
                draw,
                circle, 
                inner sep=.0pt, 
                minimum size=0.62cm},
                    rotate=-90]
                    \newcommand{\sidelength}{1.8}
                    \node[vertex] (v0) at (0*72:\sidelength) {$v_C$};
                    \node[vertex] (v1) at (4*72:\sidelength) {$v_1$};
                    \node[vertex] (v2) at (3*72:\sidelength) {$v_2$};
                    \node[vertex] (v3) at (2*72:\sidelength) {$v_3$};
                    \node[vertex] (v4) at (1*72:\sidelength) {$v_4$};

                    \draw (v0)--(v1);
                    \draw (v1)--(v2);
                    \draw (v2)--(v3);
                    \draw (v3)--(v4);
                    
                    \node[vertex] (u) [left = of v0] {$u$};
                    \node[vertex] (w) [right = of v0] {$w$};
                    
                    \draw (u)--(v0);
                    \draw (v0)--(w);
                    \draw (v4)--(w);
        \end{tikzpicture}	
        \caption{In $H_2$, the cycle $C$ is replaced by a simple path from $v_C$ to $w$ of length $\ell + 1 = 6$. 
        \medskip}
    \end{subfigure}
    
    \caption{Construction of the DAGs $H_1$ and $H_2$ from the graph $G$ in \cref{thm:equivalence}. The construction ensures that every cycle gate has in-degree at least 2 in $H_1$ and out-degree at least 2 in $H_2$, and that every $st$-walk in $H_1$ or $H_2$ corresponds to a unique $st$-walk in $G$ of the same length. Indeed, for $i \in \{1, 2\}$, any $st$-walk in $H_i$ that uses the simple path from $u$ to $v_C$ (or from $v_C$ to $w$) of length $|V(C)| + 1$, is mapped to the $st$-walk in $G$ that loops around the cycle $C$ once, which has the same length.}
    \label{fig:G_to_H}
\end{figure*}

Let $H$ be the graph $G$ where for each cycle $C$ we remove all vertices of $V(C)$ except for the cycle gate $v_C$. 
Then $st$-walks in $H$ are in bijection with $st$-walks in $G$ of the same length that do not loop around a cycle; in particular, $(H, s, t) \notin $\unaryunambiguity implies $(G, s, t) \notin $\unaryunambiguity. This graph will be a convenient intermediate step to prove the correctness of the construction.

Suppose $v_C \neq s,t$ and let $u,w$ be its arbitrary predecessor and successor, i.e.\ with edges $(u,v_C)$ and $(v_C,w)$. Let $\ell$ be the length of the cycle $C$.
We define $H_1$ and $H_2$ to be the graphs obtained from $G$ by substituting the edge $(u,v_1)$ for $(v_C,v_1)$ and $(v_{\ell-1},w)$ for $(v_{\ell-1},v_C)$, respectively. 
\cref{fig:G_to_H} illustrates these two constructions, which are possible in $\Oh(n + m)$ time. 
Note that $H_1$ and $H_2$ are DAGs with $n$ vertices where all cycle gates (except $s$ and $t$ if they are cycle gates) have respectively in-degree and out-degree at least 2.

Apply the algorithm of \cref{thm:unary_amb_dag_algo} to $H_1$.
If the algorithm outputs that $(H_1, s, t) \notin$ \unaryunambiguity, then we directly conclude that $(G, s, t) \notin $\unaryunambiguity. Indeed, each simple path added to a cycle gate simulates one looping on the cycle with this gate in $G$, so every $st$-walk in $H_1$ can be mapped to an $st$-walk in $G$ of the same length.
Otherwise, the algorithm guarantees that $(H_1, s, t) \in $\unaryunambiguity and additionally outputs the set $P_{st}$ of lengths of all $st$-walks in $H_1$, and for every cycle $C$ the set $P_{v_Ct}$ of lengths of all $v_C t$-walks in $H_1$.
Assume this latter case. Then we also have $(H_2, s, t) \in $\unaryunambiguity, since $st$-walks in $H_1$ are in bijection with $st$-walks in $H_2$ of the same length. Additionally, $st$-walks in $H$ are preserved by the addition of simple paths, so we also have $(H, s, t) \in $\unaryunambiguity. 
Now apply the algorithm of \cref{thm:unary_amb_dag_algo} to the graph obtained by reversing all edges of $H_2$. This computes the set $P_{s v_C}$ of lengths of $s v_C$-walks in $H_2$ for each cycle gate $v_C$. 

Observe that by construction, the lengths of $v_C t$-walks in $H$ are preserved by the transformation to $H_1$, and the lengths of $s v_C$-walks in $H$ are preserved by the transformation to $H_2$. Hence, $P_{s v_C}$ and $P_{v_C t}$ contain the lengths of all $s v_C$-walks and $v_C t$-walks in $H$, respectively. 
We can therefore compute, for every cycle size $b_i > 0$, the set 
\begin{align*}
P_i \coloneqq \bigcup_{\substack{\text{ cycle } C \text{ in } G \\ |V(C)|=b_i}} \{\ell_1 + \ell_2 \mid& \ell_1 \in P_{s v_C}, \ell_2 \in P_{v_C t}\}
\end{align*}
of lengths of all $st$-walks in $H$ that intersect a cycle of size $b_i$.

It remains to deal with the set of walks $P_0$ that to do not go through any cycle gates.
Observe that $st$-walks in $G$ that loop at most once around a cycle are preserved in $H_1$, so we can compute the set of their lengths as $P_0 \coloneqq P_{st} \setminus \bigcup_{i =1}^k P_i$. 
Notice that since $H$ is a trim DAG, and we assume that $(H, s, t) \in $\unaryunambiguity, $H$ contains at most $|V(H)| = n$ distinct
$st$-walks by the pigeonhole-principle. Thus, the sets $P_0, P_1, \dots, P_k$ can be
computed in $\Oh(n)$-time. 
It remains to verify that there are no collisions among $P_0$ and arithmetic
progressions defined by the steps $b_1, \dots, b_k$ and sets of bases $P_1, \dots, P_k$.
Since $|V(H_1)| = n$, the lengths in $P_0$ are at most $n$. We check for
collisions with the following procedure that takes $\Oh(n)$ time. Initialize an
empty array of size $n$ and change the $a$-th entry to $1$ for every $a \in
P_0$. Then, for every $i \in \rng{1}{k}$ and all $a \in P_i$, change the $(a + x b_i)$-th entry to 1 for all $x \in \mathbb N$ such that $a + x b_i \leq n$. If any such entry is already set to 1, then we found a collision between two lengths of $st$-walks in $G$, implying $(G, s, t) \notin $\unaryunambiguity. 
Otherwise, we are guaranteed in particular that every $st$-walk in $G$ that 
does not intersect any cycle has a length different from that of any other $st$-walk in $G$. 
It is also simple to, in $\Oh(n)$ time, initialize for every $i \in \rng{1}{k}$ an array of length $b_i$ and compute the set $A_i \coloneqq \{a \mod b_i \mid a \in P_i\}$ while verifying that there are no distinct $a,a'\in A_i$ with $a \equiv a' \mod b_i$. 
If the latter case occurs, we can trivially conclude that $(G, s, t) \notin $\unaryunambiguity. 
Otherwise, we indeed have $(G, s, t) \in $\unaryunambiguity if and only if $I
\in \disjointprogressions$ for the instance $I$ with $1 \le b_1 < \dots < b_k$,
the set of bases $A_1,\dots,A_k$. Note that since the cycles are disjoint, $b_1 + \dots + b_k \leq |V(G)|$.
\end{proof}






\section{Unary Unambiguous Determinisability}\label{sec:twins_unary}
In this section, we give a linear time algorithm for deciding \determinisability in the unary case; i.e.\ we prove \cref{thm:WFA_unary}. Note that here linear time is in the size of the automaton, which is both the number of states and transitions. It is not hard to see that an equivalent of \cref{lem:sparsity} is not possible and transitions need to be taken into account.

We use the equivalence between \determinisability and \twins (see~\cref{lem:WFA_equivalence}).
Recall from~\cref{def:twins} that an automaton has the twins property if and only if all pairs of siblings
are twins. 
We can assume without loss of generality that the given automata is trim and in normal form (see~\cref{claim:trim,claim:normal}) with source vertex $s$ and target vertex $t$. Hence, a pair of non-twin siblings in $\mc W$ is equivalent to the existence of distinct vertices $p, q$ in the underlying weighted $st$-graph that are both reachable from $s$ by walks of the same length and cycles at $p$ and $q$, respectively, of the same length but different weight. We thus reformulate the \twins problem for unary alphabets as follows.
\begin{description}
    \item[\unarytwins] Given a trim directed weighted $st$-graph ${G = (V, E, s, t, \omega)}$, are all sibling pairs also twins?
\end{description}



For any instance $G = (V, E, s, t, \omega)$, we denote the number of vertices and edges by $n = |V|$ and $m = |E|$. 
%
We first reduce the problem of finding a pair of non-twin siblings to the problem of finding two cycles of different average weight. The average weight of a cycle is the total weight divided by its number of vertices.

\begin{lemma}\label{lem:twin-if-same-average-weight}
    Let $G$ be a trim weighted $st$-graph.
    Then there is a pair of non-twin siblings in $G$ if and only if $G$ has two cycles of different average weight.
\end{lemma}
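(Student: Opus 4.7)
The forward direction is immediate from definitions: any non-twin siblings $p, q$ furnish cycles $\pi_p$ at $p$ and $\pi_q$ at $q$ of a common length $\ell$ but distinct total weights $w_p \neq w_q$, and these already have distinct averages $w_p/\ell \neq w_q/\ell$.

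For the converse, suppose $G$ contains cycles $C_1, C_2$ of lengths $\ell_1, \ell_2$ and total weights $w_1, w_2$ with $w_1/\ell_1 \neq w_2/\ell_2$, equivalently $\ell_2 w_1 \neq \ell_1 w_2$. I plan to produce non-twin siblings with one state on each cycle. Fix a base vertex $v_1 \in V(C_1)$ (viewing $C_1$ as a closed walk starting there) and similarly $v_2 \in V(C_2)$. By trimness of $G$, pick walks from $s$ of some lengths $d_1$ to $v_1$ and $d_2$ to $v_2$. Let $p := v_1$, and let the candidate partner $q$ range over the vertices reached along $C_2$ from $v_2$ by a prefix of length $i \in \{0, 1, \ldots, \ell_2 - 1\}$; each such $q$ inherits a cycle of length $\ell_2$ and weight $w_2$ (by completing the prefix and traversing $C_2$ once more). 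Once the correct $i$ is fixed, the cycle at $p$ traversing $C_1$ exactly $\ell_2$ times (length $\ell_1 \ell_2$, weight $\ell_2 w_1$) and the cycle at $q$ obtained by iterating its length-$\ell_2$ cycle $\ell_1$ times (length $\ell_1 \ell_2$, weight $\ell_1 w_2$) will match in length but differ in weight, so $p, q$ are siblings that are not twins.

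The main step is matching the $s$-to-$p$ and $s$-to-$q$ walk lengths. Walks from $s$ to $p$ realise all lengths in $\{d_1 + k_1 \ell_1 \mid k_1 \in \N\}$ (loop $C_1$ at $v_1$), while walks from $s$ to $q$ realise all lengths in $\{d_2 + i + k_2 \ell_2 \mid k_2 \in \N\}$. A common length requires
\begin{equation*}
k_1 \ell_1 - k_2 \ell_2 = (d_2 - d_1) + i.
\end{equation*}
Setting $g := \gcd(\ell_1, \ell_2)$, this admits an integer solution iff $g \mid (d_2 - d_1) + i$; since $g \mid \ell_2$ and $i$ ranges over a full period modulo $\ell_2$, some $i \in \{0, \ldots, \ell_2 - 1\}$ satisfies the divisibility condition. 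Starting from any integer solution, adding sufficiently large equal multiples of $\ell_2/g$ to $k_1$ and $\ell_1/g$ to $k_2$ yields nonnegative $k_1, k_2$, completing the argument.

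The main obstacle is precisely this number-theoretic matching: the two walk-length progressions toward $p$ and $q$ need not a priori intersect, and it is essential to exploit the rotational freedom on $C_2$ provided by the index $i$ to force divisibility by $g$. Everything after this matching is routine: the assumption on averages is exactly what guarantees $\ell_2 w_1 \neq \ell_1 w_2$, turning the matched cycles of length $\ell_1 \ell_2$ into a witness of the non-twin property.
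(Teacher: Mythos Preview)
Your proof is correct and takes essentially the same route as the paper: rotate the base point along one of the cycles to equalise the lengths of the walks from $s$, and iterate the two cycles to a common length to witness the non-twin condition. The paper sidesteps your gcd/B\'ezout step by first replacing $C_1,C_2$ with their $\ell_2$-fold and $\ell_1$-fold iterates (so both already have length $\ell_1\ell_2$); then, assuming $|\rho_1|\le|\rho_2|$, simply walking $|\rho_2|-|\rho_1|$ steps along this long cycle from $p_1$ lands on the desired partner vertex without any arithmetic.
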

\begin{proof}
    First, assume that $p$ and $q$ are non-twin siblings, i.e\ there are a $p$-cycle $\pi_p$ and a $q$-cycle $\pi_q$ of the same length $\ell$ but distinct weights $\weight(\pi_p) \neq \weight(\pi_q)$.
    Then they also have distinct average weights   ${\weight(\pi_p)}/{\ell}\neq  {\weight(\pi_q)}/{\ell}$. 

    Now assume that $\pi_1$ and $\pi_2$ are two cycles of different average weight in $G$. We can assume that $|\pi_1| = |\pi_2|$; otherwise, we can obtain two cycles of the same length by repeating $|\pi_1|$ times the cycle $\pi_2$ and $|\pi_2|$ times the cycle $\pi_1$ without changing the average weights.
    As $G$ is trim, both $\pi_1$ and $\pi_2$ are reachable from $s$.
    Let $\rho_1$ and $\rho_2$ be the shortest walk from $s$ to $\pi_1$ and $\pi_2$, respectively, and assume without loss of generality that $|\rho_1| \le |\rho_2|$.
    Let $p_1$ and $p_2$ be the end vertices of $\rho_1$ and $\rho_2$, which are in $\pi_1$ and $\pi_2$, respectively.
    Let $\rho$ be the walk of length $|\rho_2| - |\rho_1|$ starting at $p_1$ and going along the cycle $\pi_1$, possibly multiple times. 
    Let $q$ be the end vertex of $\rho$ and let $\pi_1'$ be the cycle $\pi_1$
    re-attached to start and end at $q$.
    Then $\rho_1' \coloneqq \rho_1 \rho$ and $\rho_2$ are two walks of the same length from $s$ to $q$ and $p_2$, respectively, and $\pi_1'$ and $\pi_2$ are cycles of equal length but different average weights at $q$ and $p_2$.
    Thus, $q$ and $p_2$ are non-twin siblings.
\end{proof}

We now reduce detecting two cycles of different average weight to detecting a cycle of nonzero weight.
\begin{lemma}\label{lem:shifted-average-weight}
    Given a directed weighted graph  $G = (V, E, \weight)$, 
    we can find in $\Oh(|V| + |E|)$ time a weight function $\weight'\colon E \rightarrow \mathbb{Z}$ %
    such that $G$ contains two cycles of different average weight if and only if $G' = (V, E, \weight')$ contains a cycle of nonzero weight.
\end{lemma}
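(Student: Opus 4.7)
The plan is to use a shift-by-a-reference-cycle trick: pick any single cycle in $G$, and scale and translate the weights so that this reference cycle has weight zero and any other cycle has weight zero if and only if it has the same average weight as the reference.

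First, compute in $\Oh(|V|+|E|)$ time (via Tarjan's algorithm, as noted in~\cref{sec:prelim-graphs}) the strongly connected components of $G$, and extract from a non-trivial component one cycle $\pi^\ast$, say of length $\ell^\ast = |\pi^\ast|$ and total weight $w^\ast = \weight(\pi^\ast)$. If $G$ has no cycle at all, output $\weight' \equiv 0$: then both sides of the equivalence are vacuously false and we are done. Otherwise, define
\begin{equation*}
    \weight'(e) \;\coloneqq\; \ell^\ast \cdot \weight(e) \;-\; w^\ast
    \qquad \text{for every } e \in E.
\end{equation*}
This clearly takes $\Oh(|E|)$ time and produces integer weights.

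Next, observe that for any cycle $\pi$ of length $\ell$ and weight $w=\weight(\pi)$ in $G$, the new total weight is
\begin{equation*}
    \weight'(\pi) \;=\; \ell^\ast \cdot w \;-\; \ell \cdot w^\ast
    \;=\; \ell \cdot \ell^\ast \left( \frac{w}{\ell} - \frac{w^\ast}{\ell^\ast} \right),
\end{equation*}
so $\weight'(\pi)=0$ if and only if $\pi$ has the same average weight as $\pi^\ast$. From this the equivalence follows easily: if $G'$ contains a cycle $\pi$ of nonzero weight, then $\pi$ and $\pi^\ast$ are two cycles of $G$ with different average weights; conversely, if $G$ contains two cycles of different average weights, then at least one of them differs in average weight from $\pi^\ast$ and thus yields a nonzero-weight cycle in $G'$.

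The whole argument is short and presents no real obstacle; the only mild subtlety is handling the degenerate case where $G$ is acyclic (which is trivial) and relying on a linear-time cycle-extraction routine via the SCC decomposition. In the unit-cost arithmetic model assumed in~\cref{sec:preliminaries}, the construction of $\weight'$ runs in $\Oh(|V|+|E|)$ time as required.
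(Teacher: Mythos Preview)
Your proof is correct and follows essentially the same idea as the paper's: pick a reference cycle and shift all edge weights so that this cycle has total weight zero, reducing the question of ``two cycles with different average weight'' to ``some cycle with nonzero weight.'' The paper simply subtracts the average weight $a=\weight(\pi)/|\pi|$ of the reference cycle from every edge, i.e.\ $\weight'(e)=\weight(e)-a$; you instead first scale by $\ell^\ast$ and then subtract $w^\ast$, which has the pleasant side effect that your $\weight'$ is genuinely integer-valued and thus matches the statement's codomain $\mathbb{Z}$ exactly (the paper's $a$ need not be an integer).
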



\begin{proof}
    Using a depth-first search algorithm we can, in $\mathcal{O}(|V| + |E|)$ time, find a cycle $\pi$ in $G$ or conclude that $G$ is acyclic. We assume the former since in the latter case any $\weight'$ will do.
    Let $a$ denote the average weight of $\pi$.
    Now set $\weight'(e) = \weight(e) - a$ for each $e \in E$. Observe that the average weight of each cycle is smaller in $G'$ than in $G$ by $a$. In particular, the average weight of $\pi$ in $G'$ is $0$.

    Suppose that $G$ contains two cycles $\pi_1$ and $\pi_2$ of different average weights in $G$. At least one of them, say $\pi_1$, has an average weight in $G$ different from $a$, i.e.\ $\weight(\pi_1) / |\pi_1| \neq a$. Then $\weight'(\pi_1) = \weight(\pi_1) - |\pi_1| \cdot a \neq \weight(\pi_1) - |\pi_1| \cdot (\weight(\pi_1) / |\pi_1|) = 0$, so $\pi_1$ is a nonzero weight cycle in $G'$.

    Now suppose that $G'$ contains a nonzero weight cycle $\pi_1$. Then the average weight in $G$ of $\pi_1$ is ${\weight'(\pi_1)}/{|\pi_1|} + a \neq a$,
    so $\pi$ and $\pi_1$ are cycles of different average weight in $G$.
\end{proof}

We have reduced the problem of deciding the twins property for $\mc W$ to detecting nonzero cycles in $G'$.
Observe that any cycle is contained in a single strongly connected component. We can thus decompose
$G'$ into its strongly connected components in $\mathcal{O}(n + m)$ time and proceed for
each component independently.
The following lemma concludes the proof of \cref{thm:WFA_unary}.
\begin{lemma}\label{lem:potential}
    Let $G = (V, E, \weight)$ be a strongly connected directed weighted graph.
    Every cycle in $G$ has weight zero if and only if 
    there is a \emph{potential function} $\varphi\colon V \rightarrow \mathbb{Z}$ such that
    $\varphi(v) - \varphi(u) = \weight((u,v))$ for each $(u,v) \in E$.
    Moreover, there is an algorithm that can compute such a $\varphi$ or detect a nonzero cycle in time $\Oh(n+m)$.
\end{lemma}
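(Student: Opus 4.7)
The plan is to establish the equivalence by constructing a potential function explicitly when every cycle has weight zero, and otherwise producing a nonzero cycle as a witness. The implication ``a potential exists $\Rightarrow$ every cycle has weight zero'' follows by a one-line telescoping argument: for any cycle $v_0, v_1, \ldots, v_k = v_0$, its total weight equals $\sum_{i=0}^{k-1}(\varphi(v_{i+1}) - \varphi(v_i)) = 0$. For the converse, I fix an arbitrary root $r \in V$ and, using strong connectivity, choose for each $v \in V$ an arbitrary walk $P_v$ from $r$ to $v$ and set $\varphi(v) \coloneqq \weight(P_v)$. To justify well-definedness, if $P, P'$ are two $r$-to-$v$ walks and $Q$ is any $v$-to-$r$ walk (which exists by strong connectivity), then $PQ$ and $P'Q$ are cycles of weight $0$, so $\weight(P) = \weight(P')$. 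Applying this same argument to the walks $P_u \cdot (u,v)$ and $P_v$ yields $\varphi(v) - \varphi(u) = \weight((u,v))$ for every edge $(u,v) \in E$, as required.

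For the linear-time algorithm, I would run a DFS from an arbitrary root $r$ in $G$, setting $\varphi(r) = 0$ and $\varphi(v) = \varphi(u) + \weight((u,v))$ whenever the DFS traverses a tree edge $(u,v)$. Strong connectivity guarantees that $\varphi$ is assigned on all of $V$ in $\Oh(n+m)$ time. Then I iterate over every non-tree edge $(u,v)$ and check whether $\varphi(v) - \varphi(u) = \weight((u,v))$; if every check passes, output $\varphi$. If some edge $(u,v)$ violates the equation, I extract a nonzero cycle as follows: additionally run a DFS in the reverse graph $G^R$ started at $r$, which produces for each vertex $x$ a walk $Q_x$ from $x$ to $r$ in $G$. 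Letting $T_x$ denote the tree walk from $r$ to $x$ produced by the first DFS, the cycles $C_1 = T_u \cdot (u,v) \cdot Q_v$ and $C_2 = T_v \cdot Q_v$ satisfy $\weight(C_1) - \weight(C_2) = \varphi(u) + \weight((u,v)) - \varphi(v) \neq 0$, so at least one of them has nonzero weight and is output as a witness.

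The main subtlety I expect is constructing a directed nonzero cycle witness from a violating edge: reversing forward tree walks is not available in a directed graph, which is precisely why the reverse-graph DFS from $r$ is needed to obtain back-walks $Q_x$. All remaining work is standard DFS bookkeeping, and each phase runs in $\Oh(n+m)$ time, giving the claimed linear bound.
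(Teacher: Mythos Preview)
Your proof is correct and follows essentially the same approach as the paper: both directions of the equivalence are argued identically (telescoping for one direction, and for the other, defining $\varphi$ via walk weights from a fixed root and using strong connectivity to close two distinct $r$-to-$v$ walks into cycles), and the algorithm is the same DFS-based potential assignment followed by an edge-consistency check. The only minor difference is that you explicitly construct a nonzero cycle witness via an additional reverse-graph DFS, whereas the paper is content to \emph{detect} the existence of a nonzero cycle from a violating edge without explicitly outputting one.
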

\begin{proof}
    First, assume that such a potential function $\varphi$ exists. Then each cycle $\pi = ((v_0,v_1),
    \ldots, (v_{\ell - 1},v_\ell))$ with $v_\ell=v_0$ has indeed weight zero:
    \begin{displaymath}
        \weight(\pi) = \sum_{i = 0}^{\ell - 1} \weight((v_i,v_{i+1})) = \sum_{i = 0}^{\ell - 1} (\varphi(v_{i+1}) - \varphi(v_i)) =\varphi(v_\ell) - \varphi(v_0) = 0.
    \end{displaymath}

Conversely, suppose that all cycles have weight zero.
We define a procedure that yields a potential function $\varphi : V \rightarrow \mathbb Z$.
Let $s$ be an arbitrary vertex and set $\varphi(s) \coloneqq 0$. Next, run a
depth-first search algorithm (DFS) that starts from $s$ and that, for every vertex
$v$ visited for the first time after traversing the edge $(u, v)$, sets
$\varphi(v) \coloneqq \varphi(u) + \weight((u,v))$. Then, for each visited vertex
$v$, the potential $\varphi(v)$ is the weight of a walk from $s$ to $v$.

Clearly, this algorithm runs in $\Oh(n+m)$ time. 
We argue that $\varphi(v) - \varphi(u) = \weight((u,v))$ for each $(u,v) \in E$. 
For the sake of contradiction, assume that there is an edge $(u, v) \in E$ with
$\varphi(v) \neq \varphi(u) + \weight((u, v))$. 
Then there are two $sv$-walks $\rho_1$ and $\rho_2$ of different weights
$\varphi(v)$ and $\varphi(u) + \weight((u, v))$ respectively (the $sv$-walk visited by the DFS and the $su$-walk visited by the DFS followed by the $(u, v)$ edge). Since $G$ is strongly connected, there is a walk $\rho$ from $v$ to $s$. This means that $\sigma_1 = \rho_1\rho$ and $\sigma_2 = \rho_2\rho$ are cycles with distinct weights. Hence, at least one of these weights must be nonzero, contradicting the assumption.

Finally, observe that in $\Oh(n + m)$ time we can run the above algorithm and
check whether $\varphi(v) = \varphi(u) + \weight((u,v))$ for every edge $(u, v)$. 
\end{proof}


\section{Lower bounds}\label{sec:lower}
We now prove the lower bounds stated in \cref{thm:NFA_lowerbounds,thm:WFA_lowerbound}. 
As already sketched in \cref{sec:technical_overview_lowerbounds}, we use the intermediate problem \kie defined in \cref{def:kie} that asks whether the $k$ given DFAs intersect. 

In \cref{sec:LB_kie}, we show hardness of \kie based on both the $k$-OV hypothesis or the $k$-Cycle hypothesis. In \cref{sec:LB_from_kie}, we reduce \twoie to \unambiguity, which in turn reduces to both \PFA and \determinisability. We also reduce \threeie to \PFA. \cref{fig:graph_reduction} summarises our reductions.

\subsection{Lower bounds for \kie}\label{sec:LB_kie}
We show hardness of \kie.
In \cref{thm:kcycle_2dfa}, we show that \twoie cannot be decided in $\Oh(|\mc A|^{2 - \eps})$ time for any $\eps >0$ unless the $k$-Cycle hypothesis fails. In \cref{thm:ov_twoie}, we show, for any $k\ge2$, that \kie cannot be decided in $\Oh(|\mc A|^{k - \eps})$ time for any $\eps >0$, unless the $k$-OV hypothesis fails.



We remark that by binary encoding every symbol of the alphabet $\Sigma$, we can assume that the alphabet of the DFAs in the considered \kie problems is binary. This only incurs a $\Oh(\log |\Sigma|)$ multiplicative factor in the instance size. 
We defer the efficient constructions to \cref{sec:appendix_lowerbounds} and focus on showing how such gadgets are exploited in the reductions from $k$-Cycle and $k$-OV.

\begin{claim}[Binary encoding of transitions]\label{def:binary_gadget_tree}
    In almost linear time, we can transform an instance of \kie over any given alphabet $\Sigma$ into an equivalent instance over the binary alphabet. 
    This transformation increases the size of the instance by a multiplicative factor of at most $\Oh(\log{|\Sigma|})$.
\end{claim}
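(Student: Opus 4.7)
The plan is to proceed by a direct construction. Let $\ell = \lceil \log_2 |\Sigma| \rceil$ and fix an injection $\phi\colon \Sigma \to \{0,1\}^\ell$, extended to $\phi\colon \Sigma^\ast \to \{0,1\}^\ast$ by concatenation. Since $\phi$ encodes each symbol as a bitstring of the same length $\ell$, the extension is injective and no code is a prefix of another. Given an instance $\mc D_1, \dots, \mc D_k$ of \kie over $\Sigma$, I will build, for each $\mc D_i = (Q_i, \Sigma, \delta_i, q_{0,i}, F_i)$, a DFA $\mc D_i'$ over $\{0,1\}$ by locally replacing each state's outgoing transitions with a binary trie.

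Concretely, for each $p \in Q_i$, let $S_p^{(i)} = \{a \in \Sigma \mid \exists q, (p, a, q) \in \delta_i\}$. I build a trie $T_p^{(i)}$ rooted at $p$ whose paths spell the codes $\phi(a)$ for $a \in S_p^{(i)}$, using fresh internal states and sharing common prefixes; the last edge of each such path leads to the corresponding target $q$ from $\delta_i$. The resulting $\mc D_i'$ keeps $q_{0,i}$ as the initial state and $F_i \subseteq Q_i$ as the accepting states, and is deterministic by construction of the trie. Each trie $T_p^{(i)}$ has at most $|S_p^{(i)}|\cdot \ell$ edges, so $|\mc D_i'| = \Oh(|\mc D_i|\cdot \log |\Sigma|)$, and building it takes the same time by inserting each code in $\Oh(\ell)$.

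The correctness argument will be a bijective mirroring between runs. If $w = a_1 \cdots a_n \in \bigcap_i \mc L(\mc D_i)$, then a straightforward induction shows that reading $\phi(w)$ in $\mc D_i'$ visits exactly the same sequence of $Q_i$-states (with trie internal states interspersed), so $\phi(w) \in \bigcap_i \mc L(\mc D_i')$. Conversely, if $u \in \bigcap_i \mc L(\mc D_i')$, the crucial observation, which I expect to be the main obstacle, is that $u$ must lie in $\phi(\Sigma^\ast)$: since accepting states lie in $Q_1 \subseteq Q_1'$ and every trie edge out of a $Q_1$-state must be followed by exactly $\ell - 1$ further trie edges before returning to $Q_1$, the length $|u|$ is a multiple of $\ell$, and each $\ell$-block must label a root-to-leaf trie path, hence lies in $\phi(\Sigma)$. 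By injectivity of $\phi$, these blocks determine a unique word $w \in \Sigma^\ast$ with $\phi(w) = u$, and repeating the mirroring for every $i$ yields $w \in \bigcap_i \mc L(\mc D_i)$. Note that garbage $\ell$-blocks (codes outside $\phi(\Sigma)$, possible when $|\Sigma|$ is not a power of two) or blocks in $\phi(\Sigma) \setminus \phi(S_p^{(i)})$ naturally kill the run in $\mc D_i'$ at an internal trie node, so no spurious acceptances occur.

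Summing over $i$ gives total size and construction time $\Oh\bigl(\log|\Sigma| \cdot \sum_i |\mc D_i|\bigr)$, establishing both the almost linear running time and the multiplicative blowup claim. I do not need to assume the $\mc D_i$ are complete, trim, or in any normal form; the construction is purely local at each state and preserves the two structural properties of \kie instances that matter, namely determinism and the language intersection.
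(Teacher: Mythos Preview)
Your proposal is correct and takes essentially the same approach as the paper: both replace the outgoing transitions of each state by a binary trie on the $\lceil\log_2|\Sigma|\rceil$-bit codes of the symbols, with fresh internal states and shared prefixes, yielding the $\Oh(\log|\Sigma|)$ blowup. Your correctness argument is in fact more detailed than the paper's, which simply asserts that the language is preserved up to the binary encoding.
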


\begin{claim}[Wildcard transitions]\label{def:binary_gadget_identity}
    A \emph{wildcard transition} between states $p$ and $q$ over the alphabet $\Sigma$ is a transition, denoted by $(p, \Sigma, q)$, that allows to transition from $p$ to $q$ by reading any symbol of the alphabet (instead of being restricted to a single letter). 

    An \emph{almost wildcard transition} between states $p$ and $q$ over the alphabet $\Sigma$ is a transition, denoted by $(p, \Sigma \setminus \{\sigma\}, q)$, that allows to transition from $p$ to $q$ by reading any letter of the alphabet except the letter $\sigma$, for some $\sigma \in \Sigma$.

    In almost linear time, we can transform an instance of \kie over the alphabet $\Sigma$ with \emph{(almost) wildcard transitions} over $\Sigma$ into an equivalent instance over the binary alphabet. The transformation increases the size of the instance by a multiplicative factor of at most $\Oh(\log
   |\Sigma|)$.
\end{claim}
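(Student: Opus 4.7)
The plan is to extend the binary encoding of \cref{def:binary_gadget_tree} so that wildcard and almost-wildcard meta-transitions are realised deterministically while spending only $\Oh(\log |\Sigma|)$ new states and transitions per meta-transition. Without loss of generality assume $|\Sigma|$ is a power of two (otherwise pad $\Sigma$ with unused dummy letters, at most doubling the alphabet; the gadgets below treat the dummies uniformly, so $\bigcap_i \Ll(\mc D_i)$ is preserved after restricting to the original letters). Fix an injective encoding $\mathrm{bin}\colon \Sigma \to \{0,1\}^L$ with $L = \lceil \log_2 |\Sigma|\rceil$, so that every length-$L$ binary string encodes exactly one letter. Ordinary letter-transitions are realised exactly as in \cref{def:binary_gadget_tree}, by a depth-$L$ binary decision tree at each state whose leaves redirect to the correct successor states.

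For a pure wildcard transition $(p, \Sigma, q)$, replace it by a simple chain $p = p_0, p_1, \ldots, p_L = q$ in which each $p_{i-1}$ has both a $0$- and a $1$-transition into $p_i$; any length-$L$ binary input is therefore absorbed from $p$ to $q$ in exactly $L$ steps, deterministically simulating the wildcard on every letter at a cost of $\Oh(L)$ fresh states and transitions. For an almost-wildcard $(p, \Sigma \setminus \{\sigma\}, q)$, possibly accompanied by a regular transition $(p, \sigma, q')$, construct a main chain $p = p_0 \to p_1 \to \cdots \to p_L = q'$ that reads the bits $b_1 \cdots b_L$ of $\mathrm{bin}(\sigma)$ (replacing $q'$ by a fresh rejecting sink if $p$ has no transition on $\sigma$): each state $p_{i-1}$ reads $b_i$ into $p_i$ and reads $\overline{b_i}$ into the state $d_{L-i}$ of a shared \emph{divergence chain} $d_{L-1}, \ldots, d_0 = q$, whose states each have both a $0$- and a $1$-transition into the next. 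A binary word then ends at $q'$ exactly when it equals $\mathrm{bin}(\sigma)$ and ends at $q$ otherwise; the gadget is deterministic and uses only $\Oh(L)$ new states and transitions, the key saving being that the divergence chain is shared across all $L$ possible divergence points (a naive per-bit divergence would cost $\Oh(L^2)$).

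Correctness then follows: each gadget assigns at most one outgoing transition per bit to every new state (so the resulting automata are DFAs), and the map $w_1 \cdots w_k \mapsto \mathrm{bin}(w_1) \cdots \mathrm{bin}(w_k)$ is a length-preserving bijection between accepting runs in the original and the binary DFA, so $\bigcap_i \Ll(\mc D_i) = \emptyset$ is preserved. Since every meta-transition contributes $\Oh(L) = \Oh(\log |\Sigma|)$ new states and transitions and each gadget is produced in time proportional to its size, the total multiplicative blow-up is $\Oh(\log |\Sigma|)$ and the construction runs in almost linear time. The main technical point is making the almost-wildcard gadget deterministic while singling out exactly one forbidden encoding $\mathrm{bin}(\sigma)$ from the exponentially many $L$-bit strings without resorting to a full decision tree; the shared divergence chain is precisely the device that achieves this in linear rather than quadratic size.
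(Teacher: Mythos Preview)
Your proposal is correct and follows essentially the same approach as the paper: both realise a wildcard transition by a length-$L$ chain with double $0/1$ edges, and an almost-wildcard by a ``main'' chain tracing $\mathrm{bin}(\sigma)$ together with a shared ``divergence'' chain (the paper's $q$-chain) that absorbs the remaining bits once the input deviates from $\mathrm{bin}(\sigma)$. Your treatment is in fact slightly more explicit than the paper's in one respect: you spell out what happens when the state also carries a regular transition on $\sigma$ (letting the main chain terminate at $q'$ rather than a sink), a case the paper's applications never actually exercise.
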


Note that one could easily implement a wildcard transition $(p, \Sigma, q)$ with $|\Sigma|$ regular transitions, specifically $(p, \sigma, q)$ for each $\sigma \in \Sigma$, and then apply \cref{def:binary_gadget_tree} to reduce the alphabet size. However, this increases the automaton size by $|\Sigma|$. \cref{def:binary_gadget_identity} shows how to achieve this while increasing the automaton size by only a $\Oh(\log{|\Sigma|})$ factor.

The following is a quadratic lower bound for \twoie assuming the $k$-Cycle hypothesis.

\begin{lemma}[$k$-\textsc{Cycle} reduces to \twoie]\label{thm:kcycle_2dfa}
    For every $\eps > 0$, there is no $\Oh(m^{2- \eps})$ time algorithm that
    decides whether the intersection of two DFAs of size $m$ is non-empty, assuming the $k$-Cycle hypothesis.
\end{lemma}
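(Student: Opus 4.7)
The plan is to reduce $k$-Cycle on a directed graph $G=(V,E)$ with $n$ vertices and $m$ edges to a \twoie instance of size $\widetilde\Oh(m)$ in almost linear time. A hypothetical $\Oh(N^{2-\eps})$ algorithm for \twoie would then solve $k$-Cycle in time $\widetilde\Oh(m^{2-\eps})$ for every fixed $k$, contradicting the $k$-Cycle hypothesis applied with any $\eps'\in(0,\eps)$.

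I encode a tentative $k$-cycle as a word $w=v_0v_1\cdots v_{k-1}\in V^k$ with implicit closing edge $(v_{k-1},v_0)$, and split its verification between two DFAs over the alphabet $\Sigma=V$ (reduced to binary at the end). The DFA $\mc D_1$ verifies the $k-1$ interior edges $(v_i,v_{i+1})$ for $i\in\{0,\dots,k-2\}$: its states are pairs $(i,u)\in\{0,\dots,k\}\times(V\cup\{\star\})$, and from $(i,u)$ on input $v$ it moves to $(i+1,v)$ iff $i=0$ or $(u,v)\in E$; this gives $\Oh(kn)$ states and $\Oh(km)$ transitions. The DFA $\mc D_2$ verifies only the closing edge $(v_{k-1},v_0)\in E$: it stores $v_0$ upon reading the first letter, then uses wildcard transitions (\cref{def:binary_gadget_identity}) over $V$ at the $k-2$ intermediate positions to preserve the stored $v_0$ while reading any symbol, and upon reading the last symbol it accepts iff that letter together with the stored $v_0$ forms an edge in $E$. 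Counted in the wildcard representation, $\mc D_2$ has $\Oh(kn)$ states, $\Oh(n)$ initial transitions, $\Oh(kn)$ wildcard transitions, and $\Oh(m)$ closing transitions. By construction, $\mc L(\mc D_1)\cap\mc L(\mc D_2)\neq\emptyset$ iff $G$ contains a closed walk of length $k$.

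Applying the binary-encoding gadgets (\cref{def:binary_gadget_tree} and \cref{def:binary_gadget_identity}), I convert both DFAs to a binary alphabet with only an $\Oh(\log n)$ multiplicative overhead, yielding a \twoie instance of size $N=\Oh(k(m+n)\log n)=\widetilde\Oh(m)$ for constant $k$. Combined with a hypothetical $\Oh(N^{2-\eps})$ algorithm for \twoie, this produces a $\widetilde\Oh(m^{2-\eps})$ algorithm for $k$-Cycle, contradicting the $k$-Cycle hypothesis after choosing $k$ from the hypothesis instantiated with $\eps'\in(0,\eps)$.

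The main obstacle is keeping $\mc D_2$ compact. Without the wildcard-transition gadget, preserving the stored $v_0$ across $k-2$ unconstrained letters over an alphabet of size $n$ would require $\Omega(nm)$ transitions, inflating the instance to $\Omega(m^2)$ and destroying the reduction; the gadget of \cref{def:binary_gadget_identity} is what brings this cost down to $\widetilde\Oh(m)$. A secondary subtlety is that the construction detects closed walks of length $k$ rather than simple cycles; this can be resolved by a standard color-coding preprocessing with $k$ colors embedded into $\mc D_1$'s states, adding only a $k^{\Oh(k)}$ multiplicative overhead that is absorbed for fixed $k$.
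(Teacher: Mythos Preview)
Your proposal is correct and follows essentially the same approach as the paper: encode a candidate $k$-cycle as a word over the alphabet $\Sigma=V$, let one DFA verify the edges along the walk while the other remembers the start vertex across the intermediate positions via wildcard transitions, and then apply the binary-encoding gadgets of \cref{def:binary_gadget_tree,def:binary_gadget_identity}. The only noteworthy difference is how simplicity of the cycle is ensured: the paper preprocesses $G$ into a $k$-circle-layered graph (via \cite[Lemma~2.2]{LincolnWW18}), so that any length-$k$ closed walk is automatically a simple $k$-cycle, whereas you invoke color coding; both are standard and either suffices here.
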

\begin{proof}
Let $k \ge 3$ be a fixed constant that depends on $\eps$ and consider a graph $G = (V, E)$.
We can assume without loss of generality that $G$ is a $k$-circle-layered graph, i.e. $V  = V_0 \uplus V_1 \dots \uplus V_{k-1}$ and $E \subseteq \{V_i \times V_{i + 1 \text{ mod } k} \mid i \in \rng{0}{k-1}\}$~\cite[Lemma~2.2]{LincolnWW18}.
We can also assume without loss of generality that every layer of $G$ contains the same number $n = 2^{h}$ of vertices for some $h \in \mathbb N$. 
We denote, for each $i \in \rng{0}{k-1}$, the vertices of $V_i$ by $v^i_0, v^i_1, \dots, v^i_{n-1}$. Let $m \coloneqq |E|$. 
To prove the statement, we describe two DFAs $\mc D_1$ and $\mc D_2$ using wildcard transitions whose intersection is non-empty if and only if $G$ has a $k$-Cycle, such that $\mc D_1$ and $\mc D_2$ have size $\Oh(n + m)$ and an alphabet of size $n$. 
By applying \cref{def:binary_gadget_identity,def:binary_gadget_tree}, we can reduce the alphabet size to $2$ (and thus remove wildcard transitions), while preserving the described equivalence. 
\cref{fig:kcycle_hardness} illustrates this construction.

Let $\Sigma \coloneqq \rng{0}{n-1}$.
We design $\mc D_1$ to recognise words that represent simple paths from $V_0$ to $V_0$ of length $k$ in $G$, and $\mc D_2$ to recognise words encoding sequences of vertices where the first and last vertex are the same. Then words accepted by both automata correspond to $k$-Cycles in $G$. 
Therefore, let $\mc D_1$ be a DFA over the alphabet $\Sigma$ that recognizes the language
\begin{align*}
	\mc L(\mc D_1) &= \{{j_0} {j_1} \dots {j_{k-1}} {j_k} \mid  (v^0_{j_0}, v^1_{j_1}), (v^1_{j_1}, v^2_{j_2}), \dots, (v^{k-1}_{j_{k-1}}, v^0_{j_k}) \in E\}
\intertext{
    and let $\mc D_2$ be a DFA over the alphabet $\Sigma$ that recognises the language
}
\mc L(\mc D_2) &= \{{j} \sigma^1 \dots \sigma^{k-1} {j} \mid  j, \sigma^1,
\sigma^2, \dots, \sigma^{k-1} \in \Sigma \}
\intertext{
    Precisely, let $\mc D_1 = (Q_1, \Sigma, \delta_1, \{x_I\}, F_1)$ with
}
    Q_1 & \coloneqq \{x_I \} \cup \{x^{i}_j \mid j \in \rng{0}{n-1}, i \in \rng{0}{k}\},\\
F_1 & \coloneqq \{x^k_j \mid j \in \rng{0}{n-1}\},\\
\delta_1 &\coloneqq \{(x_I, j, x^{0}_j) \mid j \in \rng{0}{n-1}\} \cup \{(x^{k-1}_{j'}, j, x^{k}_j)\mid (v^{k-1}_{j'}, v^{0}_j) \in E \} \\
&\phantom{\coloneqq } \cup \bigcup_{i \in \rng{0}{k-2} } \{(x^{i}_{j'}, j, x^{i+1}_j) \mid (v^{i}_{j'}, v^{i+1}_j) \in E\}
\intertext{
    and let $\mc D_2 \coloneqq (Q_2, \Sigma, \delta_2, \{y_I\}, F_2)$ with
}
    Q_2 & \coloneqq \{y_I \} \cup \{y^{i}_j \mid j \in \rng{0}{n-1}, i \in \rng{0}{k}\},\\
F_2 & \coloneqq \{y^k_j \mid j \in \rng{0}{n-1}\},\\
\delta_2 &\coloneqq \{(y_I, j, y^{0}_j) \mid j \in \rng{0}{n-1}\}  \cup \{(y^{k-1}_j, j, y^{k}_j)\mid j \in \rng{0}{n-1} \} \\
&\phantom{\coloneqq } \cup \bigcup_{i \in \rng{0}{k-2} } \{(y^{i}_j, \Sigma,
y^{i+1}_j) \mid j \in \rng{0}{n-1} \}.
\end{align*}
Note that $\Dd_2$ contains wildcard transitions.
The size of $\Dd_1$ and $\Dd_2$ is $\Oh(n+m)$. 
We now argue the correctness of the reduction, i.e. we show that there is a word $w \in \Sigma^*$ accepted by both $\mc D_1$ and $\mc D_2$ if and only if $G$ admits a $k$-Cycle.
 
\begin{figure*}[t]
    \centering
    \begin{subfigure}[t]{.3\textwidth}
        \resizebox{\textwidth}{!}{%
            \begin{tikzpicture}[thick,font=\large]
         
        \draw[fill=black] (0,0) circle[radius=2.5pt] node(C1)  {};
        \draw[fill=black] ($(C1) + (1,0)$) circle[radius=2.5pt] node(C2) {};
        
        \draw[fill=black] ($(C1) + (-1.5,2)$) circle[radius=2.5pt] node(A1) {};
        \draw[fill=black] ($(A1) + (0.5,0.5)$) circle[radius=2.5pt] node(A2) {};
        
        \draw[fill=black] ($(C2) + (1.5,2)$) circle[radius=2.5pt] node(B1) {};
        \draw[fill=black] ($(B1) + (-0.5,0.5)$) circle[radius=2.5pt] node(B2) {};
        
        \node[font=\Large,below=5pt] at (C1) {$c_1$};
        \node[font=\Large,below=5pt] at (C2) {$c_0$};
        \node[font=\Large,above left=2.5pt] at (A1) {$a_0$};
        \node[font=\Large,above left=2.5pt] at (A2) {$a_1$};
                
        \node[font=\Large,above right=2.5pt] at (B1) {$b_0$};
        \node[font=\Large,above right=2.5pt] at (B2) {$b_1$};

        \path[cb_blue] (A1) edge (B1);
        \path[cb_blue,line width=1.7pt] (A2) edge (B2);
        \path[cb_red] (B2) edge (C1);
        \path[cb_red,line width=1.7pt] (B2) edge (C2);
        \path[cb_green] (C1) edge (A1);
        \path[cb_green,line width=1.7pt] (C2) edge (A2);
         
        \node at (.5, 3.5) {\Large{$G$}};

        \begin{scope}[opacity=0]
            \draw[fill=black] (0,-1.5) circle[radius=2.5pt]  {};
        \end{scope}

        \end{tikzpicture}
        }
    \end{subfigure}
    \hfill
    \begin{subfigure}[t]{0.6\textwidth}
        \resizebox{\textwidth}{!}{%
        \begin{tikzpicture}[
            font=\large,xscale=.75,yscale=.8,
            state/.append style={minimum size=2.3em,inner sep=0pt},thick,>={Stealth[length=2.4mm, width=1.2mm]}, 
            align=center, node distance=1.5cm
            ]
            
            \node[state,initial,initial where = left,initial text=$ $] at (-1.5, 0) (xI) {$x_I$};
            \node[state] at (1, 1) (x00) {$x^0_0$};
            \node[state,right = of x00] (x01) {$x^1_0$};
            \node[state,right = of x01] (x12) {$x^2_0$};
            \node[state,right = of x12, accepting]  (x03) {$x^3_0$};
            \node[state] at (1, -1) (x10) {$x^0_1$};
            \node[state,right = of x10] (x11) {$x^1_1$};
            \node[state,right = of x11] (x02) {$x^2_1$};
            \node[state,right = of x02,accepting] (x13) {$x^3_1$};

            \path
            (xI) edge[->] node[above] {$0$} (x00)
            (xI) edge[->,line width=1.7pt] node[above] {$1$} (x10)
            (x00) edge[->,cb_blue] node[above] {$0$} (x01)
            (x11) edge[->,cb_red,line width=1.7pt] node[above] {$0$} (x12)
            (x11) edge[->,cb_red] node[above] {$1$} (x02)
            (x02) edge[->,cb_green] node[above left=10pt] {$1$} (x03)
            (x12) edge[->,cb_green,line width=1.7pt] node[above right=10pt] {$0$} (x13)
            (x10) edge[->,cb_blue,line width=1.7pt] node[above] {$1$} (x11)
            ;

            \node[state, initial,initial where = left,initial text=$ $] at (-1.5, -4)  (yI) {$y_I$};
            \node[state] at (1, -3) (y00) {$y^0_0$};
            \node[state,right = of y00] (y01) {$y^1_0$};
            \node[state,right = of y01] (y02) {$y^2_0$};
            \node[state,right = of y02,accepting] (y03) {$y^3_0$};
            \node[state] at (1, -5) (y10) {$y^0_1$};
            \node[state,right = of y10] (y11) {$y^1_1$};
            \node[state,right = of y11] (y12) {$y^2_1$};
            \node[state,right = of y12,accepting] (y13) {$y^3_1$};

            \path
            (yI) edge[->] node[above] {$0$} (y00)
            (yI) edge[->,line width=1.7pt] node[above] {$1$} (y10)
            (y00) edge[->] node[above] {$\Sigma$} (y01)
            (y10) edge[->,line width=1.7pt] node[above] {$\Sigma$} (y11)
            (y11) edge[->,line width=1.7pt] node[above] {$\Sigma$} (y12)
            (y01) edge[->] node[above] {$\Sigma$} (y02)
            (y02) edge[->] node[above] {$0$} (y03)
            (y12) edge[->,line width=1.7pt] node[above] {$1$} (y13)
            ;

            \node at (-3, 0) {$\mc D_1$};
            \node at (-3, -4) {$\mc D_2$};
        \end{tikzpicture}
        }
    \end{subfigure}
    \caption{
    Example of the reduction of $k$-\textsc{Cycle} to \twoie in \cref{thm:kcycle_2dfa}. The $3$-circle-layered directed graph $G$ is reduced to the DFA $\mc D_1$  and the DFA $\mc D_2$ over the alphabet $\Sigma = \{0, 1\}$. The cycle $(a_1, b_1, c_0)$ corresponds to the word $w = 1101$, which is recognised by both automata $\mc D_1$ and $\mc D_2$ via the runs 
    highlighted in bold. 
    }\label{fig:kcycle_hardness}
\end{figure*}

$(\impliedby)$
Assume that there is a $k$-Cycle on vertices $v^0_{j_0}, v^1_{j_1}, \dots, v^{k-1}_{j_{k-1}}$ in $G$. Then the word $w = {j_0} {j_1} \dots {j_{k-1}} j_0$ is recognised by $\mc D_1$ via the run visiting states $x_I, x^{0}_{j_0}, x^{1}_{j_1}, \dots, x^{k-1}_{j_{k-1}}, x^{k}_{j_0}$ and recognised by $\mc D_2$ via the run visiting states $y_I, y^0_{j_0}, y^1_{j_0}, \dots, y^{k-1}_{j_0}, y^k_{j_0}$. 

$(\implies)$
Conversely, assume that there is a word $w \in \mc L(\mc D_1) \cap \mc L(\mc D_2)$. 
Then, $w \in \mc L(\mc D_1) \cap \mc L(\mc D_2)$ if and only if $w = {j_0} {j_1} \dots {j_{k-1}} {j_0}$ such that $(v^0_{j_0}, v^1_{j_1}), (v^1_{j_1}, v^2_{j_2}), \dots, (v^{k-1}_{j_{k-1}}, v^0_{j_0})$ are edges of $G$. This means that there is a $k$-Cycle in $G$ on the vertices $v^0_{j_0}, v^1_{j_1}, \dots, v^{k-1}_{j_{k-1}}$.
\end{proof}

Now we prove quadratic hardness of $k$-IE under the $k$-OV hypothesis. As previously mentioned, this is strongly inspired by~Wehar's proof~\cite[Theorem 7.22]{WeharThesis16}.
Note that~\cite[Theorem 7.22]{WeharThesis16} is a corollary of~\cref{thm:ov_twoie} as SETH implies the $k$-OV hypothesis~\cite{Williams05}.

\begin{lemma}\label{thm:ov_twoie}
    For every $k \ge 2$ and $\eps > 0$, deciding non-emptiness of the intersection of $k$ DFAs of
    size $n$ does not admit an $\Oh(n^{k-\eps})$ time algorithm, assuming the $k$-\ov hypothesis.
\end{lemma}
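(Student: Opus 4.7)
The plan is to reduce $k$-OV to $k$-IE. Given sets $A_1,\dots,A_k \subseteq \{0,1\}^d$ with $|A_m|=n$ and $d = c\log n$, I will construct $k$ DFAs $\Dd_1,\dots,\Dd_k$, each of size $\Ot(n)$, whose languages have a common word if and only if the $k$-OV instance is positive. Words are two-part strings of the shape $w = i_1 i_2 \cdots i_k\ j_1 j_2 \cdots j_d$, where each $i_m \in \rng{1}{n}$ is encoded in $\lceil\log n\rceil$ bits and picks a vector from $A_m$, and each $j_l \in \rng{1}{k}$ is encoded in $\lceil\log k\rceil$ bits and designates the vector that is responsible for zeroing out coordinate~$l$.

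Each DFA $\Dd_m$ ignores the bits of the indices $i_{m'}$ for $m' \neq m$ (it simply consumes the corresponding block of bits), then reads $i_m$ by descending a binary tree of $\Oh(n)$ states that loads the chosen vector $a_{m,i_m}$ into its memory, and finally scans the witness block, checking $a_{m,i_m}[l] = 0$ whenever $j_l = m$; it accepts iff every such check succeeds. Correctness is immediate: a common accepted word fixes indices $(i_1,\dots,i_k)$ and a witness $(j_1,\dots,j_d)$ in which, for every~$l$, the vector $a_{j_l,i_{j_l}}$ has a zero at position~$l$, so $\prod_{m=1}^k a_{m,i_m}[l] = 0$ for every $l$ and the chosen tuple is orthogonal. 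Conversely, any orthogonal tuple $(a_{1,i_1^\ast},\dots,a_{k,i_k^\ast})$ admits, for each $l$, a valid choice of $j_l = m$ with $a_{m,i_m^\ast}[l]=0$, yielding a word accepted by every $\Dd_m$.

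The main obstacle will be bounding the size of each $\Dd_m$ by $\Ot(n)$. After the $\Oh(n)$-state binary tree that memorises $i_m$, the automaton must branch into $n$ independent copies that cannot be merged, since the subsequent processing depends on the entire vector $a_{m,i_m}$. I plan to consume the remaining $(k-m)\lceil\log n\rceil$ index bits via $n$ parallel linear chains (contributing $\Oh(n\log n)$ states) and to scan the $d$ witness symbols of $\lceil\log k\rceil$ bits per copy (contributing $\Oh(n d \log k)$ states). With $d=\Oh(\log n)$ and constant~$k$, this yields $|\Dd_m| = \Oh(n\log n) = \Ot(n)$. The construction is already over a binary alphabet; if a larger alphabet is preferable for clarity during the write-up, \cref{def:binary_gadget_tree} reduces it to $\{0,1\}$ at a negligible logarithmic cost.

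Putting everything together, an $\Oh(N^{k-\eps})$-time algorithm for $k$-IE applied to our instance of total size $N = \Oh(k n \log n) = \Ot(n)$ would decide the original $k$-OV instance in time $\Oh\bigl((n\log n)^{k-\eps}\bigr) = \Oh(n^{k-\eps/2})$ for all sufficiently large $n$, contradicting the $k$-OV hypothesis and completing the proof.
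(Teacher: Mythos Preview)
Your proposal is correct and takes essentially the same approach as the paper: both encode a candidate word as a block of $k$ vector indices followed by a block of $d$ responsibility indices in $\rng{1}{k}$, and build $\Dd_m$ to branch on the $m$-th index, ignore the others, and verify that the loaded vector has a zero at every coordinate assigned to~$m$. The only cosmetic difference is that the paper first constructs the automata over the large alphabet $\rng{1}{n}\cup\rng{1}{k}$ using (almost) wildcard transitions and then invokes \cref{def:binary_gadget_tree,def:binary_gadget_identity} to pass to binary, whereas you build directly over $\{0,1\}$; both routes yield DFAs of size $\Ot(n)$ and the same correctness argument.
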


\begin{proof}
Fix $k \ge 2$ and consider a $k$-OV instance $A_1, \dots, A_k \subset \{0, 1\}^d$ with $n = |A_1| = \dots = |A_k|$ and $d = \Theta(\log n)$. Denote the vectors in $A_i$ by $A_i = \{a^i_1, \dots, a^i_n\}$, for every $i \in \rng{1}{k}$.
To prove the statement, we construct $k$ DFAs $\mc D_1, \dots, \mc D_k$, each of size $\Oh(n \cdot k + n \cdot d)$, whose intersection is non-empty if and only if the answer to $k$-OV is positive. The constructed DFAs use an alphabet of size $\Oh(n +k)$ and wildcard transitions. By \cref{def:binary_gadget_tree,def:binary_gadget_identity}, we can reduce the alphabet to a binary one (and thus remove wildcard transitions). \cref{fig:ov_twoie} illustrates the construction.

Recall that the answer to $k$-OV is positive if and only if there are vectors $(a^1_{j_1}, \dots, a^k_{j_k}) \in A_1 \times \dots \times A_k$ such that for every coordinate $\ell \in \rng{1}{d}$ there is an index $r_\ell \in \rng{1}{k}$ with $a^{r_\ell}_{j_{r_\ell}}[\ell] = 0$. In that case, we say that $A_{r_\ell}$ is \emph{responsible} for the coordinate $\ell$, since, intuitively, it should provide a vector whose $\ell$-th coordinate is zero. 
The choice of vectors can be encoded by the indices $j_1, \dots, j_k \in \rng{1}{n}$, and the assignment of responsibilities can be encoded by the indices $r_1, \dots, r_d \in \rng{1}{k}$.
The idea is therefore to design $\mc D_i$ such that it recognises words of the shape $w = j_1 \dots j_k r_1 \dots r_d$ with $j_1, \dots, j_k \in \rng{1}{n}$, $r_1, \dots, r_d \in \rng{1}{k}$, and such that for every $\ell \in \rng{1}{d}$ if $r_\ell = i$ then $a^i_{j_i}[\ell] = 0$. Then a word $w = j_1 \dots j_k r_1 \dots r_d$ recognised by all DFAs $\mc D_1, \dots, \mc D_k$ witnesses the orthogonality of the vectors $a^{1}_{j_1}, a^{2}_{j_2}, \dots, a^{k}_{j_k}$. 

Formally, let $\Sigma_1 \coloneqq \rng{1}{n}$, $\Sigma_2 \coloneqq \rng{1}{k}$, and $\Sigma \coloneqq \Sigma_1 \cup \Sigma_2$. 
%
We define the following alphabet for every $i \in \rng{1}{k}$, $j \in \rng{1}{n}$ and $\ell \in \rng{1}{d}$:
\begin{align*}
    \Sigma(i, j, \ell) &\coloneqq \begin{cases}
        \Sigma_2 &\text{ if } a^i_j[\ell] = 0\\
        \Sigma_2 \setminus \{i\} &\text{ otherwise.}
    \end{cases}
\intertext{For any $i \in \rng{1}{k}$ and $j \in \rng{1}{n}$, consider the regular language }
    \mc L_{i, j} &\coloneq \Sigma(i, j, 1) \Sigma(i, j, 2) \dots \Sigma(i, j, d)
\intertext{that contains all words $w$ on the alphabet $\Sigma_2$ of length $d$ such that $w[\ell]$, the $\ell$-th letter of $w$, can be the symbol $i \in \Sigma_2$ if and only if $a^i_j[\ell] = 0$. 
Next, define for every $i \in \rng{1}{k}$ the DFA $\mc D_i$ over the alphabet $\Sigma = \Sigma_1 \cup \Sigma_2$ to recognise the regular language}
 \mc L(\mc D_i) &\coloneqq \{j_1 \dots j_k w \ \mid \ j_1, \dots, j_k \in \rng{1}{n} \text{ and } w \in \mc L_{i, j_i}\}.
\end{align*}
We claim that there are orthogonal vectors in $A_1, \dots, A_k$ if and only if $\mc D_1, \dots, \mc D_k$ intersect.
Suppose that there are $j_1, \dots, j_k \in \rng{1}{n}$ such that $a^1_{j_1}, \dots, a^k_{j_k} \in A_1 \times \dots \times A_k$ are orthogonal. In particular, this means that there are indices $r_1, \dots, r_d \in \rng{1}{k}$ such that for every $\ell \in \rng{1}{d}$ we have $a^{r_\ell}_{j_{r_\ell}}[\ell] = 0$.
Consider the word $w = r_1 \dots r_d$. By the above definitions, for all $\ell \in \rng{1}{d}$ we have $r_\ell \in \Sigma(i, j_{i}, \ell)$ where $i = r_\ell$. So for every $i \in \rng{1}{k}$, we have $w \in \mc L_{i, j_i}$ and so the word $w' = j_1 \dots j_k w $ belongs to the language $\mc L(\mc D_i)$.
This means that the intersection of $\mc D_1, \dots, \mc D_k$ contains the word $w'$. 
Conversely, assume that $\mc D_1, \dots, \mc D_k$ have a non-empty intersection and let $w \in \bigcap_{i=1}^k \mc L(\mc D_i)$. Then $w$ is of the form $w = j_1 \dots j_k w$ where for each $i \in \rng{1}{k}$ we have $w \in \mc L_{i, j_i}$. In particular, for every $\ell \in \rng{1}{d}$ we have $a^i_{j_i}[\ell] = 0$ for $i = w[\ell]$. This means that $a^1_{j_1}, \dots, a^k_{j_k}$ are orthogonal vectors.

It remains to show an efficient construction of $\mc D_i$.
See \cref{fig:ov_twoie} for an example.
\begin{figure*}[t]
    \centering
    \resizebox{\textwidth}{!}{%
    \begin{tikzpicture}[xscale=.75,yscale=.8,
        state/.append style={minimum size=2.3em,inner sep=0pt},thick,>={Stealth[length=2.4mm, width=1.2mm]}]

        \node[state,initial,initial where = left,initial text= $ $,initial distance=2em] (xI10)  {$x_I^{1, 0}$};
            \node[state] (x211) [right = of xI10] {$x_2^{1, 1}$};
            \node[state] (x111) [above=of x211] {$x_1^{1, 1}$};
            \node[state] (x311) [below=of x211] {$x_3^{1, 1}$};
            \node[state] (x112) [right = of x111] {$x_1^{1, 2}$};
            \node[state] (x212) [right = of x211] {$x_2^{1, 2}$};
            \node[state] (x312) [right = of x311] {$x_3^{1, 2}$};

            \node[state,initial,initial where = left,initial text= $ $,initial distance=2em] (xI20) [below = 5cm of xI10] {$x_I^{2, 0}$};
            \node[state] (xI21) [right = of xI20] {$x_I^{2, 1}$};
            \node[state] (x222) [right = of xI21] {$x_2^{2, 2}$};
            \node[state] (x122) [above = of x222] {$x_1^{2, 2}$};
            \node[state] (x322) [below = of x222] {$x_3^{2, 2}$};

            \draw (xI10) edge[->,cb_red] node[above] {$1$} (x111);
            \draw (xI10) edge[->,cb_red, line width=1.5pt] node[above] {$2$} (x211);
            \draw (xI10) edge[->,cb_red] node[above] {$3$} (x311);

            \draw (x111) edge[->,cb_blue] node[above] {$\Sigma_1$} (x112);
            \draw (x211) edge[->,cb_blue, line width=1.5pt] node[above] {$\Sigma_1$} (x212);
            \draw (x311) edge[->,cb_blue] node[above] {$\Sigma_1$} (x312);

            \draw (xI20) edge[->,cb_red, line width=1.5pt] node[above] {$\Sigma_1$} (xI21);

            \draw (xI21) edge[->,cb_blue, line width=1.5pt] node[above] {$1$} (x122);
            \draw (xI21) edge[->,cb_blue] node[above] {$2$} (x222);
            \draw (xI21) edge[->,cb_blue] node[above] {$3$} (x322);

            \node[state] (y111) [right = 1.5cm of x112] {$y_1^{1, 1}$};
            \node[state] (y211) [right = 1.5cm of x212] {$y_2^{1, 1}$};
            \node[state] (y311) [right = 1.5cm of x312] {$y_3^{1, 1}$};
            \node[state] (y112) [right = 1.5cm of y111] {$y_1^{1, 2}$};
            \node[state] (y212) [right = 1.5cm of y211] {$y_2^{1, 2}$};
            \node[state] (y312) [right = 1.5cm of y311] {$y_3^{1, 2}$};

            \node[state] (y113) [right = 1.5cm of y112] {$y_1^{1, 3}$};
            \node[state] (y213) [right = 1.5cm of y212] {$y_2^{1, 3}$};
            \node[state] (y313) [right = 1.5cm of y312] {$y_3^{1, 3}$};
            \node[state, accepting] (y114) [right = 1.5cm of y113] {$y_1^{1, 4}$};
            \node[state, accepting] (y214) [right = 1.5cm of y213] {$y_2^{1, 4}$};
            \node[state, accepting] (y314) [right = 1.5cm of y313] {$y_3^{1, 4}$};

            \node[state] (y121) [right = 1.5cm of x122] {$y_1^{2, 1}$};
            \node[state] (y221) [right = 1.5cm of x222] {$y_2^{2, 1}$};
            \node[state] (y321) [right = 1.5cm of x322] {$y_3^{2, 1}$};

            \node[state] (y122) [right = 1.5cm of y121] {$y_1^{2, 2}$};
            \node[state] (y222) [right = 1.5cm of y221] {$y_2^{2, 2}$};
            \node[state] (y322) [right = 1.5cm of y321] {$y_3^{2, 2}$};

            \node[state] (y123) [right = 1.5cm of y122] {$y_1^{2, 3}$};
            \node[state] (y223) [right = 1.5cm of y222] {$y_2^{2, 3}$};
            \node[state] (y323) [right = 1.5cm of y322] {$y_3^{2, 3}$};
            \node[state, accepting] (y124) [right = 1.5cm of y123] {$y_1^{2, 4}$};
            \node[state, accepting] (y224) [right = 1.5cm of y223] {$y_2^{2, 4}$};
            \node[state, accepting] (y324) [right = 1.5cm of y323] {$y_3^{2, 4}$};

            \draw (x112) edge[->] node[above] {$\Sigma_2$} (y111);
            \draw (x212) edge[->, line width=1.5pt] node[above] {\textbf{\textcolor{cb_green}{$\Sigma_2$}}} (y211);
            \draw (x312) edge[->] node[above] {$\Sigma_2 \setminus \{1\}$} (y311);
            
            \draw (y111) edge[->] node[above] {$\Sigma_2$} (y112);
            \draw (y211) edge[->, line width=1.5pt] node[above] {\textbf{\textcolor{cb_green}{$\Sigma_2 \setminus \{1\}$}}} (y212);
            \draw (y311) edge[->] node[above] {$\Sigma_2 \setminus \{1\}$} (y312);
        
            \draw (y112) edge[->] node[above] {$\Sigma_2 \setminus \{1\}$} (y113);
            \draw (y212) edge[->, line width=1.5pt] node[above] {\textbf{\textcolor{cb_green}{$\Sigma_2 \setminus \{1\}$}}} (y213);
            \draw (y312) edge[->] node[above] {$\Sigma_2 \setminus \{1\}$} (y313);
        
            \draw (y113) edge[->] node[above] {$\Sigma_2$} (y114);
            \draw (y213) edge[->, line width=1.5pt] node[above] {\textbf{\textcolor{cb_green}{$\Sigma_2$}}} (y214);
            \draw (y313) edge[->] node[above] {$\Sigma_2 \setminus \{1\}$} (y314);

            \draw (x122) edge[->, line width=1.5pt] node[above] {\textbf{\textcolor{cb_green}{$\Sigma_2$}}} (y121);
            \draw (x222) edge[->] node[above] {$\Sigma_2$} (y221);
            \draw (x322) edge[->] node[above] {$\Sigma_2 \setminus \{2\}$} (y321);

            \draw (y121) edge[->, line width=1.5pt] node[above] {\textbf{\textcolor{cb_green}{$\Sigma_2$}}} (y122);
            \draw (y221) edge[->] node[above] {$\Sigma_2 \setminus \{2\}$} (y222);
            \draw (y321) edge[->] node[above] {$\Sigma_2 \setminus \{2\}$} (y322);

            \draw (y122) edge[->, line width=1.5pt] node[above] {\textbf{\textcolor{cb_green}{$\Sigma_2$}}} (y123);
            \draw (y222) edge[->] node[above] {$\Sigma_2$} (y223);
            \draw (y322) edge[->] node[above] {$\Sigma_2 \setminus \{2\}$} (y323);
        
            \draw (y123) edge[->, line width=1.5pt] node[above] {\textbf{\textcolor{cb_green}{$\Sigma_2 \setminus \{2\}$}}} (y124);
            \draw (y223) edge[->] node[above] {$\Sigma_2 \setminus \{2\}$} (y224);
            \draw (y323) edge[->] node[above] {$\Sigma_2$} (y324);

            \node [right = .5cm of y114] {\large $a_1 = 0010$};
            \node [right = .5cm of y214] {\large $a_2 = \textcolor{cb_green}{0110}$};
            \node [right = .5cm of y314] {\large $a_3 = 1111$};
            \node [right = .5cm of y124] {\large $b_1 = \textcolor{cb_green}{0001}$};
            \node [right = .5cm of y224] {\large $b_2 = 0101$};
            \node [right = .5cm of y324] {\large $b_3 = 1110$};

            \node [above = 1.5cm of xI10] {\large $\mc D_1$};
            \node [above = 1.5cm of xI20] {\large $\mc D_2$};
    \end{tikzpicture}
    }
    \caption{
        Example of the construction in \cref{thm:ov_twoie} of the DFAs $\mc D_1$ and $\mc D_2$. 
        In the 2-OV problem we want to check whether $\textcolor{cb_red}{\exists a} \in A, \textcolor{cb_blue}{\exists b} \in B$ such that $\textcolor{cb_green}{\forall \ell} \in [d], a[\ell] = 0 \text{ or } b[\ell] = 0$. In our construction, each ``branch'' of the DFAs corresponds to a vector of the 2-OV instance $A \coloneqq \{a_1, a_2, a_3\}$ and $B = \{b_1, b_2, b_3\}$. The parameters are $k=2$, $n = 3$ and $d = 4$, and thus $\Sigma_1 = \{1, 2, 3\}$ and $\Sigma_2 = \{1, 2\}$.
        In the above example, the vectors $a_2 = 0110$ and $b_1 = 0001$ are orthogonal, which is witnessed by the two accepting runs on $w = \textcolor{cb_red}{2} \textcolor{cb_blue}{1} \textcolor{cb_green}{1221}$ highlighted in bold.
    }
    \label{fig:ov_twoie}
\end{figure*}
Fix $i \in \rng{1}{k}$ and define the following set of states and transitions that encode the choice of vectors.
\begin{align*}
    Q_i^{(1)} & \coloneq \{x_I^{i, i'} \ \mid \ i' \in \rng{0}{i-1}\} \cup \bigcup_{j=1}^{n}\{x_j^{i, i'} \ \mid \ i' \in \rng{i}{k}\} \\
    \delta_i^{(1)} &\coloneq \{(x_I^{i, i' -1}, \Sigma_1, x_I^{i, i'}) \ \mid \ i'\in \rng{1}{i-1}\} \\
    & \cup \{(x_I^{i, i-1}, j, x^{i, i}_j) \ \mid \ j \in \rng{1}{n}\} \\
    & \cup \{(x_j^{i, i' -1}, \Sigma_1, x_j^{i, i'}) \ \mid \ i'\in \rng{1}{i-1}, j \in \rng{1}{n}\} 
\end{align*}
Intuitively, the automaton $\mc D_i$ stays put for the first $i-1$ letters, branches on the vector chosen in $\mc A_i$ when reading the $i$-th letter, and then stays put until $k$ letters are read.
Now fix $j \in \rng{1}{n}$. For the branch corresponding to choosing the vector $a^i_j \in A_i$, we construct the following sets of states and transitions that simulate the choice of responsibilities.
For ease of description, let $y_j^{i, 0} := x_j^{i, k}$.
\begin{align*}
    Q_{i, j}^{(2)} \coloneqq {}& \bigcup_{j=1}^n\{y_j^{i, \ell} \mid \ell \in \rng{0}{d}\} \\
    \delta_{i, j}^{(2)} \coloneqq {}& \bigcup_{j=1}^n\{(y_j^{i, \ell - 1}, \Sigma_2, y_j^{i, \ell}) \mid \ell \in \rng{1}{d}, a^i_j[\ell] = 0\} \\
    &\!\!\cup \{(y_j^{i, \ell -1}, \Sigma_2 \setminus \{i\}, y_j^{i, \ell}) \mid \ell \in \rng{1}{d}, a^i_j[\ell] = 1\} 
\end{align*}
Intuitively, if the automaton reads the letter $i$ in position $\ell \in \rng{1}{d}$, then it ensures that the run is not accepted if $a^i_j[\ell] = 1$. Otherwise, the automaton accepts any letter at position $\ell$.
Finally, let $\mc D_i = (Q_i, \Sigma, \delta_i, {x_I^{i, 0}}, F_i)$ where $Q_i \coloneqq Q_i^{(1)} \cup \bigcup_{j=1}^n Q_{i, j}^{(2)}$, $\delta_i \coloneqq \delta_i^{(1)} \cup \bigcup_{j=1}^n \delta_{i, j}^{(2)}$ and $F_i \coloneqq \{y_j^{i, d} \mid j \in \rng{1}{n}\}$. This uses $\Oh(n \cdot k + d \cdot k)$ states and transitions, including wildcard transitions over the alphabets $\Sigma_1$ and $\Sigma_2$, and almost wildcard transitions over the alphabet $\Sigma_2$. 


Note that $\mc D_i$ can be constructed with $\Oh(n \cdot k + n \cdot d)$
states and transitions, including wildcard transitions over the alphabets $\Sigma_1$ and $\Sigma_2$, and almost wildcard transitions over the alphabet $\Sigma_2$. 
\end{proof}

\subsection{Reductions from \kie}\label{sec:LB_from_kie}
Using the equivalences between \FFA, \PFA and \determinisability to respectively \ida, \eda and \twins (see~\cref{lem:NFA_equivalence,lem:WFA_equivalence}), we reduce \kie to the studied problems. 
More precisely, \cref{thm:twoie_unambiguous} reduces \twoie to \unambiguity and implies that for any $\eps > 0$, if we could decide \unambiguity in $n^{2 - \eps}$ time, then we could decide \twoie in $m^{2 - \eps}$ time, where $n$ is the size of the \unambiguity NFA and $m$ is the size of the \twoie DFAs.
Next, \cref{thm:unambiguity_eda} reduces \unambiguity to \eda. This implies that for any $\eps > 0$, if we could decide \eda in $n^{2 - \eps}$ time, then we could decide \unambiguity in $m^{2 - \eps}$ time, where $n$ is the size of the \eda NFA and $m$ is the size of the \unambiguity NFA.
Then, we reduce \threeie to \ida in \cref{thm:3dfa_ida}, which implies that for any $\eps > 0$, if we could decide \twins in $n^{3 - \eps}$ time, then we could decide \threeie in $m^{3 - \eps}$ time, where $n$ is the size of the \twins NFA and $m$ is the size of the \threeie DFAs.
Finally, \cref{thm:unambiguity_twins} shows a linear time reduction from \unambiguity to \twins. This implies that for any $\eps >0$, if we could decide \twins in $n^{2-\eps}$ time, then we could decide \unambiguity in $m^{2-\eps}$ time, where $n$ is the size of the \twins weighted automaton and $m$ is the size of the \unambiguity NFA. 

By the hardness of \kie proven above this implies that, unless both the $k$-Cycle hypothesis and the $k$-Cycle hypothesis fail, \unambiguity, \PFA and \determinisability cannot be decided in $\Oh(|\mc A|^{2 - \eps})$ time, for any $\eps >0$. Similarly, unless the $k$-OV hypothesis fails, \FFA cannot be decided in $\Oh(|\mc A|^{3- \eps})$ time, for any $\eps >0$.

\begin{lemma}[\twoie reduces to \unambiguity]\label{thm:twoie_unambiguous}
	Suppose that, given an NFA $\mc A = (Q, \Sigma, \delta, I, F)$,  one can decide whether $\mc A \in \unambiguity$ in time $T(|Q|, |\delta|, |\Sigma|)$. Then, given two DFAs $\mc D_1$ and $\mc D_2$, one can decide whether $(\mc D_1, \mc D_2) \in \twoie$ in time $\Oh(T(n, m, s))$, where $n$ is the total number of states, $m$ is the total number of transitions, and $s$ is the total alphabet size of $\mc D_1 \cup \mc D_2$. 
\end{lemma}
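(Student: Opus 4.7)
The plan is to produce a linear-time reduction that builds, from the pair of DFAs, a single NFA whose unambiguity is equivalent to the disjointness of the two languages. Concretely, given $\mc D_i = (Q_i, \Sigma_i, \delta_i, I_i, F_i)$ for $i \in \{1,2\}$, I would form the disjoint-union NFA
\[
\mc A \;=\; \bigl(Q_1 \uplus Q_2,\; \Sigma_1 \cup \Sigma_2,\; \delta_1 \cup \delta_2,\; I_1 \cup I_2,\; F_1 \cup F_2\bigr),
\]
treating $Q_1$ and $Q_2$ as disjoint (relabel if necessary). This construction runs in time $\Oh(n+m+s)$, and by design $\mc A$ satisfies $|Q_{\mc A}| = n$, $|\delta_{\mc A}| = m$, and $|\Sigma_{\mc A}| \le s$.

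Next I would argue correctness by analyzing the accepting runs of $\mc A$ on an arbitrary word $w$. Since the state sets of the two DFAs are disjoint and no transition crosses between them, every run of $\mc A$ stays entirely in one side: it is either a run of $\mc D_1$ or a run of $\mc D_2$. Because each $\mc D_i$ is deterministic (with a unique initial state in $I_i$, which will be enforced in the standard way if $|I_i|>1$ by first normalising), $w$ has at most one accepting run inside $\mc D_i$. Hence $\runs_{\mc A}(w) = \iverson{w \in \mc L(\mc D_1)} + \iverson{w \in \mc L(\mc D_2)} \le 2$, with equality to $2$ exactly when $w \in \mc L(\mc D_1) \cap \mc L(\mc D_2)$.

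It follows that $\mc A$ is unambiguous if and only if $\mc L(\mc D_1) \cap \mc L(\mc D_2) = \emptyset$, which is precisely the condition $(\mc D_1, \mc D_2) \in \twoie$. So to decide \twoie on $(\mc D_1, \mc D_2)$ it suffices to construct $\mc A$ and then invoke the assumed algorithm for \unambiguity, yielding total running time $\Oh(n+m+s) + T(n, m, s) = \Oh(T(n, m, s))$ since $T$ is at least linear in its arguments.

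There is no real obstacle here; the only minor care is to ensure the DFAs can be combined into a single NFA with the right size parameters (in particular, handling $\Sigma_1 \neq \Sigma_2$, which is harmless because transitions on letters outside $\Sigma_i$ simply do not exist in the $\mc D_i$ side). The conceptual content is the one-line observation that disjoint-union of DFAs produces an NFA whose ambiguity witnesses exactly membership in both languages.
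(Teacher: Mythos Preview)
Your proposal is correct and essentially identical to the paper's proof: both take the disjoint union of the two DFAs as an NFA and observe that, by determinism of each side, a word has two accepting runs in the union if and only if it lies in both languages. The paper's argument is slightly terser but uses exactly the same construction and reasoning.
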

\begin{proof}
	Let $\mc D_1 = (Q_1, \Sigma_1, \delta_1, \{q^I_1\}, F_1)$ and $\mc D_2 = (Q_2, \Sigma_2, \delta_2, \{q^I_2\}, F_2)$ be two DFAs. Take their union and obtain the NFA $\mc A = (Q, \Sigma, \delta, I, F)$ with $Q = Q_1 \cup Q_2$, $\Sigma = \Sigma_1 \cup \Sigma_2$, $\delta = \delta_1 \cup \delta_2$, $I = \{q^I_1, q^I_2\}$ and $F = F_1 \cup F_2$. Then any accepting run in $\mc A$ is either an accepting run in $\mc D_1$ or an accepting run in $\mc D_2$. Since $\mc D_1$ and $\mc D_2$ are deterministic, a word $w \in \Sigma^*$ has two accepting runs in $\mc A$ if and only if it has an accepting run in $\mc D_1$ and an accepting run in $\mc D_2$. In other words, $\mc A$ is unambiguous if and only if there is no $w \in \mc L(\mc D_1) \cap \mc L(\mc D_2)$. 
\end{proof}
\begin{lemma}[\unambiguity reduces to \textsc{\eda}]\label{thm:unambiguity_eda}
    Suppose that, given an NFA $\mc A' = (Q', \Sigma', \delta', I', F')$, one can decide whether $\mc A' \in \eda$ in time $T(|Q'|, |\delta'|, |\Sigma'|)$. Then, given an NFA $\mc A = (Q, \Sigma, \delta, I, F)$, one can decide whether $\mc A \in \unambiguity$ in time
    $\Oh(T(|Q|, |\delta| + 1, |\Sigma| + 1))$.
\end{lemma}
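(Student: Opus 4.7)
My plan is to reduce \unambiguity on $\mc A$ to \eda on a slightly modified automaton $\mc A'$. First I would apply \cref{claim:trim} and \cref{claim:normal} to assume, at $\Oh(|\mc A|)$ preprocessing cost, that $\mc A$ is trim and in normal form with unique initial state $q_0$ and unique final state $q_f$. I then obtain $\mc A'$ by extending the alphabet with a single fresh letter $\#$ and adjoining the single new transition $(q_f, \#, q_0)$, so that $|Q'| = \Oh(|Q|)$, $|\delta'| = |\delta|+1$, and $|\Sigma'| = |\Sigma|+1$. Running the assumed \eda-oracle on this $\mc A'$ then yields the claimed running time, so the whole task reduces to proving the equivalence $\mc A \in \unambiguity$ if and only if $\mc A' \notin \eda$.

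The forward direction is straightforward: any two distinct accepting runs of $\mc A$ on a word $w$ both lead from $q_0$ to $q_f$, so appending the new $\#$-transition to each of them produces two distinct cycles at $q_0$ on the word $w\#$, witnessing \eda in $\mc A'$.

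The converse is where I expect the main obstacle. Given a witness $(q, v, \rho_1, \rho_2)$ of \eda in $\mc A'$, I would exploit the fact that $(q_f, \#, q_0)$ is the unique $\#$-transition of $\mc A'$: at every position $i$ where $v_i = \#$, both $\rho_1$ and $\rho_2$ are forced to take this transition, so they agree at all such positions and both visit $q_f$ just before and $q_0$ just after each such $i$. Any position of disagreement between $\rho_1$ and $\rho_2$ must therefore lie strictly inside some maximal $\#$-free segment $v_{j+1}\cdots v_{k-1}$ of $v$, where $j$ is the largest $\#$-position below the disagreement (or $0$ if none exists) and $k$ the smallest $\#$-position above it (or $|v|+1$ if none). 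Restricting the two cycles to this segment produces two distinct runs of $\mc A$ on the same word, starting at a common state $s\in\{q,q_0\}$ and ending at a common state $t\in\{q,q_f\}$, the four cases being distinguished by whether $j$ and $k$ are at the boundaries of $v$ or at interior $\#$-positions; the degenerate subcase $v\in\Sigma^\ast$ corresponds to $j=0$ and $k=|v|+1$, giving $s=t=q$. In each case, trimness of $\mc A$ supplies a run in $\mc A$ from $q_0$ to $s$ on some word $u$ and a run from $t$ to $q_f$ on some word $w$ (possibly empty when $s=q_0$ or $t=q_f$); prepending and appending these common runs to the two distinct segment sub-runs yields two distinct accepting runs of $\mc A$ on $u\cdot v_{j+1}\cdots v_{k-1}\cdot w$, contradicting \unambiguity.
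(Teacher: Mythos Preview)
Your proposal is correct and follows essentially the same approach as the paper: the same construction (add a fresh letter $\#$ and a single transition $(q_f,\#,q_0)$ after trimming and normalising), the same forward direction, and the same key observation in the backward direction that the unique $\#$-transition forces both cycles to synchronise at $q_f$/$q_0$ at every $\#$-position. The only cosmetic difference is that the paper splits the converse into the two cases ``$v$ contains $\#$'' (where it first shifts the cycles to start at $q_I$, making the offending segment directly an accepting run) and ``$v\in\Sigma^\ast$'' (where trimness is invoked), whereas you handle all four boundary subcases uniformly via the maximal $\#$-free segment and invoke trimness throughout; both organisations are equally valid.
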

\begin{proof}
Let $\mathcal{A}$ be an NFA for which we need to decide whether it is unambiguous.
Assume without loss of generality that $\mathcal{A}$ is trim and in normal form (see Claims \ref{claim:trim} and \ref{claim:normal}).
%
%
We construct $\mc A'$ from $\mc A$ as follows.
Extend the alphabet of $\mathcal{A}$ by adding a new symbol $\#$ and add a transition from the final state $q_F$ to the initial state $q_I$ reading $\#$. 
We show that $\mc A'$ has EDA if and only if $\mc A$ is not unambiguous.

$(\impliedby)$
Assume that $\mathcal{A}$ is not unambiguous, that is, there is a word $w\in \Sigma^*$ with two distinct accepting runs from $q_I$ to $q_F$. Adding the transition $(q_F,\#,q_I)$ closes these two runs to distinct $q_I$-cycles reading the word $w\#$. This shows that $\mathcal{A}'$ has \eda.

$(\implies)$
Assume now that $\mathcal{A}'$ has \eda. Then there is a state $q$ with two distinct cycles on the same word $v$.
If the word contains the letter $\#$, then without loss of generality we can assume that $q = q_I$ (otherwise we could shift the cycles on $v = u\#w$ at $q$ to get cycles on $wu\#$ at $q_I$).
Then $v = v_1\#v_2\#\ldots \#v_k\#$, for some words $v_1, \ldots v_k \in \Sigma^*$.
Let $i$ be such that the first transition where the cycles differ is taken after reading $v_1\#v_2\#\ldots \#v_i'$, where $v_i'$ is a prefix of $v_i$.
Since after reading $v_1\#v_2\#\ldots \#v_i$ both cycles reach the transition $(q_F, \#, q_I)$, $v_i$ has two different accepting runs in $\mc A$, i.e. $\mc A$ is not unambiguous.
Assume now that the word does not contain the letter $\#$. Since $q$ is reachable and co-reachable, there are words $u,w\in\Sigma^\ast$ such that there is a run from $q_I$ to $q$ on $u$ and one from $q$ to $q_F$ on $w$.
Then the word $uvw$ has at least two distinct accepting runs that are identical on the prefix $u$ and the suffix $w$ but differ on the middle part $v$. 
This shows that $\mathcal{A}$ is not unambiguous.
\end{proof}


\begin{lemma}[\threeie reduces to \ida]\label{thm:3dfa_ida}
    Suppose that, given an NFA $\mc A = (Q, \Sigma, \delta, I, F)$, one can decide whether $\mc A \in \ida$ in time $\Oh(T(|Q|, |\delta|, |\Sigma|))$. Then, given three DFAs $\mc D_1$, $\mc D_2$, and $\mc D_3$, one can decide whether $(\mc D_1, \mc D_2, \mc D_3) \in \threeie$ in time $\Oh(T(n + 2, m + n + 3, s + 2))$, where $n$ is the total number of states, $m$ is the total number of transitions, and $s$ is the total alphabet size of $\mc D_1 \cup \mc D_2 \cup \mc D_3$.
\end{lemma}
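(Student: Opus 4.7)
The plan is to construct an NFA $\mc A$ so that $\mc A$ has the \ida property if and only if $\mc L(\mc D_1) \cap \mc L(\mc D_2) \cap \mc L(\mc D_3) \neq \emptyset$. After trimming and normalising each $\mc D_i$ via \cref{claim:trim,claim:normal}, so that $\mc D_i$ has a single initial state $s_i$ and a single final state $t_i$, I would take the disjoint union of the three DFAs and augment it with two fresh states $p, q$ (declaring $I = \{p\}$ and $F = \{q\}$), two fresh alphabet symbols $\#$ and $\$$, and the six transitions $(p, \#, s_1)$, $(p, \#, s_2)$, $(q, \#, s_3)$, $(t_1, \$, p)$, $(t_2, \$, q)$, $(t_3, \$, q)$. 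The resulting $\mc A$ is trim and falls within the stated size bound. By design, for every word $w$ over the DFAs' alphabet, the word $\# w \$$ induces in $\mc A$ a cycle $p \to p$ iff $w \in \mc L(\mc D_1)$, a walk $p \to q$ iff $w \in \mc L(\mc D_2)$, and a cycle $q \to q$ iff $w \in \mc L(\mc D_3)$. In particular, any $w$ in the common intersection yields the \ida witness $(p, q, \# w \$)$, giving the ``only if'' direction.

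For the converse I would invoke \cref{lem:NFA_equivalence}: since $\mc A$ is trim, $\mc A$ has \ida iff it is not finitely ambiguous, so it suffices to show that if the common intersection is empty then every word admits at most two accepting runs in $\mc A$. The only nondeterministic transition in $\mc A$ is at $p$ upon reading $\#$, with the choice $s_1$ versus $s_2$; hence every accepting run from $p$ to $q$ on a word $v$ parses $v$ uniquely as $\# w_1 \$ \# w_2 \$ \cdots \# w_n \$$ and picks a single ``switch index'' $i$ with $w_1,\ldots,w_{i-1} \in \mc L(\mc D_1)$, $w_i \in \mc L(\mc D_2)$, and $w_{i+1},\ldots,w_n \in \mc L(\mc D_3)$. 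Suppose $i < j < k$ were three simultaneously valid switch indices for the same $v$: validity of the pair $(i,j)$ forces $w_j \in \mc L(\mc D_2) \cap \mc L(\mc D_3)$, and validity of the pair $(j,k)$ forces $w_j \in \mc L(\mc D_1) \cap \mc L(\mc D_2)$, so $w_j$ lies in the forbidden triple intersection. Hence at most two switch indices can be valid simultaneously, making $\mc A$ finitely ambiguous and completing the contrapositive.

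The main potential obstacle is justifying the exhaustive block decomposition of accepting runs used above, namely that every accepting run in $\mc A$ starts at $p$ by reading $\#$, alternates $\Sigma$-words with $\$$-returns to $p$ or $q$, takes the $\mc D_2$-branch exactly once, and ends at $q$ by reading $\$$. This rests on the determinism of each $\mc D_i$ together with the observation that the added transitions form a ``one-way valve'' from the $p$-side to the $q$-side through $\mc D_2$, with no path back; once this structural lemma is in place, the combinatorial argument on switch indices is direct and the size accounting matches the stated $T(n+2, m+n+3, s+2)$ bound.
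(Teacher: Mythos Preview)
Your construction is essentially the paper's (with $\#$ and $\$$ swapped and the two fresh states renamed), and your forward direction matches. The normalisation step via \cref{claim:normal} is, however, unnecessary and causes two small wrinkles: it adds extra states, so the precise count $n+2$ drifts, and the standard single-final-state construction for an NFA can introduce nondeterminism inside the $\mc D_i$, which would undercut your later claim that the only nondeterministic choice is at $p$ on $\#$. The paper simply adds a $\#$-transition from every original final state of each $\mc D_i$ and a $\$$-transition into each original initial state; this keeps each $\mc D_i$ deterministic and hits the bound $T(n+2,m+n+3,s+2)$ on the nose. Your argument goes through verbatim with this cosmetic fix (and the trivial observation that if some $\mc L(\mc D_i)=\emptyset$ the intersection is empty and one need not build $\mc A$ at all, which is needed to guarantee $\mc A$ is trim).

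Your converse is a genuinely different route. The paper argues structurally on a putative \ida witness $(p',q',\pi_1,\pi_2,\pi_3)$: since the only branching vertex in $\mc A$ is $s$, the two distinct equilabelled runs $\pi_1,\pi_2$ must split there, which pins $\pi_1$ to the $\mc D_1$-loop and pushes $\pi_2$ across to the $\mc D_3$-side, after which a common word is read off from matching $\$ w\#$-segments. You instead invoke \cref{lem:NFA_equivalence} and bound the ambiguity by $2$: an accepted word factors uniquely as $\# w_1\$ \cdots \# w_n\$$, an accepting run is determined by its switch index, and three valid switch indices $i<j<k$ would force $w_j$ into all three languages. This is a clean combinatorial argument and arguably more transparent than the paper's witness chase, at the cost of importing the \ida/\FFA equivalence as a black box. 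Both approaches are correct.
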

\begin{proof}
    Let $\mc D_1 = (Q_1, \Sigma, \delta_1, \{q^I_1\}, F_1)$, $\mc D_2 = (Q_2, \Sigma, \delta_2, \{q^I_2\}, F_2)$ and $\mc D_3  = (Q_3, \Sigma, \delta_3, \{q^I_3\}, F_3)$ be three DFAs on the same alphabet $\Sigma$. 
    We show how to construct in linear time an NFA $\mc A$ that has the property \ida if and only if $\mc L(\mc D_1) \cap \mc L(\mc D_2) \cap \mc L(\mc D_3)$ is non-empty.  

    Let $s$ and $t$ be new states, and let $\$$ and $\# $ be extra symbols. Define the NFA $\mc A = (Q, \Sigma \cup \{\$, \#\}, \delta, \{s\}, \{t\})$, where the states $Q$ are defined by:
    \begin{align*}
        Q_1' &\coloneqq Q_1 \cup \{s\},\quad 
        Q_2' \coloneqq Q_2,\quad
        Q_3' \coloneqq Q_3 \cup \{t\} \\
        Q &\coloneqq Q_1' \uplus Q_2' \uplus Q_3' \;\;
    \intertext{and transitions $\delta$ are defined by:}
        \delta_1' &\coloneqq \{(s, \$, q^I_1)\} \cup \delta_1 \cup \{(p, \#, s)\mid p \in F_1\} \\
        \delta_2' &\coloneqq  \{(s, \$, q^I_2)\} \cup \delta_2 \cup \{(p, \#, t)\mid p \in F_2 \} \\
        \delta_3' &\coloneqq  \{(t, \$, q^I_3)\} \cup \delta_3 \cup \{(p, \#, t)\mid p \in F_3 \} \\
        \delta &\coloneqq \delta_1' \uplus \delta_2' \uplus \delta_3'
    \end{align*}
	\Cref{fig:3dfa_IDA} illustrates this construction. We show that $\mc L(\mc D_1) \cap \mc L(\mc D_2) \cap \mc L(\mc D_3) \neq \emptyset$ if and only if $\mc A$ has \ida.
	\begin{figure}
		\centering
			\includesvg[width=0.4\columnwidth]{hardness_fig/ida_box}
		\caption{
		Example of the reduction in \cref{thm:3dfa_ida} from \threeie to \ida. 
		The three DFAs $\mc D_1$, $\mc D_2$ and $\mc D_3$ (with grey background, from top to bottom) are aggregated into the NFA $\mc A$ using transitions reading $\$$ and $\#$. 
		The word $aa$ is recognized by all three DFAs. Correspondingly, $\mc A$ has the property \ida with states $s$ and $t$: the cycle 
		$s \reach[\$] q^I_1 \reach[a] y \reach[a] z \reach[\#] s$, 
		the run 
		$s \reach[\$] q^I_2 \reach[a] x' \reach[a] y' \reach[\#] t$, 
		and the cycle 
		$t \reach[\$] q^I_3 \reach[a] x'' \reach[a] y'' \reach[\#] t$ 
		all read the word $\$aa\#$.
        }\label{fig:3dfa_IDA}
	\end{figure}
	
    $(\implies)$
    Suppose that there is a word $w \in \Sigma^*$ recognized by $\mc D_1$, $\mc D_2$, and $\mc D_3$ via the runs $\rho_1 \subset \delta_1$, $\rho_2 \subset \delta_2$ and $\rho_3 \subset \delta_2$
    ending in $p_1 \in F_1, \ p_2 \in F_2, \ p_3 \in F_3$, respectively. 
    %
    Then we can extend them to the runs 
    $\pi_1 \coloneqq (s, \$, q_1^I) \rho_1 (p_1, \#, s)$,
    $\pi_2 \coloneqq (s, \$, q_2^I) \rho_2 (p_2, \#, t)$, and
    $\pi_3 \coloneqq (t, \$, q_3^I) \rho_3 (p_3, \#, t)$
    in $\mc A$ 
    Observe that $\pi_1$ is an $s$-cycle, $\pi_2$ is a run from $s$ to $t$, and $\pi_3$ is a $t$-cycle, and they all read the same word $\$ w \#$. Thus, $\mc A$ has the \ida property.
    
    $(\impliedby)$
    Conversely, assume that $\mc A$ has the \ida property, i.e.\ there are distinct states $p, q \in Q$, a $p$-cycle $\pi_1$, a run $\pi_2$ from $p$ to $q$, and a $q$-cycle $\pi_3$ that read the same input.
    
    Since $\pi_1$ and $\pi_2$ are identically labelled but distinct, they branch at a vertex with two identically labelled outgoing transitions. The only such vertex is $s$ with only two outgoing transitions: one to $q_1^I \in Q_1'$ and one to $q_2^I \notin Q_1'$, both reading $\$$.
    Since $s$ is reachable only from $Q_1'$, the cycle $\pi_1$ is contained in $Q_1'$; thus $\pi_2$ transitions from $s$ to $Q\setminus Q_1'$ and then stays there. 
    The cycle $\pi_3$ starts where $\pi_2$ ends (in $Q\setminus Q_1'$) and also reads $\$$. The only transition labelled $\$$ starting in $Q\setminus Q_1'$ is from $t$ to $q_3^I \in Q_3'$. Since $t$ is not reachable from $Q\setminus Q_3'$, $\pi_3$ is contained in $Q_3'$. 
    We infer that $\pi_2$, which visits $s$, ends inside $Q_3'$ and thus contains an $st$-subrun labelled $\$w\#$ for some $w\in\Sigma^\ast$. Hence, $\pi_1$ and $\pi_3$ must contain identically labelled subruns. Removing from each subrun the first and last transition (labelled $\$$ and $\#$) yields an accepting run on $w$ for each DFA. 
\end{proof}

\begin{lemma}[\unambiguity reduces to \twins]\label{thm:unambiguity_twins}
    Suppose that, given a weighted automaton $\mc W = (Q', \Sigma', \delta', I', F', \weight)$, one can decide whether $\mc W \in \twins$ in time $T(|Q'|, |\delta'|, |\Sigma'|)$. Then given an NFA $\mc A = (Q, \Sigma, \delta, I, F)$, one can decide whether $\mc A \in \unambiguity$ in time $\Oh(T(3 \cdot |Q|, |Q| + 3 \cdot |\delta| + 1, |\Sigma| + 2))$.
\end{lemma}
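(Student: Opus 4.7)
The plan is, given the NFA $\mc A$, to construct a weighted automaton $\mc W$ of the prescribed size whose twins property is equivalent to the unambiguity of $\mc A$. As a preprocessing step, we may assume $\mc A$ is trim and in normal form with a unique initial state $q_I$ and a unique final state $q_F$, by invoking \cref{claim:trim,claim:normal} in linear time; this step does not affect (un)ambiguity.

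We build $\mc W$ on three disjoint copies $Q^{(1)} \uplus Q^{(2)} \uplus Q^{(3)}$ of $Q$ (so $|Q'|=3|Q|$) over the extended alphabet $\Sigma \cup \{\$, \#\}$ obtained by adding two fresh letters (so $|\Sigma'|=|\Sigma|+2$). The transitions of $\mc W$ are: for each $(p,a,q)\in\delta$ and each $i\in\{1,2,3\}$ a copy $(p^{(i)}, a, q^{(i)})$ (giving $3|\delta|$ transitions); for each $q\in Q$ a ``layer'' transition $(q^{(1)}, \$, q^{(2)})$ (giving $|Q|$ transitions); and one bridging transition $(q_F^{(2)}, \#, q_I^{(1)})$, for a total of exactly $|Q|+3|\delta|+1$ transitions. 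Weights are $0$ on all transitions except on the copy-$2$ transitions, which receive pairwise distinct weights chosen so that distinct multisets of $\mc A$-transitions always produce distinct total weights (for instance, weight $2^{j}$ on the $j$th transition of $\mc A$). The initial and final states of $\mc W$ are chosen to make $\mc W$ trim and unambiguous, for example $I'=\{q_I^{(1)}\}$ and an appropriate $F'$ that routes acceptance without interfering with the cycle gadget.

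Correctness will then follow from the key observation that cycles at $q_I^{(1)}$ on words of the form $u\$v\#$ decompose as $q_I^{(1)} \to^u q^{(1)} \to^{\$} q^{(2)} \to^v q_F^{(2)} \to^{\#} q_I^{(1)}$, and hence correspond precisely to $\mc A$-runs $q_I \to q \to q_F$ on $uv$, with cycle weight equal to that of the copy-$2$ traversal. Two distinct $\mc A$-accepting runs on the same word $uv$ therefore lift to two cycles at $q_I^{(1)}$ on the same label but with distinct weights (by the injective weight scheme), violating twins; conversely, if $\mc A$ is unambiguous, every cycle at any sibling state of $\mc W$ is uniquely determined by its label, so any two cycles on a common label share the same weight and twins holds.

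The main obstacle is verifying the equivalence at \emph{every} sibling pair produced by the three-copy construction---not only at $q_I^{(1)}$ regarded as a sibling of itself---and showing that $\mc W$ itself is unambiguous as a WFA so that the reduction composes with \cref{lem:WFA_equivalence} to give hardness of \determinisability. Both points reduce to a careful case analysis over how two $\mc A$-accepting runs on a common word can diverge and rejoin; the injectivity of the copy-$2$ weight scheme then converts the diverging transitions into a cycle-weight discrepancy in $\mc W$. Once this equivalence is established, the running-time bound follows by applying the hypothesised algorithm to the constructed $\mc W$ of size $(3|Q|,\ |Q|+3|\delta|+1,\ |\Sigma|+2)$.
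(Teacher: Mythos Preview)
Your construction has a genuine gap in the converse direction ($\mc A$ unambiguous $\Rightarrow$ $\mc W$ has twins). Because you place nonzero weights on \emph{all} internal transitions of copy~2, cycles that stay entirely inside copy~2 carry nonzero weight, and two such cycles on the same label can differ in weight even when $\mc A$ is unambiguous. Concretely, take the trim NFA with states $q_I,p,q,q_F$ and transitions $t_0=(q_I,a,p)$, $t_1=(q_I,a,q)$, $t_2=(p,b,p)$, $t_3=(q,b,q)$, $t_4=(p,c,q_F)$, $t_5=(q,d,q_F)$. The accepted words $ab^*c$ and $ab^*d$ each have a unique accepting run, so $\mc A$ is unambiguous. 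In your $\mc W$, however, $p^{(2)}$ and $q^{(2)}$ are siblings (both reached from $q_I^{(1)}$ by the word $a\$$) and carry one-letter cycles on $b$ with weights $2^{j(t_2)}\neq 2^{j(t_3)}$, so the twins property fails. Your promised ``careful case analysis'' cannot rescue this, since these sibling cycles are unrelated to any accepting run of $\mc A$. (A separate, smaller issue: as written, copy~3 is completely disconnected---there is no transition into or out of $Q^{(3)}$---so it plays no role and the construction is not what you describe.)

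The paper's construction avoids exactly this pitfall by giving copy~2 \emph{no internal transitions}. Copies~1 and~3 each carry a weight-$0$ copy of $\delta$; between them sits the single weighted layer $\delta_2=\{(p^{(2)},\sigma,q^{(3)})\mid (p,\sigma,q)\in\delta\}$ with weight $\ell(p,\sigma,q)$, a local index among the $\sigma$-successors of $p$; the bridge is $(q_F^{(3)},\#,q_I^{(1)})$. Thus any cycle whose label contains no $\$$ has weight $0$, and any elementary cycle (one $\#$) contains exactly one weighted transition. Two sibling cycles on the same label with different weights must then disagree at that single $\delta_2$-step, yielding two $\mc A$-runs on the same word that branch at the corresponding position---i.e., ambiguity of $\mc A$. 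This localisation of the weight to a single, branch-identifying transition is the idea your global $2^j$ scheme is missing.
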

\begin{proof}
    Let $\mc A = (Q, \Sigma, \delta, I, F)$ be an NFA. 
    Without loss of generality assume that there is a unique initial state $q_I$ and a unique accepting state $q_F$ (see \cref{claim:normal}). 
    To prove the statement, we build a weighted automaton $\mc W \coloneqq (Q', \Sigma', \delta', I', F', \weight)$ with $|Q'| = \Oh(|Q|)$, $|\Sigma'| = \Oh(|\Sigma|)$ and $|\delta'| = \Oh(\delta)$ 
    such that $\mc W$ has the twins property if and only if $\mc A$ is unambiguous. 
    Construct three copies of the states in $\mc A$ denoted $Q_1 \coloneqq \{q^{(1)} \mid q \in Q\}$, $Q_2 \coloneqq \{q^{(2)} \mid q \in Q\}$, and $Q_3 \coloneqq \{q^{(3)} \mid q \in Q\}$,
    and two copies of the transitions of $\mc A$ where we set the output weight to $0$:
    \begin{align*}
        \delta_1 &\coloneqq \{(p^{(1)}, \sigma, q^{(1)}) \mid (p, \sigma, q) \in \delta\}, \\
        \delta_3 &\coloneqq \{(p^{(3)}, \sigma, q^{(3)}) \mid (p, \sigma, q) \in \delta\}, \\
        \weight(t) &\coloneq 0 \text{ for each } t \in \delta_1 \cup \delta_3.
    \intertext{
        Let $\$$ and $\#$ be symbols that do not appear in $\Sigma$.
        Build transitions reading $\$$ of weight 0 that go from states in $Q_1$ to their corresponding copy in $Q_2$:
    }
        \delta_{\$} &\coloneqq \{(p^{(1)}, \$, p^{(2)}) \mid p \in Q\}, \\
        \weight(t) &\coloneq 0 \text{ for } t \in \delta_\$.        
    \intertext{
        Now, for every state $p \in Q$ and symbol $\sigma \in \Sigma$, fix an arbitrary order on the states in $Q$ that are reachable from $p$ by reading $\sigma$. Let $\ell(p, \sigma, q) \geq 1$ be the index of the state $q$ in that ordering. We construct transitions from states in $Q_2$ to states in $Q_3$ as follows: 
    }
        \delta_2 &\coloneqq \{(p^{(2)}, \sigma, q^{(3)}) \mid (p, \sigma, q) \in \delta\}, \\
        \weight(p^{(2)}, \sigma, q^{(3)}) &\coloneqq \ell(p, \sigma, q) \text{ for each } (p, \sigma, q) \in \delta.
    \intertext{Finally, we add a transition $t^\#$ of weight 0 that connects $Q_3$ to $Q_1$:}
    t^{\#} &\coloneqq (q^{(3)}_F, \#, q^{(1)}_I),\\
    \weight(t^\#) &\coloneq 0.
    \end{align*}
    Consider the weighted automaton $\mc W \coloneqq (Q', \Sigma \cup \{\$, \#\}, \delta', \{q_I^{(1)}\}, \{q_I^{(1)}\}, \weight)$ with $Q' \coloneqq Q_1 \cup Q_2 \cup Q_3$ and $\delta' \coloneqq \delta_1 \cup \delta_{\$} \cup \delta_2 \cup \delta_3 \cup \{t^{\#}\}$. \cref{fig:unambiguity_twins} illustrates this construction.
	\begin{figure}[h]
		\centering
		\includesvg[width=0.5\columnwidth]{hardness_fig/TP.svg}
		\caption{Example of the reduction in \cref{thm:unambiguity_twins} from \unambiguity to \twins. The NFA $\mc A$ (top) is reduced to the weighted automaton $\mc W$ (bottom). There are two accepting runs in $\mc A$ reading $w= aba$: $q_I \reach[a] x \reach[b] y \reach[a] q_F$ and $q_I \reach[a] x \reach[b] q_F \reach[a] q_F$. In $\mc W$, the corresponding cycles 
        $y^{(3)} \reach[a] q_F^{(3)} \reach[\#] q_I^{(1)} \reach[a] x^{(1)} \reach[\$] x^{(2)} \reach[b] y^{(3)}$ and 
        $q_F^{(3)} \reach[a] q_F^{(3)} \reach[\#] q_I^{(1)} \reach[a] x^{(1)} \reach[\$] x^{(2)} \reach[b] q_F^{(3)}$
        read the same input $a\#a\$b$ but have weight 1 and 2 respectively.}\label{fig:unambiguity_twins}
	\end{figure}
    We show that $\mc A$ is unambiguous if and only if $\mc W$ has the twins property.

    $(\impliedby)$
    Assume that $\mc A$ is not unambiguous, i.e.\ there are two distinct runs $\rho$ and $\rho'$ from $q_I$ to $q_F$ on the same word $v \in \Sigma^*$.
    Let $(p, \sigma, q)$ and $(p, \sigma, q')$ be the first transitions where $\rho$ and $\rho'$ differ
    and $v = u \sigma u'$ where $u$ is the maximal prefix of $v$ on which $\rho$ and $\rho'$ are the same.
    Then $(q^{(3)}, {q'}^{(3)})$ is a pair of non-twin siblings in $\mc W$.
    Indeed, let $i = \ell(p, \sigma, q)$ and $j = \ell(p, \sigma, q')$.
    Both $q^{(3)}$ and ${q'}^{(3)}$ are reachable from $q_I^{(1)}$ reading the same word $u \$ \sigma$, by following the behaviour of $\rho$ in the copy $\delta_1$ up to $p^{(1)}$, then using the transition $(p^{(1)}, \$, p^{(2)}) \in \delta_\$$, and then using either $(p^{(2)}, \sigma, q^{(3)}) \in \delta_2$ or $(p^{(2)}, \sigma, {q'}^{(3)}) \in \delta_2$.
	Consider the cycle $\pi$ at $q^{(3)}$ that goes to $q_F^{(3)}$ by following the behaviour of $\rho$ in the copy $\delta_3$, then uses the transition $t^\#$, and then goes from $q_I^{(1)}$ to $q^{(3)}$ using the same transitions as described above.
    Let $\pi'$ be the cycle at ${q'}^{(3)}$ constructed the same way using $\rho'$.
    Both cycles $\pi$ and $\pi'$ read the word $u' \# u \$ \sigma$.
    However, $\weight(\pi) = i$ while $\weight(\pi') = j$, and $i \neq j$ since $q \neq q'$. Thus, $q^{(3)}$ and ${q'}^{(3)}$ are indeed siblings but not twins. 
    
   
    $(\implies)$
    Conversely, assume that $\mc W$ admits a pair of siblings $(s, s')$ that are not twins, i.e.\ there exist two cycles $\pi$ and $\pi'$ at $s$ and $s'$, respectively, that read the same input but have different weights.
    Observe that all transitions of $\mc W$ except for the ones in $\delta_2$ have weight 0.
    Since $\pi$ and $\pi'$ differ in weight, at least one of them uses a transition of $\delta_2$, and thus goes through a state in $Q_2$.
    By construction, any cycle involving a state of $Q_2$ uses the transition $t^\#$.
    Since this is the only transition of $\mc W$ that reads the symbol $\#$, and since $\pi$ and $\pi'$ read the same input, we conclude that both $\pi$ and $\pi'$ use the transition $t^\#$. In particular, they both visit the states $q_I^{(1)}$ and $q_F^{(3)}$. Shift $\pi$ and $\pi'$ to obtain distinct cycles $\rho$ and $\rho'$ at $q_I^{(1)}$. Note that $\rho$ and $\rho'$ read the same input (the shifted input of $\pi$ and $\pi'$) but have different weights (the same weights as $\pi$ and $\pi'$, respectively).
    
    We now divide $\rho$ and $\rho'$ into subruns from $q_I^{(1)}$ to $q_F^{(3)}$ by cutting the cycles at transitions $t^\#$.
    Observe that, since $t^\#$ is the only transition reading $\#$, it must hold that the $i$-th part of $\rho$ reads the same input as the $i$-th part of $\rho'$.
    Let $\lambda$ and $\lambda'$ be the first pair of corresponding parts of $\rho$ and $\rho'$ such that $\weight(\lambda) \neq \weight(\lambda')$. Let $v$ be the word read by $\lambda$ and $\lambda'$.
    By the construction of $\mc W$, we have $\lambda = \lambda_1 (p^{(1)}, \$, p^{(2)}) (p^{(2)}, \sigma, q^{(3)}) \lambda_3$ for some $p, q \in Q, \sigma \in \Sigma$, and runs $\lambda_1 \subseteq \delta_1$ and $\lambda_3 \subseteq \delta_3$.
    (We write $\rho_0 \subset \delta_0$ for a run $\rho_0$ and a transition set $\delta_0$ if $t\in \delta_0$ for each transition $t$ in the transition sequence $\rho$.)
    Similarly, we have $\lambda' = \lambda'_1 (p'^{(1)}, \$, p'^{(2)}) (p'^{(2)}, \sigma', q'^{(3)}) \lambda'_3$ for some $p', q' \in Q, \sigma' \in \Sigma$ and runs $\lambda'_1 \subseteq \delta_1$ and $\lambda'_3 \subseteq \delta_3$. 
    Furthermore, since $\lambda$ and $\lambda'$ read the same word $v$ and none of the transitions from $\delta_1$ and $\delta_3$ read $\$$, we have
    $\sigma = \sigma'$, and the runs $\lambda_1$ and $\lambda'_1$, and $\lambda_3$ and $\lambda_3'$, read the same word, respectively.

    We consider two cases.
    If $p \neq p'$, then consider the two runs in $\mc A$ that reproduce the behaviour of $\lambda$ (respectively $\lambda'$) in the corresponding states of $\mc A$ by omitting the transitions $(p^{(1)}, \$, p^{(2)})(p^{(2)}, \sigma, q^{(3)})$ (resp. $({p'}^{(1)}, \$, {p'}^{(2)})$, $({p'}^{(2)}, \sigma', {q'}^{(3)})$) and replacing them by the transition $(p, \sigma, q) \in \delta$ (resp. $(p', \sigma', q') \in \delta$).
    This results in two runs in $\mc A$ from $q_I$ to $q_F$ that read the word $v$ where the symbol $\$$ is skipped. In other words, $\mc A$ is not unambiguous.
    Suppose now that $p = p'$. Note that, by construction, $\weight(\lambda) = \ell(p, \sigma, q)$ and $\weight(\lambda') = \ell(p, \sigma, q')$. Since $\weight(\lambda) \neq \weight(\lambda')$, we have $q \neq q'$. Similarly to the previous case, we can construct two runs in $\mc A$ from $q_I$ to $q_F$ that read the word $w$ were the symbol $\$$ is skipped. Hence, $\mc A$ is not unambiguous in this case as well.
\end{proof}

\bibliographystyle{plainurl}
\bibliography{lit.bib}


\appendix

\section{Missing proofs of Section~\ref{sec:technical_overview}}\label{sec:proofs_Sec2}

\begin{proof}[Proof of \cref{lem:sparsity}]
	Recall that $n = |V(G)|$ and $m = |E(G)|$. By \cref{lem:transformation}, if $(G, s, t) \in \unaryunambiguity$ then there is a graph $G'$ with the properties listed in \cref{lem:transformation}. In particular, $|V(G')| \leq 3n$, $|E(G')| \geq m$ and all cycles in $G'$ are disjoint.
	Let $H$ be the graph induced by the strongly connected components of $G'$. Notice that $H$ is the DAG corresponding to removing all cycles in $G'$ except for the cycle gates. Since the cycles of $G'$ are disjoint, we removed at most $|V(G')| \leq 3n$ edges. So we have $|V(H)| \leq |V(G')| \leq 3n$ and $|E(H)| \geq |E(G')| - |V(G')| \geq m - 3n$.
	Consider the subgraph $H'$ of $H$ constructed as follows:
	\begin{description}
        \item[\textbf{Step 1}.] \;\,Initially, let $H'$ be an arbitrary $st$-walk in $H$. 
        \item[\textbf{Step 2}.] \;\,While there is a vertex $v \in V(H) \setminus V(H')$, let $P$ be a simple path from a vertex in $V(H')$ to $v$ that uses only vertices of $V(H) \setminus V(H')$ and let $Q$ be a simple path from $v$ to a vertex in $V(H')$ that uses only vertices of $V(H) \setminus V(H')$ and that is disjoint from $P$ (except for the endpoints).
        Add all vertices and edges of $P$ and $Q$ to $H'$. 
	\end{description}
    Note that we can always find simple paths $P$ and $Q$ in Step 2. If they intersect at some vertex $u \in V(H) \setminus (V(H') \cup \{v\})$, then repeat Step 2 with $u$ instead of $v$. 

	Observe that we have $V(H') = V(H)$ at the end of the construction.  
    We bound the number of edges in $H'$. 
	After Step 1, we have $|E(H')| - |V(H')| = -1$. In every iteration of Step 2, the number of vertices of $H'$ increases by $|P| + |Q| - 1$, while the number of edges of $H'$ increases by $|P| + |Q|$. 
    Thus, $|E(H')| - |V(H')|$ increases by 1 in each iteration of Step 2. Since Step 2 is iterated at most $|V(H)|- 2 \leq 3n -2$ times, we get that in the end of the construction $|E(H')| \leq 3n - 3 + |V(H')| \leq 6n - 3$.
	
	Furthermore, by construction, every vertex is both reachable from $s$ and co-reachable from $t$ in the graph $H'$. 
    Hence, introducing any edge in $E(H) \setminus E(H')$ into $H'$ increases the number of $st$-walks in $H'$ by at least 1. 
	Observe that since $H$ is a trim DAG, there are at most $|V(H)| \leq 3n$ $st$-walks in $H$, or otherwise it would not be unambiguous. In particular, $|E(H) \setminus E(H')| \leq 3n$ and thus $|E(H)| = |E(H')| + |E(H) \setminus E(H')| \leq 9n - 3$. Since $|E(H)| \geq m - 3n$, this implies that 
	$m \leq 12n$.
\end{proof}

\begin{proof}[Proof of \cref{claim:gcd2}]
Let $p \coloneq \gcd(d, b)$.
Assume that $(a + x \cdot b)_{x \in \mathbb N} \cap (c + x \cdot d)_{x \in \mathbb N} \neq \emptyset$. Then we have $a - c = d \cdot n_2 -  b\cdot n_1$ for some $n_1,n_2 \in \N$. The claim follows from $b \equiv 0 \pmod{p}$ and $d \equiv 0 \pmod{p}$.

Assume now that $a \equiv c \pmod{p}$ and, without loss of generality, $a > c$. 
We then have $a - c = z \cdot p$ for some $z \in \N$. 
By B\'{e}zout's identity there are $x,y \in \Z$ such that $x \cdot b + y \cdot d = p$. 
We obtain
$
 a - c = z \cdot p = (x \cdot z)\cdot b + (y \cdot z) \cdot d,
 $
from which follows $a + (-x \cdot z)\cdot b = c + (y \cdot z)\cdot d$. Adding $k=|xz|+|yz|$ times $b\cdot d$ to both sides yields 
$a + (-x \cdot z+k\cdot d)\cdot b = c + (y \cdot z+k\cdot b)\cdot d$ with $(-x \cdot z+k\cdot d), (y \cdot z+k\cdot b)\in \N$, as required.
\end{proof}

\begin{proof}[Proof of \cref{lemma:sumofdiv}]
	For any natural $n$, let $n = p_1^{\alpha_1} \cdot \ldots \cdot p_s^{\alpha_s}$ be the prime factorisation of $n$, where $2 \le p_1 < p_2 \ldots < p_s$ are pairwise different prime numbers, and $\alpha_1, \ldots, \alpha_s \in \N$. Then
	\begin{align*}
		\sigma_1(n) & = \sum_{d \mid n} d   = \prod_{i=1}^s \left(p_i^0 + p_i^1 + \ldots
        +p_i^{\alpha_i}\right) \\& = \prod_{i=1}^s \frac{p_i^{\alpha_i + 1} - 1}{p_i -
    1} < \prod_{i=1}^s \frac{p_i^{\alpha_i + 1}}{p_i - 1} \\
& = \prod_{i=1}^s p_i^{\alpha_i} \cdot \left(1 + \frac{1}{p_i - 1}\right) \\
     & = n \cdot \prod_{i=1}^s \left(1 + \frac{1}{p_i - 1}\right).
	\end{align*}
	Observe that $1 + \frac{1}{p_i - 1} \le 1 + \frac{1}{p_{i-1}}$ for $i > 1$. For
	$i = 1$ we have $1 + \frac{1}{p_i - 1} \le 2$ (as the maximum is obtained for
$p_1=2)$. Thus we have \[\prod_{i=1}^s \left(1 + \frac{1}{p_i - 1}\right) \le 2 \cdot \prod_{i=1}^{s-1} \left(1 + \frac{1}{p_i}\right).\] This product expands to a sum of
elements of the form $1/\prod_{i \in I} p_i$ for sets $I \subseteq \rng{1}{s - 1}$. The
	indices in the denominator range over different sets, thus they are pairwise
	different natural numbers bounded by $n$. Hence, we get the desired bound as follows.
	\begin{align*}
    \sigma_1(n) & \le 2n \cdot \prod_{i=1}^{s-1} \left(1 +
    \frac{1}{p_i}\right) \le 2n\cdot \sum_{i=1}^n \frac{1}{i} = \Oh(n \log n).
	\end{align*}
\end{proof}

\paragraph{Bounding the sum of GCDs}
We prove \cref{lem:sum-gcd}. See \cref{sec:technical_overview_unary,rem:galsums} for a discussion about previously known bounds. 

\sumgcd*

Recall the standard \emph{divisor function} $\sigma_0$. This function maps any natural $n$ to the number of positive integer divisors of $n$. In other words, for every positive $n \in \N$:
\begin{displaymath}
    \sigma_0(n) \coloneqq \sum_{d=1}^n \iverson{d \mid n}.
\end{displaymath}
The divisor function can be bounded as follows.\footnote{For a brief and easily accessible proof we refer to Tao's blog~\cite{TTaoBlog}.}
\begin{lemma}[Chapter 18.1 in~\cite{hardy1979introduction}]\label{fact:divisor-function}
    The divisor function can be bounded from above by $\sigma_0(n) \le n^{o(1)}$.
\end{lemma}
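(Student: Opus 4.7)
The plan is to exploit the multiplicative structure of $\sigma_0$. Fix an arbitrary $\varepsilon > 0$; I will show that $\sigma_0(n) \le n^{\varepsilon}$ for all sufficiently large $n$, which is exactly what $\sigma_0(n) \le n^{o(1)}$ means. Writing the prime factorisation $n = \prod_{i=1}^s p_i^{\alpha_i}$ with $p_1 < \dots < p_s$ and $\alpha_i \geq 1$, the divisors of $n$ correspond bijectively to tuples $(\beta_1,\ldots,\beta_s)$ with $0 \le \beta_i \le \alpha_i$, so $\sigma_0(n) = \prod_{i=1}^s (\alpha_i + 1)$.

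Next, I would choose a threshold $M = M(\varepsilon)$ large enough that $p^{\varepsilon/2} \ge 2$ for every prime $p > M$; concretely, $M = 2^{2/\varepsilon}$ suffices. I split the product into a contribution from ``large'' primes $p_i > M$ and ``small'' primes $p_i \le M$. For the large primes, the choice of $M$ yields the pointwise bound $\alpha_i + 1 \le 2^{\alpha_i} \le p_i^{(\varepsilon/2)\alpha_i}$, which multiplies to
\begin{displaymath}
    \prod_{p_i > M}(\alpha_i + 1) \;\le\; \prod_{p_i > M} p_i^{(\varepsilon/2)\alpha_i} \;\le\; n^{\varepsilon/2}.
\end{displaymath}
For the small primes, there are at most $\pi(M)$ of them, which is a constant depending only on $\varepsilon$, and each exponent satisfies $\alpha_i \le \log_2 n$ since $p_i^{\alpha_i} \le n$. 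Hence
\begin{displaymath}
    \prod_{p_i \le M}(\alpha_i + 1) \;\le\; (1 + \log_2 n)^{\pi(M)},
\end{displaymath}
which is polylogarithmic in $n$ and therefore at most $n^{\varepsilon/2}$ for all sufficiently large $n$. Multiplying the two bounds gives $\sigma_0(n) \le n^{\varepsilon}$, as desired.

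There is no real obstacle here; the only point that needs a bit of care is that the threshold $M$ must be chosen depending on $\varepsilon$ but independently of $n$, so that $\pi(M)$ is a true constant and the ``small'' contribution is genuinely $n^{o(1)}$. Once the split is set up, both halves are absorbed into $n^{\varepsilon/2}$ and the result follows.
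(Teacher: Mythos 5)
Your proof is correct and is precisely the standard argument given in the sources the paper cites for this fact (Hardy--Wright, Chapter 18.1, and Tao's blog post): split the primes dividing $n$ at a threshold $M(\varepsilon)$, absorb the large primes directly into $n^{\varepsilon/2}$, and bound the constant-many small-prime factors by a polylogarithm. The paper itself does not give a proof, only the citation, so there is nothing to compare beyond noting that you have correctly reproduced the referenced textbook argument.
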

We use this to prove another intermediate inequality.
\begin{lemma}\label{claim:gcd}
    For any positive $n \in \N$ it holds that:
    \begin{displaymath}
        \sum_{i =1}^n \gcd(n,i) \le n^{1+o(1)}.
    \end{displaymath}
\end{lemma}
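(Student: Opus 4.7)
The plan is to regroup the sum by the value of the gcd. For any divisor $d$ of $n$, note that $\gcd(n,i) = d$ if and only if $i = d \cdot j$ for some $j \in \{1, \dots, n/d\}$ with $\gcd(n/d, j) = 1$. The number of such $j$ is by definition $\varphi(n/d)$, where $\varphi$ is Euler's totient function. Hence, partitioning the indices $i \in \{1, \dots, n\}$ according to $d = \gcd(n, i)$, we get the identity
\begin{displaymath}
    \sum_{i=1}^n \gcd(n, i) = \sum_{d \mid n} d \cdot \varphi(n/d).
\end{displaymath}
(This is the classical Pillai arithmetic function, but we do not need any deep facts about it.)

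Next, I would use the trivial bound $\varphi(m) \le m$ to estimate each summand, yielding
\begin{displaymath}
    \sum_{d \mid n} d \cdot \varphi(n/d) \;\le\; \sum_{d \mid n} d \cdot (n/d) \;=\; n \cdot \sigma_0(n),
\end{displaymath}
where $\sigma_0(n)$ counts the divisors of $n$.

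Finally, applying \cref{fact:divisor-function} to bound $\sigma_0(n) \le n^{o(1)}$ yields the desired estimate
\begin{displaymath}
    \sum_{i=1}^n \gcd(n, i) \;\le\; n \cdot n^{o(1)} \;=\; n^{1+o(1)}.
\end{displaymath}
The only ``obstacle'' is to recognize the partition-by-gcd trick to get the identity with $\varphi$; once that is in place, the rest is a one-line application of $\varphi(m) \le m$ and the divisor bound already cited.
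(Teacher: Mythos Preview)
Your proof is correct and follows essentially the same approach as the paper: both arguments group the sum by divisors of $n$, arrive at the bound $n \cdot \sigma_0(n)$, and finish with \cref{fact:divisor-function}. The only cosmetic difference is that you pass through the exact Pillai identity $\sum_{d\mid n} d\,\varphi(n/d)$ before applying $\varphi(m)\le m$, whereas the paper replaces $\iverson{d=\gcd(n,i)}$ by the looser $\iverson{d\mid n}\iverson{d\mid i}$ directly; both routes land on the same expression.
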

\begin{proof}
  Recall the Iverson bracket notation from \cref{sec:preliminaries} for the characteristic function. Clearly, we have:
    \begin{align*}
        \sum\nolimits_{i=1}^n \gcd(n,i) & = \sum\nolimits_{i=1}^n \; \sum\nolimits_{d=1}^n d
        \cdot \iverson{d = \gcd(n,i)} \\& \le
        \sum\nolimits_{i=1}^n \sum\nolimits_{d=1}^n d \cdot \iverson{d \divides n} \cdot \iverson{d \mid i}.
        \intertext{Next, we swap the sums and obtain}
        \sum\nolimits_{i=1}^n \gcd(n,i) & \le \sum\nolimits_{d=1}^n d \cdot \iverson{d \divides n} \cdot \left( \sum\nolimits_{i=1}^n \iverson{d \divides i} \right).
        \intertext{Notice that $\sum\nolimits_{i=1}^n \iverson{d \divides i} = \floor{n/d}$, hence we have }
        \sum\nolimits_{i=1}^n \gcd(n,i) & \le \sum_{d=1}^n d \cdot \iverson{d \divides n} \cdot (n/d) =
        n \sum\nolimits_{d=1}^n \iverson{d \divides n}  = n \cdot \sigma_0(n).
    \end{align*}
    By~\cref{fact:divisor-function}, we conclude that $\sigma_0(n) \le n^{o(1)}$.
\end{proof}

\begin{proof}[Proof of~\cref{lem:sum-gcd}]
    Adding additional terms yields:
    \begin{align*}
        \sum_{a,b \in A} \gcd(a,b) & \le 2 \sum_{\substack{a,b \in A\\b\le a}} \gcd(a,b)
        \le 2 \sum_{a \in A} \sum_{i=1}^a \gcd(a,i).
        \intertext{By~\cref{claim:gcd}, we can bound this further by}
        \sum_{a,b \in A} \gcd(a,b) & \le 2 \sum_{a \in A} a^{1+o(1)} \le 2 \, N^{1+o(1)},
    \end{align*}
    where the last step holds by Jensen's inequality and the fact that the sum
    of the integers in $A$ is $N$. Note that the constant $2$ can be absorbed into
    $N^{o(1)}$.
\end{proof}



\section{Missing proofs of Section~\ref{sec:lower}}\label{sec:appendix_lowerbounds}


\begin{proof}[Proof of \cref{def:binary_gadget_tree}]  
    \begin{figure*}
        \centering
        \begin{subfigure}[b]{0.7\textwidth}
        \includesvg{hardness_fig/binary_decompo_tree.svg}
        \end{subfigure}
        \caption{
        To reduce the alphabet size, in \cref{def:binary_gadget_tree}, we replace every transition outgoing from $x$ (left) by a tree-like structure (right) with $\Oh(\log |\Sigma|)$ states and transitions encoding every label read. 
        In the above example, $\Sigma = \{0, 1, \dots, 7\}$.
        }\label{fig:gadget_identity}
    \end{figure*}
        Consider a DFA $\mc A = (Q, \Sigma, \delta, \{q_I\}, F)$ with alphabet $\Sigma =
        \rng{0}{2^h-1}$ for some $h \in \mathbb N$.  For any state $x\in Q$ of
        $\mc A$ by $\delta^+(x)$ denote the subset of transitions in $\delta$ outgoing from $x$.  For
    any $\sigma \in \Sigma$ and $k \in \rng{1}{h}$, let $s(\sigma, k)
        \coloneqq \left\lfloor {\sigma}/{2^{h - k}} \right\rfloor$ and $b(\sigma, k)
        \coloneqq
        s(\sigma, k) \bmod 2$; that is, $b(\sigma, k)$ is the $k$-th bit of $\sigma$'s binary representation. 
        The idea is to encode every letter in $\Sigma$ with $h$ bits. 
    
        For every $(x, \sigma, y) \in \delta^+(x)$, define the following set of states and transitions that decompose $(x, \sigma, y)$ into $h$ transitions reading the bits of $\sigma$.
        \begin{align*}
            Q({x, \sigma, y}) &\coloneqq  \{x^i_{s(\sigma, i)} \mid i \in \rng{1}{h-1}\}, \\
            \delta({x, \sigma, y}) &\coloneqq 
             \{(x, b(\sigma, 1), x^1_{s(\sigma, 1)}), (x^{h-1}_{s(\sigma, h-1)}, b(\sigma, h), y)\}  \\
             & \bigcup_{i \in \rng{1}{h-2}} \{(x^i_{s(\sigma, i)}, b(\sigma, i+1), x^{i+1}_{s(\sigma, i+1)})\}. \\
        \end{align*}
        \Cref{fig:gadget_identity} illustrates this construction.
        By taking the union of the states and transitions above over all outgoing transitions $\delta^+(x)$, we obtain a binary-tree-like structure that simulates the behaviour of $\delta^+(x)$ with a binary encoding of the read symbols: 
        \begin{align*}
            Q_\text{BT}(x) \coloneqq & \bigcup_{(x, \sigma, y) \in \delta^+(x)} Q({x, \sigma, y}) \quad \text{ and } \\
            \delta_\text{BT}(x) \coloneqq & \bigcup_{(x, \sigma, y) \in \delta^+(x)} \delta({x, \sigma, y}).\\
        \end{align*}
        Now, consider the DFA $\mc A' \coloneqq (Q', \{0, 1\}, \delta', \{q_I\}, F)$, where
        $Q' \coloneqq Q \cup \bigcup_{x \in Q} Q_\text{BT}(x)$ and $\delta' \coloneqq \bigcup_{x \in Q} \delta_\text{BT}(x)$. Then the language recognised by $\mc A'$ is the same as $\mc A$, up to the binary encoding of every letter of $\Sigma$. 
        By applying the above transformation to each of the DFAs $\mc D_1, \dots, \mc D_k$ of a \kie instance, we obtain an equivalent instance $\mc D_1', \dots, \mc D_k'$ over the binary alphabet. 
        Finally, note that $|Q_\text{BT}(x)| = \Oh(h \cdot |\delta^+(x)|)$ and $|\delta_\text{BT}(x)| = \Oh(h \cdot |\delta^+(x)|)$ for every $x \in Q$, so the size of each DFA has increased by at most $\Oh(h) = \Oh(\log |\Sigma|)$ multiplicative factor.
\end{proof}
    
\begin{proof}[Proof of \cref{def:binary_gadget_identity}]
	Consider a DFA $\mc A = (Q, \Sigma, \delta, \{q_I\}, F)$ with alphabet $\Sigma = \rng{0}{2^h-1}$ for some $h \in \mathbb N$. Partition the transitions $\delta$ into the wildcard transitions $\delta_\text{WC}$, the almost wildcard transitions $\delta_\text{AWC}$, and the remaining regular transitions $\delta_\text{RG}$. 
    
    Let $x$ and $y$ be states of $\mc A$ with a \emph{wildcard} transition $(x, \Sigma, y)$.
    This means that regardless of what symbol in $\Sigma$ is read, after visiting state $x$, any accepting run visits state $y$ next. 
	Define the following set of states and transitions that simulate the wildcard transition: they ensure that any accepting run visiting state $x$ visits state $y$ after reading any sequence of $h$ bits. 
	\begin{align*}
        Q_\text{WC}(x, y) \coloneqq & \{q^{x, y}_i\mid i \in \rng{0}{h-1}\} \\
		\delta_\text{WC}(x, y) \coloneqq
		& \{(x, b, q^{x, y}_0) \mid b \in \{0, 1\}\}\\
        & \cup \!\!\bigcup_{\substack{b \in \{0, 1\} }}\!\! \{(q^{x, y}_i, b, q^{x, y}_{i+1}) \mid i \in \rng{0}{h-2}\} \\
        & \cup \{(q^{x, y}_{h-1}, b, y) \mid b \in \{0, 1\}\}
	\end{align*}
    \cref{fig:gadget_wildcard} illustrates this construction.

    \begin{figure*}
        \centering
        \begin{subfigure}[t]{.48\textwidth}
            \centering
            \resizebox{\textwidth}{!}{%
            \begin{tikzpicture}[
                state/.append style={minimum size=2.3em,inner sep=0pt},thick,>={Stealth[length=2.4mm, width=1.2mm]}]
                \node[state] (x) {$x$};
                \node[state,right = 1.5cm of x] (y) {$y$};
                
                \path
                (x) edge[->] node[above] {$\Sigma$} (y)
                ;

                \node[state,below = of x] (x) {$x$};
                \node[state,right = of x] (q1) {$q^{x, y}_1$};
                \node[state,right = of q1] (q2) {$q^{x, y}_2$};
                \node[state,right = of q2] (q3) {$q^{x, y}_3$};
                \node[state,right = of q3] (y) {$y$};

                \node[state,below = of q1,opacity=0] (p1) {$p^{x, y}_1$};

                \path
                (x) edge[->, bend left] node[above] {$0$} (q1)
                (q1) edge[->, bend left] node[above] {$0$} (q2)
                (q2) edge[->, bend left] node[above] {$0$} (q3)
                (q3) edge[->, bend left] node[above] {$0$} (y)
                (x) edge[->, bend right] node[below] {$1$} (q1)
                (q1) edge[->, bend right] node[below] {$1$} (q2)
                (q2) edge[->, bend right] node[below] {$1$} (q3)
                (q3) edge[->, bend right] node[below] {$1$} (y)
                ;
            \end{tikzpicture}}
            \caption{The wildcard transition $(x, \Sigma, y)$ (top) for the alphabet $\Sigma = \{0, 1, \dots, 15\}$ is simulated by 8 transitions over the binary alphabet (bottom). All runs starting at $x$ and reading 4 bits end in state $y$.}
            \label{fig:wildcard}
        \end{subfigure}
        \hfill
        \begin{subfigure}[t]{.48\textwidth}
            \centering
            \resizebox{\textwidth}{!}{%
            \begin{tikzpicture}[
                state/.append style={minimum size=2.3em,inner sep=0pt},thick,>={Stealth[length=2.4mm, width=1.2mm]}]
                \node[state] (x) {$x$};
                \node[state,right = 1.5cm of x] (y) {$y$};

                \path
                (x) edge[->] node[above] {$\Sigma \setminus \{9\}$} (y)
                ;
                
                \node[state,below = of x] (x) {$x$};
                \node[state,right = of x] (q1) {$q^{x, y}_1$};
                \node[state,right = of q1] (q2) {$q^{x, y}_2$};
                \node[state,right = of q2] (q3) {$q^{x, y}_3$};
                \node[state,right = of q3] (y) {$y$};
                \node[state,below = of q1] (p1) {$p^{x, y}_1$};
                \node[state,right = of p1] (p2) {$p^{x, y}_2$};
                \node[state,right = of p2] (p3) {$p^{x, y}_3$};
                
                \path
                (x) edge[->, bend left] node[above] {$0$} (q1)
                (q1) edge[->, bend left] node[above] {$0$} (q2)
                (q2) edge[->, bend left] node[above] {$0$} (q3)
                (q3) edge[->, bend left] node[above] {$0$} (y)

                (x) edge[->, bend left] node[above] {$0$} (q1)
                (q1) edge[->, bend left] node[above] {$0$} (q2)
                (q2) edge[->, bend left] node[above] {$0$} (q3)
                (q3) edge[->, bend left] node[above] {$0$} (y)
                (q1) edge[->, bend right] node[above] {$1$} (q2)
                (q2) edge[->, bend right] node[above] {$1$} (q3)
                (q3) edge[->, bend right] node[above] {$1$} (y)

                (x) edge[->] node[below left] {$1$} (p1)
                (p1) edge[->] node[below] {$0$} (p2)
                (p1) edge[->] node[below right] {$1$} (q2)
                (p2) edge[->] node[below] {$0$} (p3)
                (p2) edge[->] node[below right] {$1$} (q3)
                (p3) edge[->] node[below  right] {$0$} (y)
                ;
            \end{tikzpicture}}
        \caption{The almost wildcard transition $(x, \Sigma \setminus \{9\}, y)$ (top) for the alphabet $\Sigma = \{0, 1, \dots, 15\}$ is simulated by 13 transitions over the binary alphabet (bottom). The only run starting at $x$ and reading 4 bits that does not end in $y$ reads the binary encoding 1001 of 9.}
        \label{fig:almost_wildcard}
        \end{subfigure}
        \caption{Reducing the alphabet size for (almost) wildcard transitions in \cref{def:binary_gadget_identity}.
            }
            \label{fig:gadget_wildcard}
    \end{figure*}

    Let $x$ and $y$ be states of $\mc A$ with an \emph{almost wildcard} transition $(x, \Sigma \setminus \{\sigma\}, y)$, for some letter $\sigma \in \Sigma$. Let $b(\sigma, i)$ be the $i$-th bit of $\sigma$ when encoding over $h$ bits, for $i \in \rng{1}{h}$.  
	Define the following set of states and transitions that simulate the almost wildcard transition: they ensure that any accepting run visiting state $x$ visits state $y$ after reading any combination of $h$ bits, except for the combination of bits encoding $\sigma$.
	\begin{align*}
        Q_\text{AWC}(x, \sigma, y) \coloneqq & \{q^{x, y}_i\mid i \in \rng{0}{h-1}\} \\
        &\cup \{p^{x, y}_i\mid i \in \rng{1}{h-1}\} \\
		\delta_\text{AWC}(x, \sigma, y) \coloneqq
		& \{(x, b(\sigma, 1), p^{x, y}_0), (x, 1-b(\sigma, 1), q^{x, y}_0)\} \\
        & \cup \bigcup_{\substack{b \in \{0, 1\} }} \{(q^{x, y}_i, b, q^{x, y}_{i+1}) \mid i \in \rng{0}{h-2}\} \\
        & \cup \{(p^{x, y}_{i-1}, b(\sigma, i), p^{x, y}_{i}) \mid i \in \rng{2}{h-1}\} \\
        & \cup \{(p^{x, y}_{h-1}, 1 - b(\sigma, h-1), y)\} \\
        &\cup \{(q^{x, y}_{h-1}, b(\sigma, h-1), y) \}
	\end{align*}
    \cref{fig:gadget_wildcard} illustrates this construction.
    
	Now, consider the DFA $\mc A' \coloneqq (Q', \{0, 1\}, \delta', q_I, F)$, where
    \begin{align*}
        Q' \coloneq & Q \cup \bigcup_{x \in Q} Q_\text{BT}(x)  \cup \!\!\bigcup_{(x, \Sigma, y) \in \delta_\text{WC}}\!\! Q_\text{WC}(x, y) \\
        & \cup \!\!\bigcup_{(x, \Sigma \setminus \{\sigma\}, y) \in \delta_\text{AWC}}\!\! Q_\text{AWC}(x, \sigma, y) 
    \end{align*}
    and
    \begin{align*}
    \delta' \coloneq & \bigcup_{x \in Q} \delta_\text{BT}(x)  \cup \!\!\bigcup_{(x, \Sigma, y) \in \delta_\text{WC}}\!\! \delta_\text{WC}(x, y) \\
    & \cup \!\!\bigcup_{(x, \Sigma \setminus \{\sigma\}, y) \in \delta_\text{AWC}}\!\! \delta_\text{AWC}(x, \sigma, y) 
    \end{align*}
    where $Q_\text{BT}$ and $\delta_\text{BT}$ are constructed in \cref{def:binary_gadget_tree} and are defined over the regular transitions $\delta_\text{RG}$.
    Then the language recognised by $\mc A'$ is the same as $\mc A$, up to binary encoding of every letter of $\Sigma$. 
    By applying the above transformation to each of the DFAs $\mc D_1, \dots, \mc D_k$ of a \kie instance, we obtain an equivalent instance $\mc D_1', \dots, \mc D_k'$ over the binary alphabet. 
	Finally, note that $|Q_\text{WC}(x, y)| = \Oh(h)$ and $|\delta_\text{WC}(x, y)| = \Oh(h)$ for every wildcard transition $(x, \Sigma, y)$,  thus the transformation has increased the size of each DFA at most by a factor of $\Oh(h) = \Oh(\log |\Sigma|)$.
\end{proof}

\end{document}